\def\puncture {\psdots[dotstyle=+,dotangle=45,dotscale=1](0,0)}
\definecolor{DarkGray}{rgb}{0.1,0.1,0.5}
\newtheorem{theorem}{Theorem}[section]
\newtheorem{lemma}[theorem]{Lemma}
\newcommand{\eqnref}[1]{\hyperref[#1]{{(\ref*{#1})}}}
\newcommand{\thmref}[1]{\hyperref[#1]{{Theorem~\ref*{#1}}}}
\newcommand{\lemref}[1]{\hyperref[#1]{{Lemma~\ref*{#1}}}}
\newcommand{\corref}[1]{\hyperref[#1]{{Corollary~\ref*{#1}}}}
\newcommand{\defref}[1]{\hyperref[#1]{{Definition~\ref*{#1}}}}
\newcommand{\secref}[1]{\hyperref[#1]{{Section~\ref*{#1}}}}
\newcommand{\figref}[1]{\hyperref[#1]{{Figure~\ref*{#1}}}}
\newcommand{\tabref}[1]{\hyperref[#1]{{Table~\ref*{#1}}}}
\newcommand{\remref}[1]{\hyperref[#1]{{Remark~\ref*{#1}}}}
\newcommand{\appref}[1]{\hyperref[#1]{{Appendix~\ref*{#1}}}}
\newcommand{\claimref}[1]{\hyperref[#1]{{Claim~\ref*{#1}}}}
\newcommand{\propref}[1]{\hyperref[#1]{{Proposition~\ref*{#1}}}}
\newcommand{\exampleref}[1]{\hyperref[#1]{{Example~\ref*{#1}}}}
\newcommand{\conjref}[1]{\hyperref[#1]{{Conjecture~\ref*{#1}}}}
\newcommand*{\id}{\mathbf{1}}
\newcommand*{\cB}{\mathcal{B}}
\newcommand*{\cC}{\mathcal{C}}
\newcommand*{\cD}{\mathcal{D}}
\newcommand*{\cH}{\mathcal{H}}
\newcommand*{\cM}{\mathcal{M}}
\newcommand*{\cP}{\mathcal{P}}
\newcommand*{\cT}{\mathcal{T}}
\newcommand{\C}{\mathbb{C}}
\newcommand*{\tr}{\mathsf{tr}}
\newcommand*{\ket}[1]{|#1\rangle}
\newcommand*{\bra}[1]{\langle #1|}
\newcommand*{\proj}[1]{\ket{#1}\bra{#1}}
\newcommand{\ketbra}[2]{| #1 \rangle\!\langle #2 |}
\newcommand{\braket}[2]{\langle #1|#2\rangle}       
\newcommand*{\spr}[2]{\langle #1|#2\rangle}
\newcommand{\norm}[1]{\| #1 \|}
\newcommand{\be}{\begin{equation}}
\newcommand{\ee}{\end{equation}}
\newcommand{\bea}{\begin{eqnarray}}
\newcommand{\eea}{\end{eqnarray}}
\newcommand{\bestar}{\begin{equation*}}
\newcommand{\eestar}{\end{equation*}}
\newcommand{\beastar}{\begin{eqnarray*}}
\newcommand{\eeastar}{\end{eqnarray*}}
\newcommand{\abs}[1]{{\lvert #1\rvert}}	
\newcommand{\DFib}{\mathrm{DFib}}
\newcommand{\Fib}{\mathrm{Fib}}
\newcommand*{\taubar}{\bar{\tau}}
\newcommand*{\tautau}{\boldsymbol\tau\boldsymbol\tau}
\newcommand*{\onetau}{\boldsymbol 1\boldsymbol\tau}
\newcommand*{\tauone}{\boldsymbol \tau\boldsymbol 1}
\newcommand*{\oneone}{\boldsymbol 1\boldsymbol 1} 
\newcommand*{\oneanyon}{\boldsymbol 1}
\newcommand*{\tauanyon}{\boldsymbol\tau}
\renewcommand*{\v}{{0}}
\renewcommand*{\t}{{1}}
\newcommand*{\coeff}[1]{{_{#1}}}
\def\h {\widehat}
\newcommand*{\horizontalt}{
\raisebox{-1em}{\includegraphics{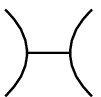}}}
\newcommand*{\horizontalv}{
\raisebox{-1em}{\includegraphics{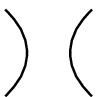}}}
\newcommand*{\verticalv}{ 
\raisebox{-1em}{\includegraphics{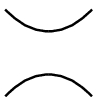}}}
\newcommand*{\verticalt}{ 
\raisebox{-1em}{\includegraphics{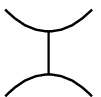}}}
\newcommand*{\labeledcylinder}[1]{
\raisebox{0ex}{\begin{picture}(25,37)(0,0)
\put(7,17){$#1$}
\put(0,0){\includegraphics[scale=.3]{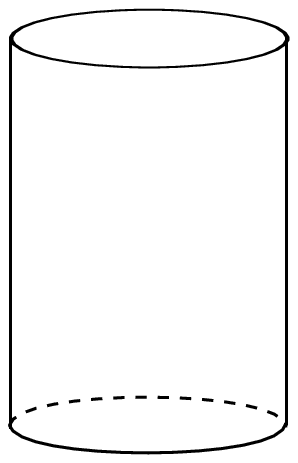}}
\end{picture}}}
\newcommand*{\labeledhcylinder}[1]{
\raisebox{-1.7ex}{\begin{picture}(40,25)(0,9)
\put(14,17){$#1$}
\put(0,0){\includegraphics[scale=.3]{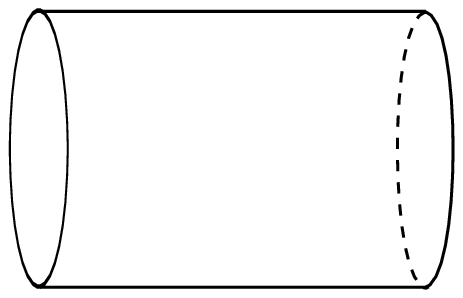}}
\end{picture}}}
\newcommand*{\labeledhpants}[1]{
\raisebox{-1.7ex}{\begin{picture}(40,38)(0,9)
\put(15,18){$#1$}
\put(0,0){\includegraphics[scale=.3]{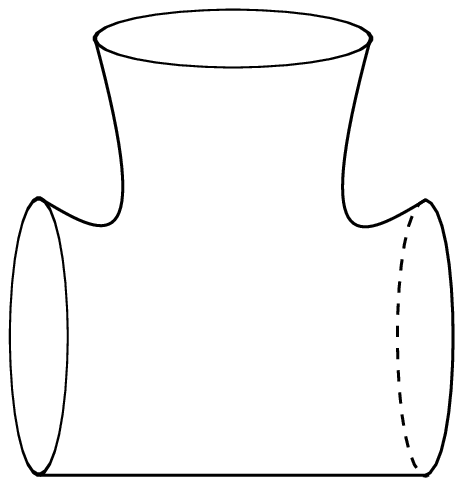}}
\end{picture}}}
\newcommand*{\cylinder}{\raisebox{-4.4ex}{\includegraphics[scale=.3]{SVG/cylinder}}}
\newcommand*{\cylinderdehntwistcurve}{\raisebox{-4.4ex}{\includegraphics[scale=.3]{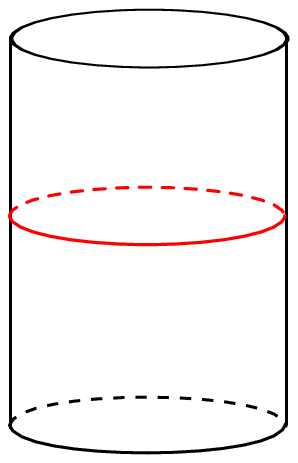}}}
\newcommand*{\cylinderttI}{\raisebox{-4.4ex}{\includegraphics[scale=.3]{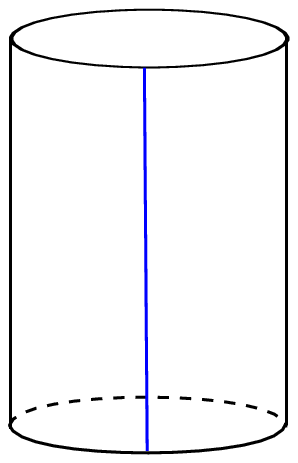}}}
\newcommand*{\cylinderS}{\raisebox{-4.4ex}{\includegraphics[scale=.3]{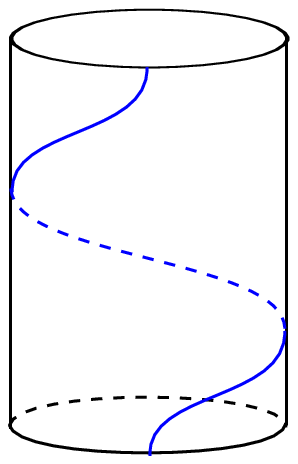}}}
\newcommand*{\cylinderSdag}{\raisebox{-4.4ex}{\includegraphics[scale=.3]{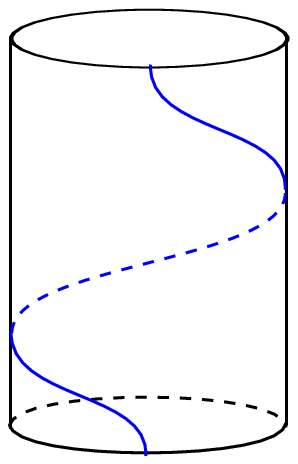}}}
\newcommand*{\cylinderO}{\raisebox{-4.4ex}{\includegraphics[scale=.3]{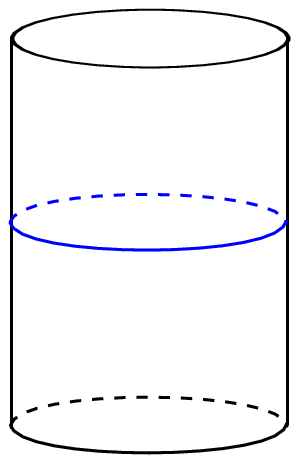}}}
\newcommand*{\cylindervtD}{\raisebox{-4.4ex}{\includegraphics[scale=.3]{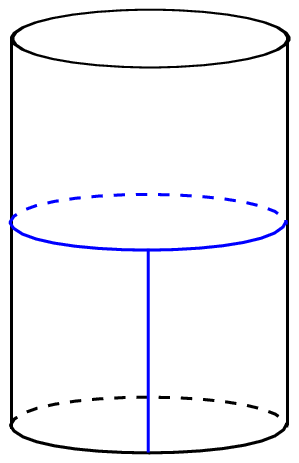}}}
\newcommand*{\cylindertvD}{\raisebox{-4.4ex}{\includegraphics[scale=.3]{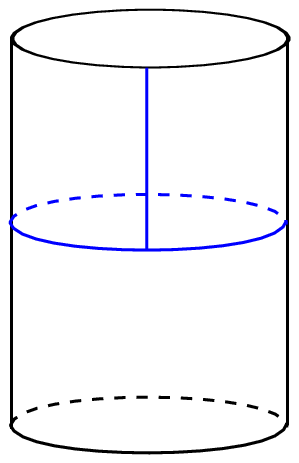}}}
\newcommand*{\pantsttt}{\raisebox{-4.4ex}{\includegraphics[scale=.3]{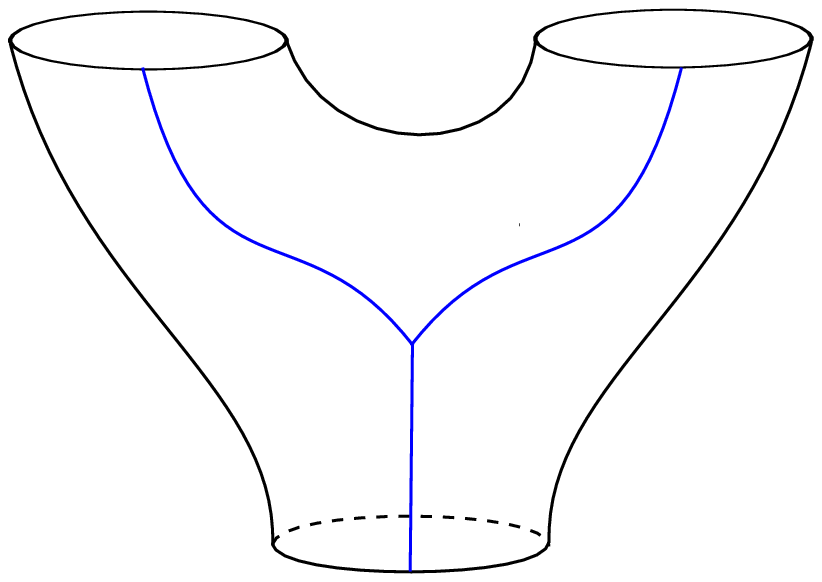}}}
\newcommand*{\pantsttv}{\raisebox{-4.4ex}{\includegraphics[scale=.3]{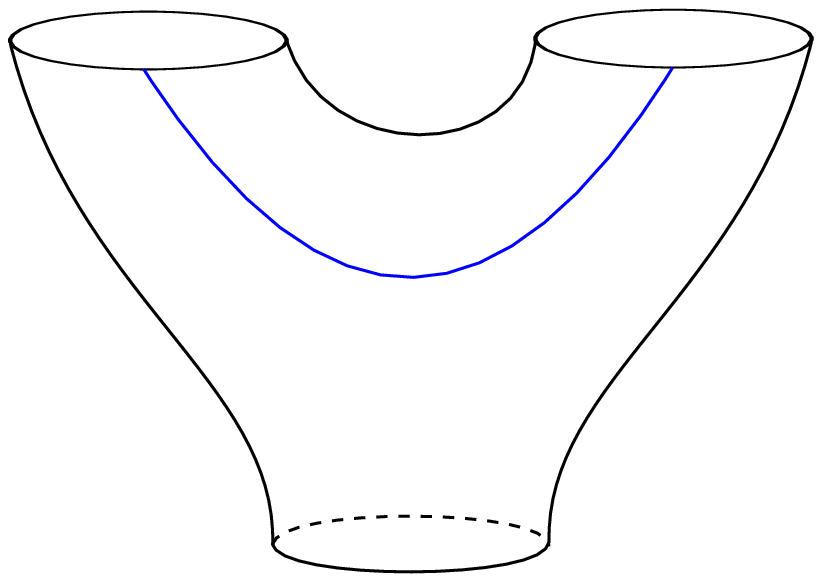}}}
\newcommand*{\pantstwisted}{\raisebox{-4.4ex}{\includegraphics[scale=.3]{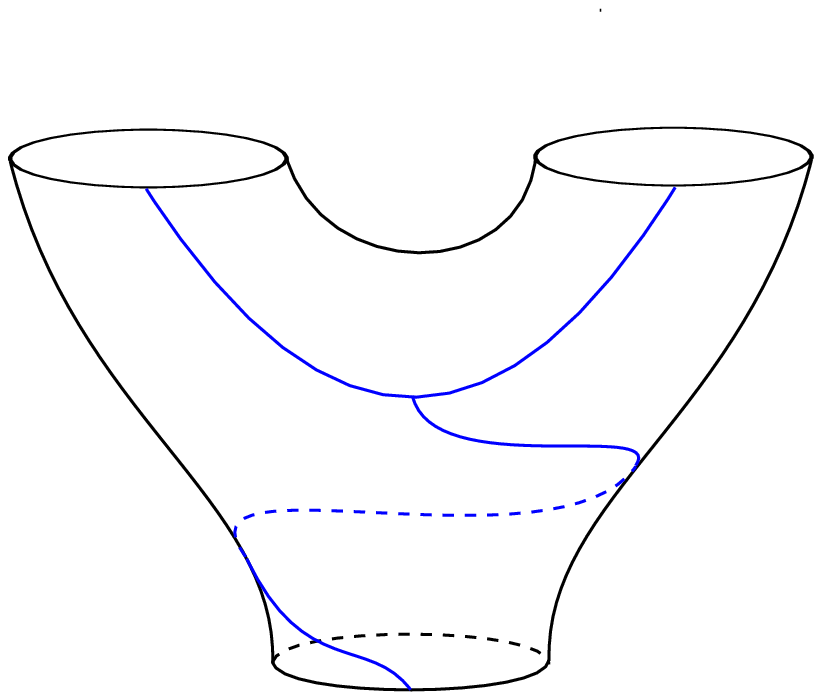}}}
\newcommand*{\pantsdeformed}{\raisebox{-4.4ex}{\includegraphics[scale=.3]{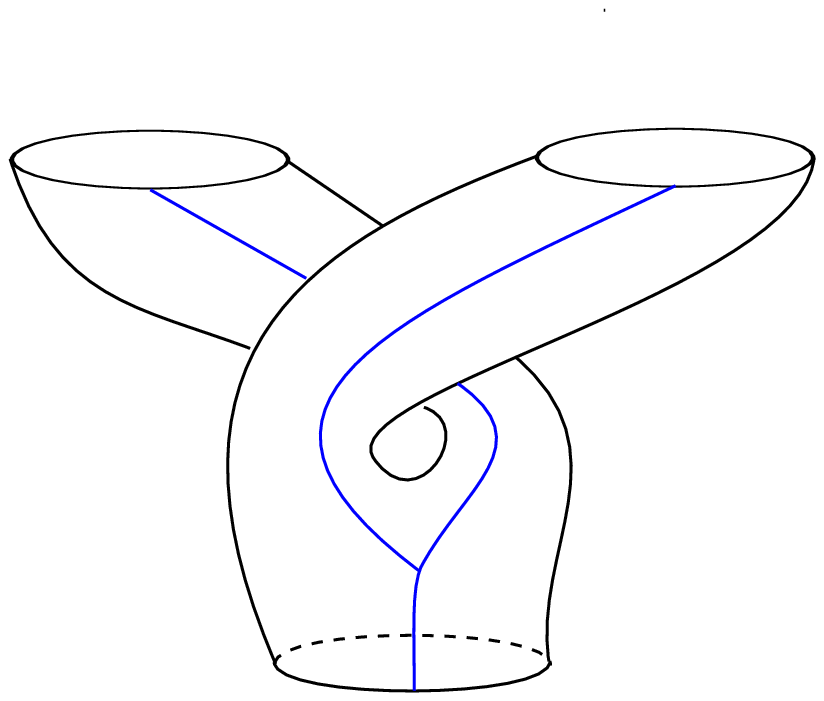}}}
\newcommand*{\pantsdehnind}{\raisebox{0.0ex}{
\begin{picture}(82,52)(-10,0)
\put(0,0){\includegraphics[scale=.3]{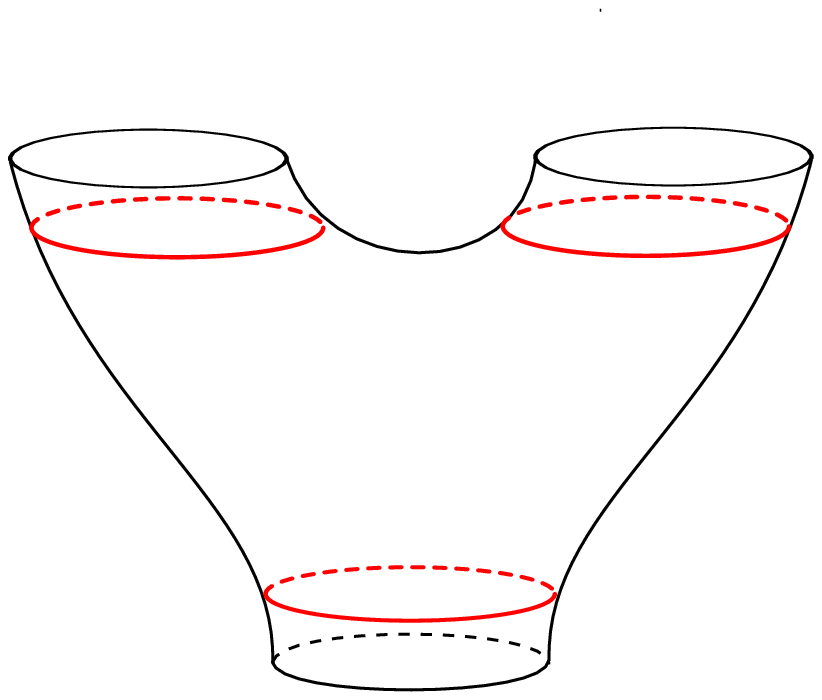}}
\put(8,8){$\gamma_C$}
\put(71,40){$\gamma_B$}
\put(-12,40){$\gamma_A$}
\end{picture}
}
}
\newcommand*{\pantsdehnfirst}{\raisebox{-4.4ex}{\includegraphics[scale=.3]{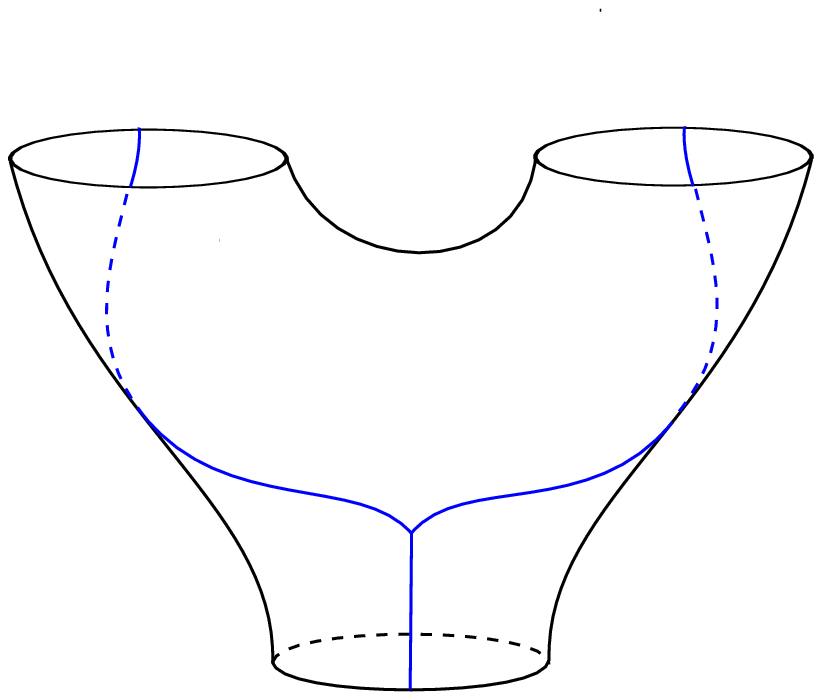}}}
\newcommand*{\pantsdehnsecond}{\raisebox{-4.4ex}{\includegraphics[scale=.3]{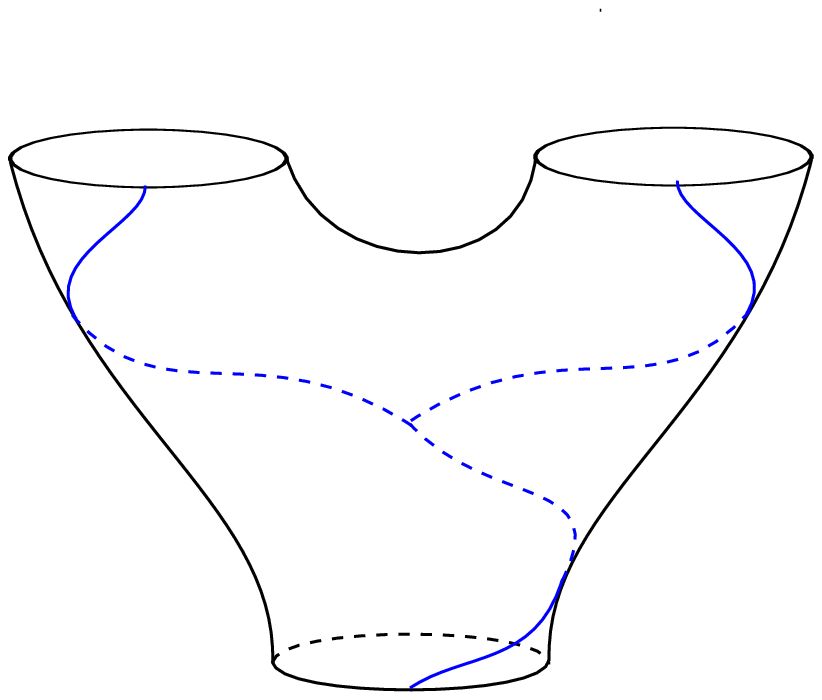}}}
\newcommand*{\pantsbraided}{\raisebox{-4.4ex}{\includegraphics[scale=.3]{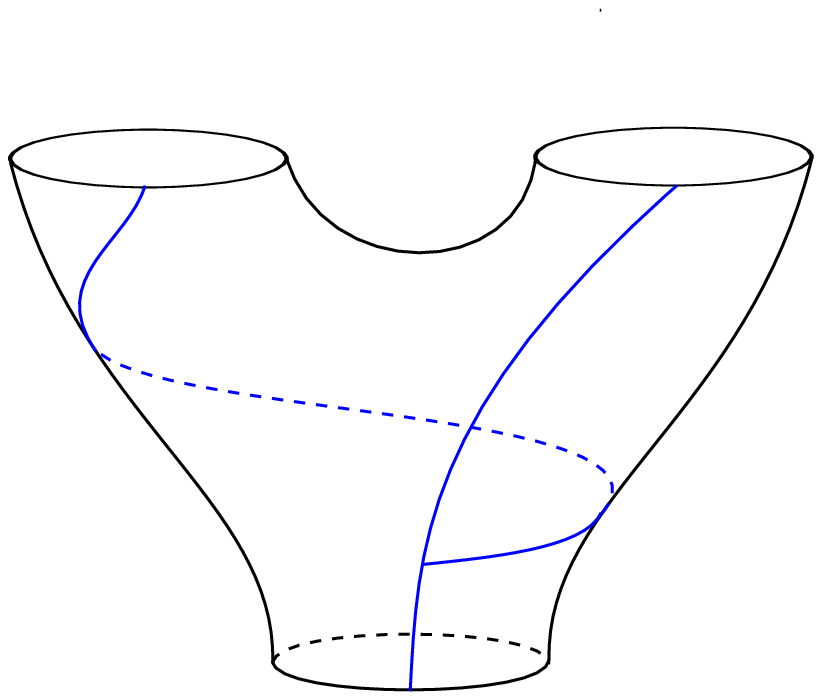}}}
\newcommand*{\fourpuncturedspherebraid}[2]{\raisebox{-5.0ex}{
\begin{picture}(70,52)(-2,0)
\put(12,10){$#1$}
\put(56,10){$#2$}
\put(0,0){\includegraphics[scale=.3]{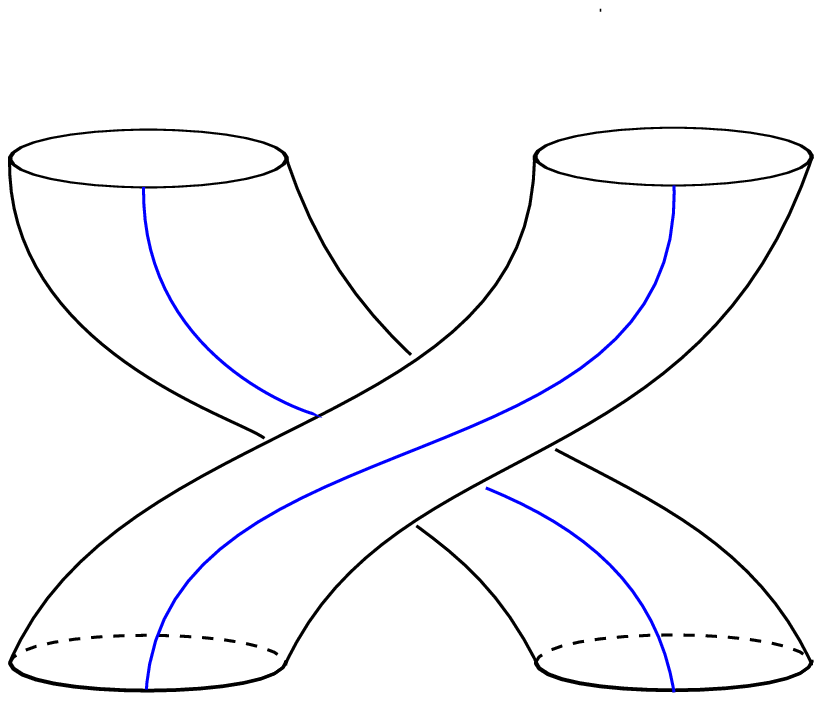}}
\end{picture}}}
\newcommand*{\wvvI}{\raisebox{-0.4ex}{\includegraphics[scale=.1]{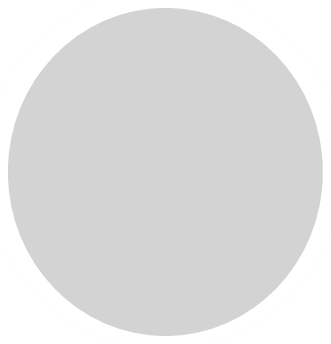}}} 
\newcommand*{\zvvI}{\raisebox{-0.4ex}{\includegraphics[scale=.1]{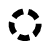}}} 
\newcommand*{\zttI}{\raisebox{-0.4ex}{\includegraphics[scale=.1]{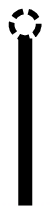}}}
\newcommand*{\zS}{\raisebox{-0.4ex}{\includegraphics[scale=.1]{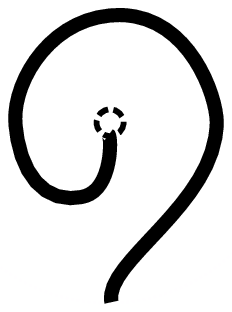}}}
\newcommand*{\zSdag}{\raisebox{-0.4ex}{\includegraphics[scale=.1]{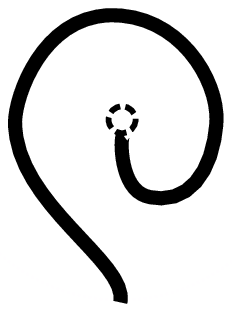}}}
\newcommand*{\zO}{\raisebox{-0.4ex}{\includegraphics[scale=.1]{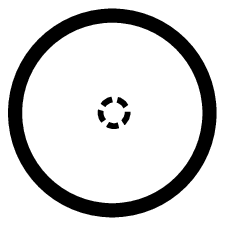}}}
\newcommand*{\zvtD}{\raisebox{-0.4ex}{\includegraphics[scale=.1]{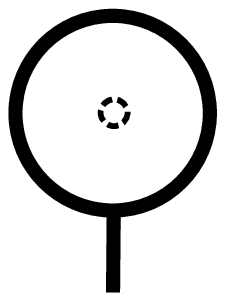}}}
\newcommand*{\ztvD}{\raisebox{-0.4ex}{\includegraphics[scale=.1]{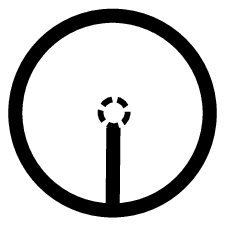}}}
\newcommand*{\annulusoutboundary}{\raisebox{-0.4ex}{\includegraphics[scale=.1]{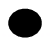}}}
\newcommand*{\outb}[1]{
\raisebox{0.0ex}{\protect\ \begin{picture}(2.8,2.8)(4,1.1)
\put(0,0){#1}  
\put(0,0){\annulusoutboundary}
\end{picture}\ }
} 
\newcommand*{\twocob}[3]{  
\raisebox{-1.10ex}{\begin{picture}(19,15)(2,0)
\put(0,0){#3}   
\put(0.0,7.8){#1}
\put(10.55,7.8){#2}
\put(0,0){\threepuncturedoutboundary}
\end{picture}}}
\newcommand*{\threecob}[4]{  
\raisebox{-1.15ex}{\begin{picture}(25,29)(-1,-3)
\put(0,0){#3}   
\put(0.0,7.8){#1}
\put(10.55,7.8){#2}
\put(-4.7,-8.30){#4}
\put(-4.7,-8.30){\threeoutboundary}
\end{picture}}}
\newcommand*{\vvI}{\protect\outb{\protect\zvvI}} 
\newcommand*{\ttI}{\protect\outb{\protect\zttI}}
\renewcommand*{\S}{\protect\outb{\protect\zS}}
\newcommand*{\Sdag}{\protect\outb{\protect\zSdag}}
\renewcommand*{\O}{\protect\outb{\protect\zO}}
\newcommand*{\vtD}{\protect\outb{\protect\zvtD}}
\newcommand*{\tvD}{\protect\outb{\protect\ztvD}}
\newcommand*{\ttt}{\raisebox{-0.3ex}{\includegraphics[scale=.1]{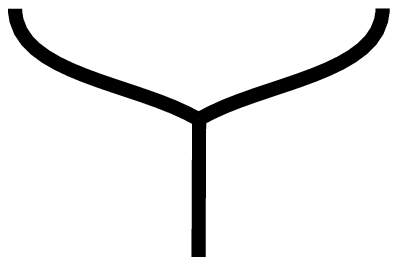}}} 
\newcommand*{\ttv}{\raisebox{-0.3ex}{\includegraphics[scale=.1]{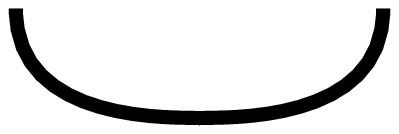}}} 
\newcommand*{\vtt}{\raisebox{-0.3ex}{\includegraphics[scale=.1]{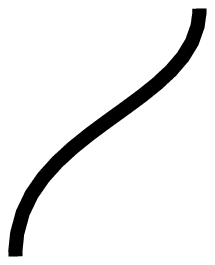}}} 
\newcommand*{\threepuncturedoutboundary}{\raisebox{-0.3ex}{\includegraphics[scale=.1]{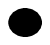}}}
\newcommand*{\bttv}{\twocob{\wvvI}{\wvvI}{\ttv}}
\newcommand*{\bttt}{\twocob{\wvvI}{\wvvI}{\ttt}}
\newcommand*{\threeoutboundary}{\raisebox{-0.1ex}{\includegraphics[scale=.1]{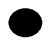}}}
\newcommand*{\threeSdag}{\raisebox{-0.1ex}{\includegraphics[scale=.1]{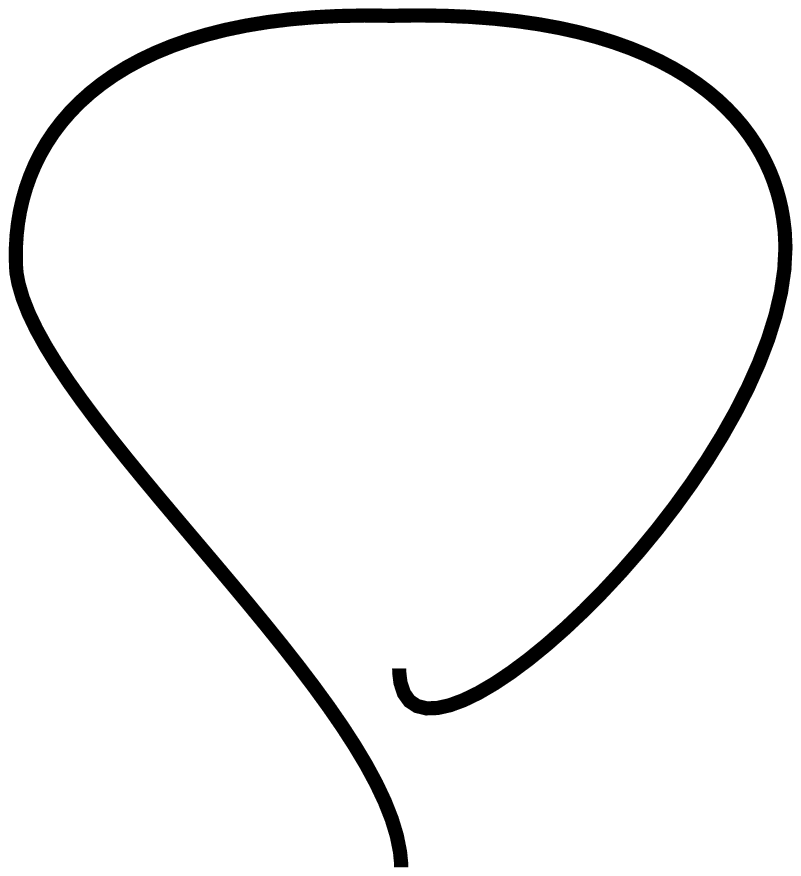}}}
\newcommand*{\fuse}[3]{\raisebox{-3.0ex}{
\begin{picture}(28,30)(-5,-8)
\put(0,0){\includegraphics[scale=.5]{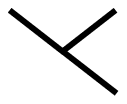}}
\put(-7,15){\small $#1$}
\put(10,15){\small $#2$}
\put(13,-7){\small $#3$}
\end{picture}}
} 
\newcommand*{\overcrossing}{\raisebox{-1.5ex}{
\begin{picture}(27,23)(0,0)
\put(0,0){\includegraphics[scale=.5]{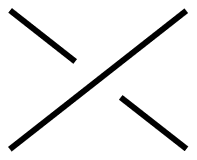}}
\end{picture}
}}
\newcommand*{\overcrossingresolvedv}{\raisebox{-1.5ex}{
\begin{picture}(27,23)(0,0)
\put(0,0){\includegraphics[scale=.5]{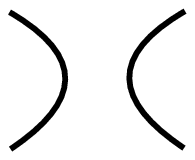}}
\end{picture}
}}
\newcommand*{\overcrossingresolvedupv}{\raisebox{-1.5ex}{
\begin{picture}(27,23)(0,0)
\put(0,0){\includegraphics[scale=.5]{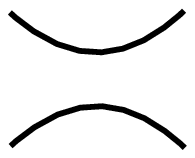}}
\end{picture}
}}
\newcommand*{\fusethree}[5]{\raisebox{-3.0ex}{
\begin{picture}(39,28)(-2,-4)
\put(0,0){\includegraphics[scale=.5]{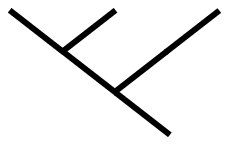}}
\put(-4,20){\small $#1$}
\put(12,20){\small $#2$}
\put(27,20){\small $#3$}
\put(0,4){\small $#4$}
\put(20,-4){\small $#5$}
\end{picture}}
} 
\newcommand*{\fusethrees}[5]{\raisebox{-3.0ex}{
\begin{picture}(39,28)(-2,-4)
\put(0,0){\includegraphics[scale=.5]{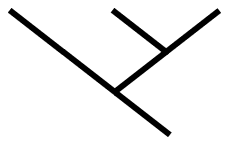}}
\put(-4,20){\small $#1$}
\put(10,20){\small $#2$}
\put(27,20){\small $#3$}
\put(20,5){\small $#4$}
\put(20,-4){\small $#5$}
\end{picture}}
}
 \newcommand*{\fusethreedirected}[5]{\raisebox{-3.0ex}{
\begin{picture}(39,28)(-2,-4)
\put(0,0){\includegraphics[scale=.5]{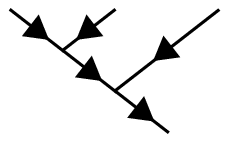}}
\put(-4,20){\small $#1$}
\put(12,20){\small $#2$}
\put(28,20){\small $#3$}
\put(1,4){\small $#4$}
\put(24,-4){\small $#5$}
\end{picture}}
} 
\newcommand*{\fusethreesdirected}[5]{\raisebox{-3.0ex}{
\begin{picture}(39,28)(-2,-4)
\put(0,0){\includegraphics[scale=.5]{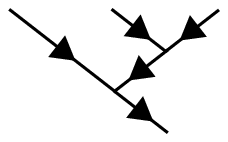}}
\put(-4,20){\small $#1$}
\put(12,20){\small $#2$}
\put(28,20){\small $#3$}
\put(22,5){\small $#4$}
\put(24,-4){\small $#5$}
\end{picture}}
}
 \newcommand*{\fusethreedirectedreverse}[5]{\raisebox{-3.0ex}{
\begin{picture}(39,28)(-2,-4)
\put(0,0){\includegraphics[scale=.5]{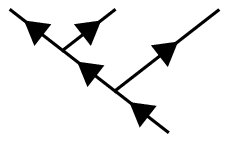}}
\put(-4,20){\small $#1$}
\put(12,20){\small $#2$}
\put(28,20){\small $#3$}
\put(1,4){\small $#4$}
\put(24,-4){\small $#5$}
\end{picture}}
} 
\newcommand*{\fusefour}[7]{\raisebox{-3.0ex}{
\begin{picture}(40,38)(-10,-4)
\put(-14,17){\includegraphics[scale=.5]{SVG/fusiondiagramdiagonal}}
\put(0,0){\includegraphics[scale=.5]{SVG/fusiondiagramthree_second}}
\put(-15,30){\small $#1$}
\put(-2,30){\small $#2$}
\put(12,20){\small $#3$}
\put(27,20){\small $#4$}
\put(-1,8){\small $#5$}
\put(21,5){\small $#6$}
\put(22,-7){\small $#7$}
\end{picture}}
} 
\newcommand*{\Smatrix}[2]{\raisebox{-1.7ex}{
\begin{picture}(39,28)(-2,-4)
\put(0,0){\includegraphics[scale=.5]{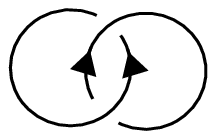}}
\put(-4,5){\small $#1$}
\put(30,5){\small $#2$}
\end{picture}}}
\newcommand*{\threeDcrossingunresolved}[2]{\raisebox{-2.3ex}{
\begin{picture}(35,35)(-5,-4)
\put(0,0){\includegraphics[scale=.5]{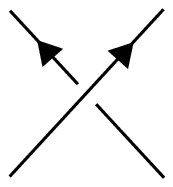}}
\put(-5,-4){\small $#1$}
\put(25,-4){\small $#2$}
\end{picture}}}
\newcommand*{\arrowfusiondiagram}[3]{\raisebox{-3.0ex}{
\begin{picture}(30,38)(0,0)
\put(0,0){\includegraphics[scale=.6]{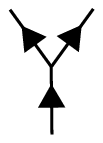}}
\put(3,32){\small $#1$}
\put(18,32){\small $#2$}
\put(10,0){\small $#3$}
\end{picture}}}
\newcommand*{\arrowbraiddiagram}[3]{\raisebox{-3.0ex}{
\begin{picture}(30,38)(0,0)
\put(0,0){\includegraphics[scale=.6]{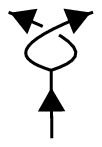}}
\put(3,32){\small $#1$}
\put(18,32){\small $#2$}
\put(10,0){\small $#3$}
\end{picture}}}
\newcommand*{\arrowbraidoppositediagram}[3]{\raisebox{-3.0ex}{
\begin{picture}(30,38)(0,0)
\put(0,0){\reflectbox{\includegraphics[scale=.6]{SVG/3Dcrossingdoubleddiagrambraid}}}
\put(3,32){\small $#1$}
\put(18,32){\small $#2$}
\put(10,0){\small $#3$}
\end{picture}}}
\newcommand*{\threeDcrossingresolved}[3]{\raisebox{-2.3ex}{
\begin{picture}(35,35)(-5,-4)
\put(0,0){\includegraphics[scale=.5]{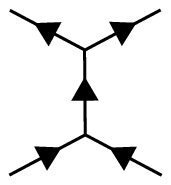}}
\put(-5,-4){\small $#1$}
\put(25,-4){\small $#2$}
\put(15,10){\small $#3$}
\put(-5,23){\small $#2$}
\put(25,23){\small $#1$}
\end{picture}}}
\newcommand*{\threeDlinewithringvacuum}[1]{\raisebox{-1.7ex}{
\begin{picture}(30,35)(-2,-4)
\put(0,0){\includegraphics[scale=.5]{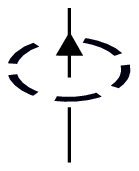}}
\put(10,0){\small $#1$}
\end{picture}}}
\newcommand*{\threeDlinewithring}[2]{\raisebox{-1.7ex}{
\begin{picture}(30,35)(-2,-4)
\put(0,0){\includegraphics[scale=.5]{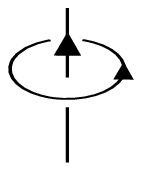}}
\put(10,0){\small $#1$}
\put(19,10){\small $#2$}
\end{picture}}}
\newcommand*{\threeDline}[1]{\raisebox{-2.1ex}{ 
\begin{picture}(7,35)(7,-4)
\put(0,0){\includegraphics[scale=.5]{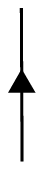}}
\put(12,8){\small $#1$}
\end{picture}}}
\newcommand*{\threeDlinedown}[1]{\raisebox{-2.1ex}{ 
\begin{picture}(7,35)(14,-4)
\put(0,0){\includegraphics[scale=.5]{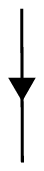}}
\put(12,8){\small $#1$}
\end{picture}}}
\newcommand*{\threeDlinevacuum}{\raisebox{-2.1ex}{ 
\begin{picture}(7,35)(14,-4)
\put(0,0){\includegraphics[scale=.5]{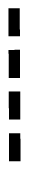}}
\end{picture}}}
\newcommand*{\threeDlinetwistright}[1]{ 
\raisebox{-2.1ex}{\begin{picture}(13,35)(12,-4)
\put(0,0){\reflectbox{\includegraphics[scale=.5]{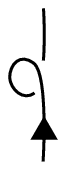}}}
\put(23,4){\small $#1$}
\end{picture}}}
\newcommand*{\threeDlinetopbottom}[5]{\raisebox{-3.4ex}{ 
\begin{picture}(40,40)(-2,-8)
\put(0,0){\includegraphics[scale=.5]{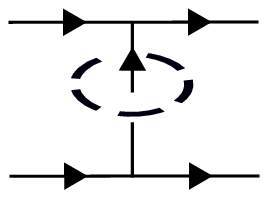}}
\put(0,27){\small $#1$}
\put(30,27){\small $#2$}
\put(20,5){\small $#3$}
\put(0,-8){\small $#4$}
\put(30,-8){\small $#5$}
\end{picture}}}
\newcommand*{\threeDlinetopbottomFmove}[5]{\raisebox{-3.4ex}{ 
\begin{picture}(40,40)(-2,-8)
\put(0,0){\includegraphics[scale=.5]{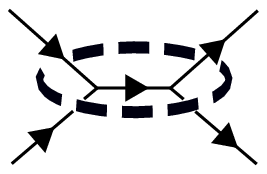}}
\put(0,27){\small $#1$}
\put(30,27){\small $#2$}
\put(15,15){\small $#3$}
\put(0,-8){\small $#4$}
\put(30,-8){\small $#5$}
\end{picture}}}
\newcommand*{\threeDlinetopbottomunconnected}[2]{\raisebox{-3.4ex}{ 
\begin{picture}(40,40)(-2,-8)
\put(0,0){\includegraphics[scale=.5]{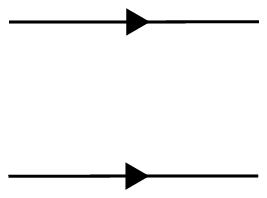}}
\put(17,27){\small $#1$}
\put(17,-8){\small $#2$}
\end{picture}}}
\newcommand*{\threeDdoubleringvv}[7]{\raisebox{-2.6ex}{ 
\begin{picture}(130,40)(-2,-8)
\put(0,0){\includegraphics[scale=.5]{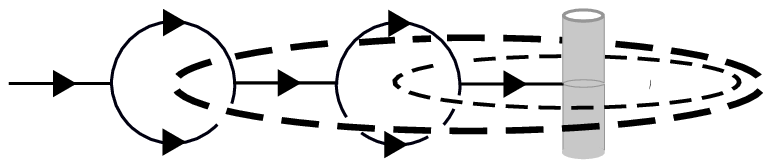}}
\put(5,15){\small $#1$}
\put(21,22){\small $#2$}
\put(21,-7){\small $#3$}
\put(38,15){\small $#4$}
\put(55,22){\small $#5$}
\put(55,-7){\small $#6$}
\put(70,15){\small $#7$}
\end{picture}}}
\newcommand*{\threeDdoubleringbvv}[7]{\raisebox{-2.6ex}{ 
\begin{picture}(130,40)(-2,-8)
\put(0,0){\includegraphics[scale=.5]{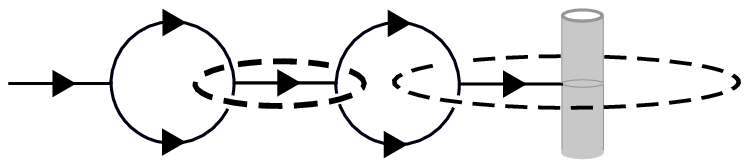}}
\put(5,15){\small $#1$}
\put(21,22){\small $#2$}
\put(21,-7){\small $#3$}
\put(38,15){\small $#4$}
\put(55,22){\small $#5$}
\put(55,-7){\small $#6$}
\put(70,15){\small $#7$}
\end{picture}}}
\newcommand*{\threeDdoubleringsingle}[4]{\raisebox{-2.6ex}{ 
\begin{picture}(80,40)(30,-8)
\put(0,0){\includegraphics[scale=.5]{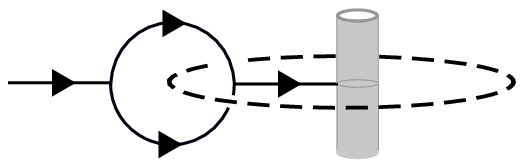}}
\put(37,15){\small $#1$}
\put(54,22){\small $#2$}
\put(54,-7){\small $#3$}
\put(72,15){\small $#4$}
\end{picture}}}
\newcommand*{\threeDdoubleringsinglenoboundary}[4]{\raisebox{-2.6ex}{ 
\begin{picture}(50,40)(30,-8)
\put(0,0){\includegraphics[scale=.5]{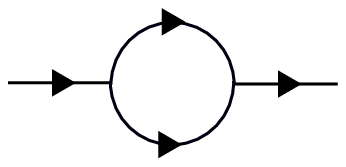}}
\put(37,15){\small $#1$}
\put(54,22){\small $#2$}
\put(54,-7){\small $#3$}
\put(72,15){\small $#4$}
\end{picture}}}
\newcommand*{\threeDdoubleringstraightline}[1]{\raisebox{-1.6ex}{ 
\begin{picture}(55,20)(25,0)
\put(0,0){\includegraphics[scale=.5]{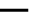}}
\put(54,17){\small $#1$}
\end{picture}}}
\newcommand*{\twoDleftline}[1]{\put(0,0){\includegraphics[scale=.6]{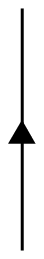}}\put(6,17){$#1$}}
\newcommand*{\twoDrightline}[1]{\put(0,0){\includegraphics[scale=.6]{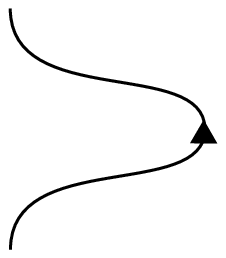}}\put(38,17){$#1$}}
\newcommand*{\twoDvacuum}{\put(0,0){\includegraphics[scale=.6]{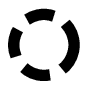}}}
\newcommand*{\twoDoutervacuum}{\put(0,0){\includegraphics[scale=.6]{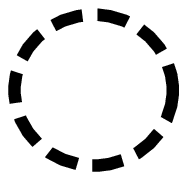}}}
\newcommand*{\twoDoutersolid}[1]{\put(0,0){\includegraphics[scale=.6]{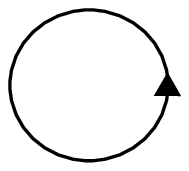}}\put(35,17){#1}}
\newcommand*{\twoDoutersolidshifted}[1]{\put(30,0){\includegraphics[scale=.6]{SVG/2Dlineoutersolid}}\put(65,17){#1}}
\newcommand*{\twoDsolid}[1]{\put(0,0){\includegraphics[scale=.6]{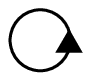}}\put(27,17){$#1$}}
\newcommand*{\twoDhole}{\put(0,0){\includegraphics[scale=.6]{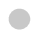}}}
\newcommand*{\twoDtopbottom}[4]{\put(0,0){\includegraphics[scale=.6]{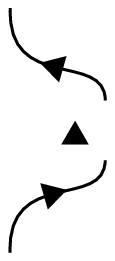}}\put(6,17){$#1$}\put(10,37){$#4$}\put(10,0){$#3$}\put(0,0){\includegraphics[scale=.6]{SVG/2Dlinesolid}}\put(27,17){$#2$}
}
\newcommand*{\Danyons}[4]{
\raisebox{-3.0ex}{
\begin{picture}(23,48)(4,0)
\put(0,0){\includegraphics[scale=.6]{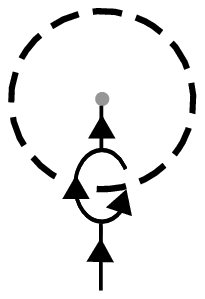}}
\put(18,2){$#1$}
\put(4,12){$#2$}
\put(22,12){$#3$}
\put(18,26){$#4$}
\end{picture}
}
}
\newcommand*{\Danyonsstandardbasisonex}[4]
{\put(0,0){\includegraphics[scale=.6]{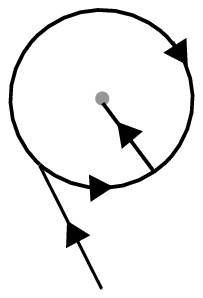}}
\put(0,6){$#1$}
\put(20,10){$#2$}
\put(33,42){$#3$}
\put(23,29){$#4$}
}
\newcommand*{\Danyonsstandardbasisone}[4]{
\raisebox{-3.0ex}{
\begin{picture}(32,52)(0,0)
\put(0,0){\Danyonsstandardbasisonex{#1}{#2}{#3}{#4}}
\end{picture}}
}
\newcommand*{\Danyonsstandardbasistwo}[4]{
\raisebox{-3.0ex}{
\begin{picture}(32,52)(2,0)
\put(0,0){\includegraphics[scale=.6]{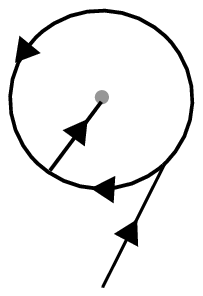}}
\put(26,6){$#1$}
\put(8,10){$#2$}
\put(-6,42){$#3$}
\put(5,29){$#4$}
\end{picture}}
}
\newcommand*{\doubledexample}[1]{
\raisebox{-6.3ex}{
\begin{picture}(70,78)(0,0)
\put(0,0){\includegraphics[scale=.6]{SVG/doubled_ex#1}}
\end{picture}}
}
\newcommand*{\doubledanyonsvt}{ 
\raisebox{-4.0ex}{
\begin{picture}(35,50)(0,0)
\put(0,0){\includegraphics[scale=.6]{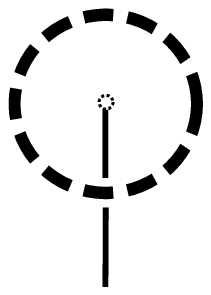}}
\end{picture}}
}
\newcommand*{\doubledanyonsttv}{
\raisebox{-4.0ex}{
\begin{picture}(35,50)(0,0)
\put(0,0){\includegraphics[scale=.6]{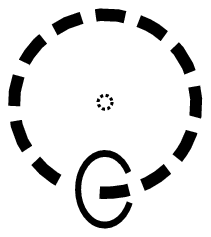}}
\end{picture}}
}
\newcommand*{\doubledanyonsttt}{ 
\raisebox{-4.0ex}{
\begin{picture}(35,50)(0,0)
\put(0,0){\includegraphics[scale=.6]{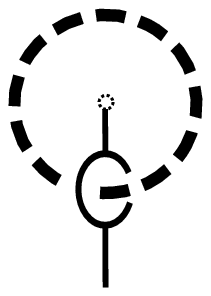}}
\end{picture}}
}
\newcommand*{\doubledanyonsdeltaonezero}{ 
\raisebox{-4.0ex}{
\begin{picture}(35,50)(0,0)
\put(0,0){\includegraphics[scale=.6]{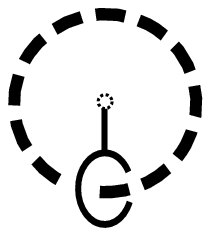}}
\end{picture}}
}
\newcommand*{\doubledanyonsdeltazeroone}{ 
\raisebox{-4.0ex}{
\begin{picture}(35,50)(0,0)
\put(0,0){\includegraphics[scale=.6]{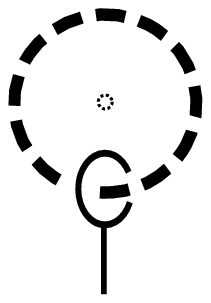}}
\end{picture}}
}
\newcommand*{\doubledanyonsdeltaonezeroZ}{ 
\raisebox{-4.0ex}{
\begin{picture}(35,50)(0,0)
\put(0,0){\includegraphics[scale=.6]{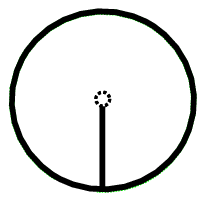}}
\end{picture}}
}
\newcommand*{\doubledanyonsdeltazerooneZ}{ 
\raisebox{-4.0ex}{
\begin{picture}(35,50)(0,0)
\put(0,0){\includegraphics[scale=.6]{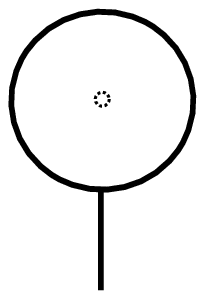}}
\end{picture}}
}
\newcommand*{\doubledanyonstv}{ 
\raisebox{-4.0ex}{
\begin{picture}(35,50)(0,0)
\put(0,0){\includegraphics[scale=.6]{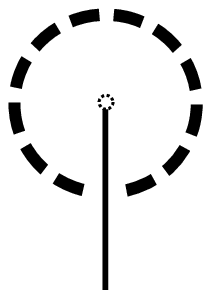}}
\end{picture}}
}
\newcommand*{\doubledanyonsvv}{ 
\raisebox{-4.0ex}{
\begin{picture}(35,50)(0,0)
\put(0,0){\includegraphics[scale=.6]{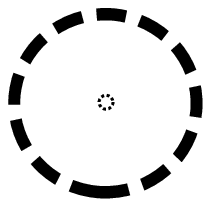}}
\end{picture}}
}
\newcommand*{\bSdag}{
\raisebox{-4.0ex}{
\begin{picture}(35,50)(0,0)
\put(0,0){\includegraphics[scale=.6]{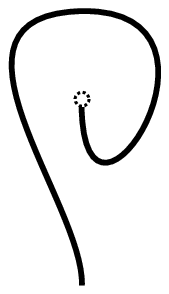}}
\end{picture}}}
\newcommand*{\bS}{
\raisebox{-4.0ex}{
\begin{picture}(35,50)(0,0)
\put(0,0){\includegraphics[scale=.6]{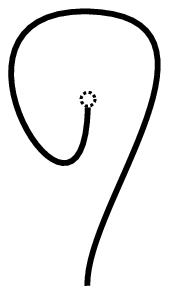}}
\end{picture}}}
\newcommand*{\bttinv}{
\raisebox{-4.0ex}{
\begin{picture}(35,50)(0,0)
\put(0,0){\includegraphics[scale=.6]{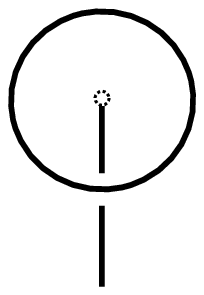}}
\end{picture}}}
\newcommand*{\bt}{
\raisebox{-4.0ex}{
\begin{picture}(35,50)(0,0)
\put(0,0){\includegraphics[scale=.6]{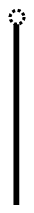}}
\end{picture}}}
\newcommand*{\Danyonsfmove}[1]{
\put(0,0){\includegraphics[scale=.6]{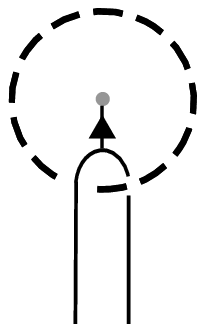}}
\put(19,30){$#1$}
}
\newcommand*{\Danyonsfmovebase}[1]{
\put(0,0){\includegraphics[scale=.6]{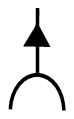}}
\put(19,60){$#1$}
}
\newcommand*{\Danyonsfmovebasetwo}[1]{
\put(0,0){\includegraphics[scale=.6]{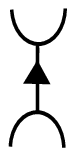}}
\put(19,60){$#1$}
}
\newcommand*{\Danyonsfourpunctured}[5]{
\put(0,0){\includegraphics[scale=.6]{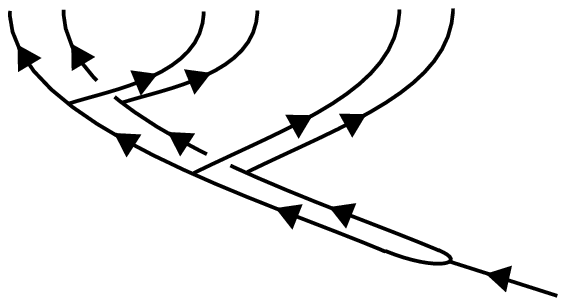}}
\put(0,46){${#1}_1$}
\put(21,46){${#2}_1$}
\put(33,46){${#1}_2$}
\put(55,46){${#2}_2$}
\put(67,46){${#1}_3$}
\put(89,46){${#2}_3$}
\put(19,20){${#4}_1$}
\put(41,29){${#5}_1$}
\put(51,8){${#4}_2$}
\put(71,18){${#5}_2$}
\put(95,8){${#3}$}
}
\newcommand*{\Danyonsfourpuncturedsingle}[2]{
\put(0,0){\includegraphics[scale=.6]{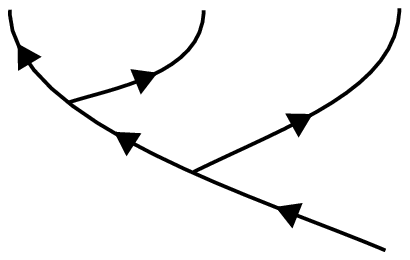}}
\put(0,71){$#1_1$}
\put(33,71){$#1_2$}
\put(67,71){$#1_3$}
\put(19,45){$#2_1$}
\put(51,33){$#2_2$}
}
\newcommand*{\Danyonsfourpuncturedflippedsingle}{
\put(0,0){\includegraphics[scale=.6]{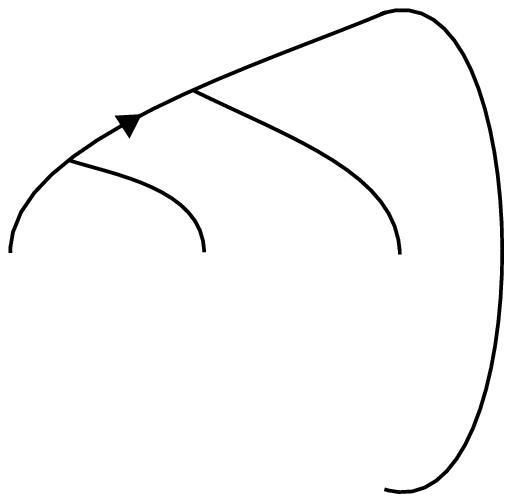}}
\put(19,104){$c'_1$}
}
\newcommand*{\Danyonsfourpuncturedflipped}{
\put(0,0){\includegraphics[scale=.6]{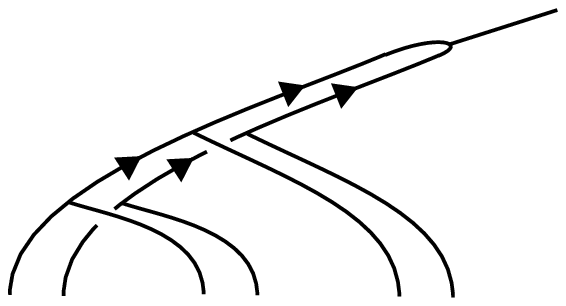}}
\put(19,79){$c'_1$}
\put(41,65){$d'_1$}
\put(51,93){$c'_2$}
\put(71,78){$d'_2$}
}
\newcommand*{\Danyonsfourpuncturedflippedtwo}{
\put(0,0){\includegraphics[scale=.6]{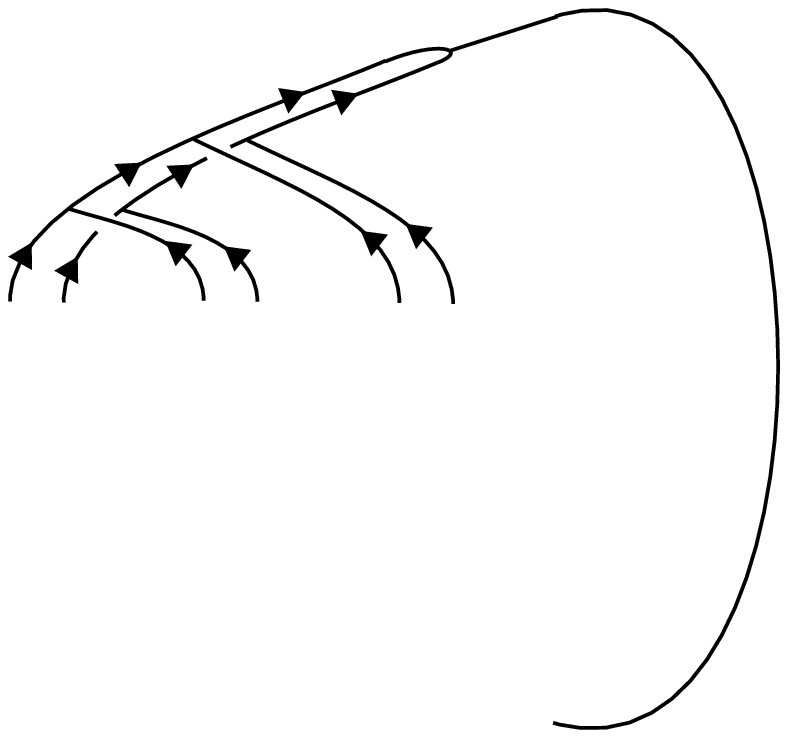}}
\put(19,104){$c'_1$}
\put(41,90){$d'_1$}
\put(51,118){$c'_2$}
\put(71,103){$d'_2$}
\put(0,71){$a_1$}
\put(22,71){$b_1$}
\put(33,71){$a_2$}
\put(55,71){$b_2$}
\put(67,71){$a_3$}
\put(89,71){$b_3$}
}
\newcommand*{\Danyonsfmoveb}[4]{
\put(0,0){\includegraphics[scale=.6]{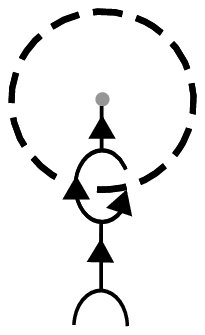}}
\put(4,17){$#2$}
\put(22,17){$#3$}
\put(18,9){$#1$}
\put(19,30){$#4$}
}
\newcommand*{\myvecstandard}[6]{
{_{#1}}\Psi_{\vec{#2}} (\vec{#3}, \vec{#4}, \vec{#5}, \vec{#6})}
\newcommand*{\myvecstand}[4]{
{_{#1}}\Psi_{#2} (#3,#4)
}
\newcommand*{\myvecstandardtwopic}[6]{\raisebox{-5.5ex}{
\begin{picture}(110,70)(-2,0)
\put(0,0){\Danyonsfourpunctured{#3}{#4}{#1}{#5}{#6}}
\put(0,0){\Danyonsfmovebase{{#2}_1}}
\put(33,0){\Danyonsfmovebase{{#2}_2}}
\put(67,0){\Danyonsfmovebase{{#2}_3}}
\end{picture}}
}
\newcommand*{\myvecstandardpic}[6]{
\raisebox{-9.0ex}{
\begin{picture}(110,103)(0,0)
\put(0,48){\Danyonsfmove{{#2}_1}}
\put(33,48){\Danyonsfmove{{#2}_2}}
\put(67,48){\Danyonsfmove{{#2}_3}}
\put(0,0){\Danyonsfourpunctured{#3}{#4}{#1}{#5}{#6}}
\end{picture}}
}
\newcommand{\qw}[1][-1]{\ar @{-} [0,#1]}
\newcommand{\qwx}[1][-1]{\ar @{-} [#1,0]}
\newcommand{\gate}[1]{*{\xy *+<.6em>{#1};p\save+LU;+RU **\dir{-}\restore\save+RU;+RD **\dir{-}\restore\save+RD;+LD **\dir{-}\restore\POS+LD;+LU **\dir{-}\endxy} \qw}
\newcommand{\meter}{\gate{\xy *!<0em,1.1em>h\cir<1.1em>{ur_dr},!U-<0em,.4em>;p+<.5em,.9em> **h\dir{-} \POS <-.6em,.4em> *{},<.6em,-.4em> *{} \endxy}}
\newcommand{\measure}[1]{*+[F-:<.9em>]{#1} \qw}
\newcommand{\multigate}[2]{*+<1em,.9em>{\hphantom{#2}} \qw \POS[0,0].[#1,0];p !C *{#2},p \save+LU;+RU **\dir{-}\restore\save+RU;+RD **\dir{-}\restore\save+RD;+LD **\dir{-}\restore\save+LD;+LU **\dir{-}\restore}
\newcommand{\ghost}[1]{*+<1em,.9em>{\hphantom{#1}} \qw}
\newcommand{\rstick}[1]{*!L!<-.5em,0em>=<0em>{#1}}
\newcommand{\lstick}[1]{*!R!<.5em,0em>=<0em>{#1}}
\newcommand{\ustick}[1]{*!D!<0em,-.5em>=<0em>{#1}}
\newcommand{\Qcircuit}[1][0em]{\xymatrix @*[o] @*=<#1>}
\def\puncture {\psdots[dotstyle=+,dotangle=45,dotscale=1](0,0)}
\def\vac{1}  
\begin{document}

\title{Quantum computation with Turaev-Viro codes}

\author{ 
Robert K{\"o}nig%
	\thanks{Institute for Quantum Information, California Institute of Technology}
\and
Greg Kuperberg%
	\thanks{Department of Mathematics, University of California, Davis}
\and
Ben W.~Reichardt%
	\thanks{School of Computer Science and Institute for Quantum Computing, University of Waterloo}
} 
\date{}

\maketitle

\begin{abstract}
The Turaev-Viro invariant for a closed $3$-manifold is defined as the contraction of a certain tensor network.  The tensors correspond to tetrahedra in a triangulation of the manifold, with values determined by a fixed spherical category.  For a manifold with boundary, the tensor network has free indices that can be associated to qudits, and its contraction gives the coefficients of a quantum error-correcting code.  The code has local stabilizers determined by Levin and Wen.  For example, applied to the genus-one handlebody using the ${\bf Z}_2$ category, this construction yields the well-known toric code.  

For other categories, such as the Fibonacci category, the construction realizes a non-abelian anyon model over a discrete lattice.  By studying braid group representations acting on equivalence classes of colored ribbon graphs embedded in a punctured sphere, we identify the anyons, and give a simple recipe for mapping fusion basis states of the doubled category to ribbon graphs.  We explain how suitable initial states can be prepared efficiently, how to implement braids, by successively changing the triangulation using a fixed five-qudit local unitary gate, and how to measure the topological charge.  Combined with known universality results for anyonic systems, this provides a large family of schemes for quantum computation based on local deformations of stabilizer codes.  These schemes may serve as a starting point for developing fault-tolerance schemes using continuous stabilizer measurements and active error-correction.  
\end{abstract}

\clearpage

\tableofcontents
\clearpage

\def\anyontimes{\times}
\def\anyonplus{+}
\def\anyoncong{=}

\section{Introduction}

Proposed topological quantum computers protect quantum information in a physical medium, while allowing logical operations to be applied robustly using adiabatic local processes~\cite{Kitaev03, preskill, Freedmanetal03, Nayaketal08}.  Building systems with the required properties is generally believed to be difficult.  However, some of the highest fault-tolerance threshold estimates are based on a code---Kitaev's toric code---and procedures that are motivated by topological quantum computation~\cite{Raussendorfetal05, Raussendorfharrington06, Dennisetal02, Fowleretal08}.  A limitation of the toric code is that it corresponds to an abelian anyon model, so does not allow for universal quantum computation.  By translating more general anyon models into the language of error-correcting codes, this limitation can be surpassed.  

A first step is to realize the Hilbert space of a topological quantum field theory (TQFT) on a spin lattice.  Turaev and Viro completed this for a class of ribbon categories~\cite{TuraevViro92topology}, and their construction was extended by Barrett and Westbury to spherical categories~\cite{BarrettWestbury93spherical, BarrettWestbury96invariants}.  For example, if~$\cC$ is the Fibonacci category, a modular tensor category with one non-trivial particle~$\tau$, the Turaev-Viro construction gives a TQFT for the ``doubled'' category $\cC \otimes \cC^*$, with particles $1\otimes 1,1\otimes \tau, \tau\otimes 1$ and $\tau\otimes\tau$.  Levin and Wen showed that the TQFT Hilbert space is the code space of a set of commuting local stabilizers~\cite{LevinWen}.  

Roughly, a TQFT associates to every surface $\Sigma$ a Hilbert space~$\cH_\Sigma$, and to every diffeomorphism $h: \Sigma \rightarrow \Sigma'$ a linear map $U(h): \cH_\Sigma \rightarrow \cH_{\Sigma'}$.  By restricting to self-diffeomorphisms of~$\Sigma$, the maps $\{U(h)\}_h$ form a representation of $\Sigma$'s mapping class group.  For example, if $\Sigma$ is the sphere with $n$ punctures, its mapping class group is the braid group on $n-1$ strands.  For certain TQFTs, including that based on the Fibonacci category, the resulting computational model has been shown to be equivalent to standard quantum computers~\cite{Freedmanuniversality00, Freedmanetal02}.  

We address the problem of implementing TQFT computations for modular tensor categories in the Turaev-Viro spin-lattice code space.  
This involves the following three steps:
\begin{enumerate}[(i)]
\item
We identify bases of $\cH_\Sigma$ corresponding to fusion diagrams for the doubled theory.  This essentially boils down to decomposing the braid group representation on~$\cH_\Sigma$, a Hilbert space of colored ribbon graphs modulo local equivalence relations.  
\item
For every braid group element $b$, we show how to realize the encoded unitary $U(b)$ as a sequence of protected, local gates.  Pachner moves on the surface's triangulation correspond to unitary $F$-moves on the spins.  Under such a transformation, the code space, i.e., the ground space~$\cH_t$ of a local stabilizer Hamiltonian~$H_t$, is transformed to the ground space of a local stabilizer Hamiltonian $H_{t+1}$ which can be obtained by adiabatically changing three terms in~$H_t$.
\item
Finally, we give procedures for preparing certain initial states in $\cH_\Sigma$, and for measuring the topological charge within a region.
\end{enumerate}
Altogether, this provides a large family of schemes for universal quantum computation using geometrically local operations on locally stabilized codewords.  These schemes may serve as a starting point for developing fault-tolerance schemes using continuous stabilizer measurements and active error-correction. 

The computational scheme presented here is intimately linked to fundamental concepts in topological quantum computation.  The existence of such a computational scheme based on a doubled category was previously conjectured in~\cite[Section~6.2]{RowStonZhen09}.
A similar scheme was proposed by Freedman using the $SU(2)$ Witten-Chern-Simons modular functor at level three~\cite{Freedman00}, but in contrast to our work the stability of his proposal relies on a conjectured energy gap. 

The purpose of this paper is partly introductory and we keep the presentation as self-contained as possible.  In particular, though helpful, we will not require previous knowledge of anyonic fusion spaces; see~\cite{preskill} and the appendix of~\cite{KitaevAnyons} for excellent introductory reviews.

The ideas in this paper originated in an email exchange between one of the authors (G.K.) and Alexei Kitaev.

\subsection*{Outline}

In \secref{sec:TVHilbertspace}, we introduce the Hilbert space $\cH_\Sigma$ associated with a surface~$\Sigma$ by the Turaev-Viro-TQFT defined by a category~$\cC$.  This is a Hilbert space of colored ribbon graphs embedded in~$\Sigma$ modulo local equivalence relations.  We explain how self-diffeomorphisms $h: \Sigma \rightarrow \Sigma$ act on this space by deforming these graphs.  In \secref{sec:anyonsdefinition}, we show how to decompose this mapping class group action into its anyonic content: the representation is described by the ``doubled'' category~$\cC \otimes \cC^*$.  In \secref{sec:realization}, we explain how the space~$\cH_\Sigma$ arises as a subspace of qudits on a lattice; this is the Turaev-Viro code.  We also introduce the corresponding stabilizer Hamiltonian defined by Levin and Wen~\cite{LevinWen}.  In \secref{sec:computation}, we show how to encode information into the code space, execute logical operations, and read out encoded information.  

Throughout Sections~\ref{sec:TVHilbertspace} through~\ref{sec:computation}, we restrict our attention to the Fibonacci category.  
In \secref{sec:generalanyonmodels}, we extend these results to arbitrary ribbon categories.  We conclude in \secref{sec:tvinvariantcode} with a discussion of the relation between the Turaev-Viro $3$-manifold invariant and the Turaev-Viro code.  

In the appendices, we give proofs and examples for various claims, and show how our results extend to mapping class group representations of surfaces with higher genus.  In particular, this leads to an alternative (known) expression for the Turaev-Viro invariant, as discussed in \appref{sec:crane}.

\section{The Hilbert space \texorpdfstring{$\cH_\Sigma$}{H\_Sigma} of Fibonacci ribbon graphs on \texorpdfstring{$\Sigma$}{Sigma}} \label{sec:TVHilbertspace}

\subsection{Definition of Fibonacci ribbon graphs and \texorpdfstring{$\cH_\Sigma$}{H\_Sigma}} \label{sec:stringnetdefinition}

Let $\Sigma$ be a compact, orientable surface with boundary, and a single marked point on each boundary component.  We begin by associating to $\Sigma$ a ``ribbon graph'' Hilbert space $\cH_\Sigma$.

A ribbon graph (also called skein/spin network or string net) on $\Sigma$ is the smooth embedding into the interior of $\Sigma$ of a graph in which each vertex has degree either two or three, except that a single vertex of degree one is allowed to be mapped to each marked point on a boundary component.  The ribbon graph Hilbert space $\cH_\Sigma$ is the space of formal linear combinations of  ribbon graphs embedded in~$\Sigma$ modulo the following local relations:
 
\begin{align}
\raisebox{-.8em}{\includegraphics{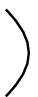}}
&=
\raisebox{-.8em}{\includegraphics{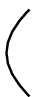}}
\label{eq:firstlevinwen}
\\
\raisebox{-0.7em}{\includegraphics{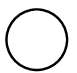}} &= \tau \label{e:looprule} \\
\horizontalt &= \sqrt \tau \verticalv - \frac{1}{\sqrt \tau} \horizontalv \label{eq:lastlevinwen}
\end{align}
Here the first rule is meant to indicate that ribbon graphs related by smooth deformations of the embedding are equivalent.  The second rule allows us to eliminate loops, picking up a factor of 
\begin{equation}
\tau := \frac{1 + \sqrt 5}{2}
 \enspace ,
\end{equation}
provided that the loop can be contracted to a point within the surface.  The third rule allows for eliminating adjacent trivalent vertices from the ribbon graph.  

By applying the above rules, we derive 
\begin{equation} \label{eq:tadpolederive}
\raisebox{-1.8ex}{\includegraphics{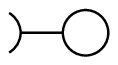}}
=
\sqrt \tau
\raisebox{-1.8ex}{\includegraphics{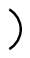}}
- \frac 1 {\sqrt \tau}
\raisebox{-1.8ex}{\includegraphics{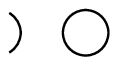}}
= 0
\end{equation}
Therefore we impose as an additional rule
\begin{equation} \label{eq:tadpolerule}
\raisebox{-1.6ex}{\includegraphics{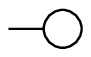}}
= 0
\end{equation}
The only case in which this is not already a consequence of the other rules is when the dangling edge connects to a boundary point.

\subsection{Computational bases for and inner product on \texorpdfstring{$\cH_\Sigma$}{H\_Sigma}} \label{sec:netbasis}

The space $\cH_\Sigma$ is finite dimensional, with a dimension that depends on the topology of $\Sigma$.  We will concentrate on the case where $\Sigma$ is the $n$-punctured sphere, $\Sigma_n = S^2 \smallsetminus (A^1\cup \cdots \cup A^n)$, i.e., the sphere $S^2$ with $n$ discs $A^i$ removed and one boundary point $p_i \in \partial A^i$ for every hole.  Then $\cH_\Sigma$ has some simple bases associated to the dual graphs of certain triangulations.  We call these bases ``computational bases" because they are useful for encoding $\cH_{\Sigma_n}$ using qubits.  

A labeling $\ell$ of $\Sigma$ associates an element $\ell(p) \in \{0,1\}$ to every marked boundary point $p$.  A ribbon graph is consistent with $\ell$ if it has an edge ending at exactly those $p$ with $\ell(p) = 1$.  Then the space $\cH_\Sigma$ can be decomposed as
\begin{equation} \label{eq:boundarylabeldecomp}
\cH_\Sigma = \bigoplus_\ell \cH_\Sigma^\ell
 \enspace ,
\end{equation}
where $\cH_\Sigma^\ell$ is the subspace spanned by ribbon graphs that are consistent with the labeling $\ell$.  

First consider the case where we restrict the ribbon graph to have open boundary conditions, i.e., to have no edges touching the boundaries of $\Sigma$.  In this case, fill in every boundary piece with a disk, and consider a (degenerate) triangulation of $\Sigma$ whose vertices are the centers of these disks.  Then the graph dual to this triangulation is trivalent.  Placing a bit, $0$ or $1$, in the center of each edge, determines a ribbon graph by interpreting a $1$ as the presence of an edge and a $0$ as an absence.  The set of all such $0$/$1$ assignments that have no dead ends---that is, for every vertex of the dual graph, zero, two or three of the incident edges must be set to $1$---is a basis for $\cH_\Sigma$.  Moreover, we can define an inner product on $\cH_\Sigma$ by setting these basis states to be orthonormal.  One can verify that this inner product does not depend on the choice of basis, i.e., on the triangulation.  Several examples are given in \figref{f:basisexamples}.  

\begin{figure*}
\centering
\begin{tabular}{c@{$\qquad$}c@{$\qquad$}c@{$\qquad$}c}
\subfigure[]{\label{f:basisexamples1}\raisebox{.1in}{\includegraphics{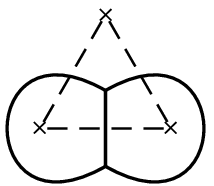}}}
&
\subfigure[]{\label{f:basisexamples2}\raisebox{.1in}{\includegraphics{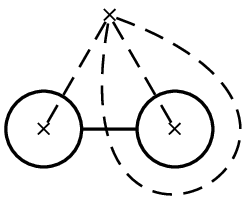}}}
&
\subfigure[]{\label{f:basisexamples3}\raisebox{.1in}{\includegraphics{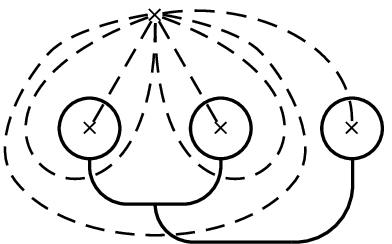}}}
&
\subfigure[]{\label{f:basisexamples4}\raisebox{.1in}{\includegraphics{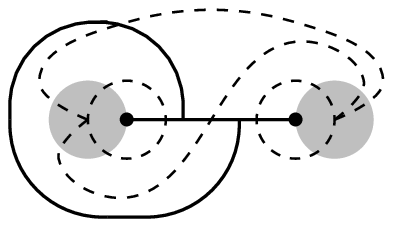}}}
\end{tabular}
\caption{In (a) and (b) are shown two different examples of basis choices for $\cH_\Sigma$ for the case of~$\Sigma$ being the sphere with three punctures (indicated with crosses).  Dashed lines mark the point-set triangulation of the punctures, while the dual graph, which is trivalent, is indicated with solid lines.  Part (c) gives another example of a basis for the sphere with four punctures.  In (a-c), it is assumed that there are no marked points on the boundaries of $\Sigma$.  Part (d) shows the more general situation, for the sphere with two shaded holes, each of which has a marked point on its boundary.} \label{f:basisexamples}
\end{figure*}

The fully general setting, in which ribbon graph edges are allowed to go to a marked point on each boundary component, is slightly more complicated.  In this case, again fill in the holes and place a vertex in the middle of each hole.  For each of these vertices, add a loop going around the marked point on the boundary.  Then complete this graph to a triangulation of the vertices.  The dual graph will be trivalent, except with dead ends at the marked boundary points.  Then the basis and inner product are defined as above on this graph.  

For example, if $n = 0$ or $1$, then $\cH_{\Sigma_n} \cong \C$ is one-dimensional.  (In the case $n = 1$, the rule in Eq.~\eqnref{eq:tadpolerule} ensures that $\cH_{\Sigma_1}^1$, ribbon graphs with an edge attached to the boundary, is zero-dimensional.)  \figref{f:basisexamples4} shows the $n = 2$ case.  Note that there are four edges in the dual graph.  Since each edge can be absent or present ($0$ or $1$), this gives a natural isometry of $\cH_{\Sigma_4}$ into $(\C^2)^{\otimes 4}$.  However, there are only seven edge assignments without dead ends; $\dim(\cH_{\Sigma_4}) = \dim(\cH_{\Sigma_4}^{(0,0)}) + \dim(\cH_{\Sigma_4}^{(1,1)}) + \dim(\cH_{\Sigma_4}^{(0,1)}) + \dim(\cH_{\Sigma_4}^{(1,0)}) = 2 + 3 + 1 + 1 = 7$.  More generally, for $n \geq 2$, a basis derived in this way from a point-set triangulation gives a natural isometry from $\cH_{\Sigma_n}$ into $(\C^2)^{\otimes (5n-6)}$, i.e., a way of implementing $\cH_{\Sigma_n}$ using $5n-6$ qubits.  Therefore, we call such a basis for $\cH_{\Sigma_n}$ a \emph{computational basis}.  It defines an inner product on $\cH_{\Sigma_n}$.  

The examples in \figref{f:basisexamples} suggest a general procedure for defining a basis for $\cH_{\Sigma_n}$, based on a \emph{pants decomposition} of the surface~$\Sigma_n$.  For example, the pants decomposition of $\Sigma_4$ corresponding to~\figref{f:basisexamples3} is
\begin{equation}
\raisebox{-2cm}{
\begin{picture}(1,130)
\scalebox{.85}{
\raisebox{5.25cm}{
\makebox[0pt]{
\rotatebox{-30}{$\begin{matrix} 
&\labeledcylinder{ }   &  & \labeledcylinder{ } &\\
\labeledhcylinder{ }&\labeledhpants{ } & \labeledhcylinder{ } & \labeledhpants{ } & \labeledhcylinder{ }
\end{matrix}$}}}}
\end{picture}
}
\end{equation}
In general, the pants decomposition corresponds to a rooted binary tree with $n-1$ leaves, the root and each leaf associated to one of the holes in $\Sigma_n$.  The degree-three internal vertices correspond to ``pants segments," each isomorphic to $\Sigma_2$, and the edges are ``cylindrical segments," each isomorphic to $\Sigma_3$.  The computational basis is obtained by associating four qubits to every leaf, and one qubit to the root and to every internal pant segment.

\subsection{Actions on \texorpdfstring{$\cH_\Sigma$}{H\_Sigma}} \label{sec:mappingclassgroupaction}

Let $\Sigma_A$ and $\Sigma_B$ be two surfaces, and let $\Sigma$ be the surface obtained by gluing together one or more boundary circles of the surfaces, in such a way that the marked boundary points are matched.  Then ribbon graphs on $\Sigma_A$ and $\Sigma_B$ with matching boundary labels can be glued together to obtain a ribbon graph on $\Sigma$.  In this way, every element $s\in\cH_{\Sigma_B}^{\ell_B}$ defines a linear operator $\hat s : \cH_{\Sigma_A}\rightarrow \cH_\Sigma$.  The result of applying $\hat s$ to $t\in\cH_{\Sigma_A}$ is the linear combination of ribbon graphs obtained by gluing together all ribbon graphs comprising~$s$ and~$t$ with matching labels, up to the equivalences in \secref{sec:stringnetdefinition}.  The map $\h{\ \ }:\cH_{\Sigma_B} \rightarrow \h{\cH_{\Sigma_B}}$ is clearly an isomorphism of vector spaces; we will therefore interpret ribbon graphs interchangeably as states or operators.  For example, if $\Sigma_B = \Sigma_2$ and $\Sigma_A = \Sigma_3$, 
\begin{equation}
\h{\cylinderSdag} \left(\alpha \cdot \pantsttt + \beta \cdot \pantsttv \right) = \alpha \pantstwisted
\end{equation}
In alternative notation, the same equation can be written 
\begin{equation} \label{eq:rootdehntwist}
\h{\Sdag} \left(\alpha \cdot \bttt + \beta \cdot \bttv \right) = \alpha \threecob{\wvvI}{\wvvI}{\ttt}{\threeSdag}
 \enspace .
\end{equation}
This result can be expressed in terms of the basis ribbon graphs by application of the rules~\eqnref{eq:firstlevinwen}--\eqnref{eq:lastlevinwen}.  

Now specialize once more to the case of the $n$-punctured sphere $\Sigma = \Sigma_n$.  The surface $\Sigma$ together with its marked boundary points $p_1,\ldots,p_n$ defines the mapping class group $\cM(\Sigma, \{p_1,\ldots,p_n\})$.  Its elements are isotopy classes of orientation-preserving diffeomorphisms of $\Sigma$ which fix the boundary points.  The mapping class group is generated by operators of two kinds, Dehn-twists and braid-moves.  A Dehn-twist is a $2\pi$-twist along a simple closed curve.  A braid-move affects a pair of pants with two holes in $\Sigma$, and is defined as a $\pi$-twist around a simple closed curved on $\Sigma$ enclosing both holes, followed by (or preceded by) $\pi$-twists on each of the legs, viz.
\begin{equation} \label{eq:braidpicture3d}
\pantsttt \overset{\textrm{braid}}{\longmapsto} \pantsdeformed = \pantsbraided
\end{equation}

The mapping class group $\cM$ acts on the ribbon graph space $\cH_\Sigma$ by linearly extending its action on the basis ribbon graphs to the full space $\cH_\Sigma$.  This action is \emph{unitary} with respect to the inner product defined previously.  

Let us give a more explicit description of the action of Dehn-twists and braid-moves on the ribbon graph Hilbert space.  First consider a non-contractible curve~$\gamma$.  
\begin{equation} \label{eq:dehntwistright}
\raisebox{-.5cm}{
\begin{picture}(25,35)(0,00)
\put(0,20){\cylinderdehntwistcurve}
\put(5,10){$\gamma$}
\end{picture}}
\end{equation}
The surface $\Sigma$ is topologically unchanged if we cut along $\gamma$ and insert into the gap a cylindrical segment $\Sigma_2$.  Then the Dehn-twist about the~$\gamma$ corresponds to the linear operator $D(\gamma): \cH_\Sigma \rightarrow \cH_\Sigma$ given by 
\begin{equation} \label{eq:dehntwistopdef}
D(\gamma) = \h{\cylinder} + \h{\cylinderS} = \h{\vvI} + \h{\S}
\end{equation}

Similarly, the  operator corresponding to braiding can be specified by
\begin{equation}
B = \sum_{i,j \in \{0,1\}} \fourpuncturedspherebraid{i}{j}
\end{equation}
where the binary variables $i$ and $j$ indicate the presence or absence of an edge.  Indeed, stacking this on top of the two legs in the diagram on the left-hand side of Eq.~\eqnref{eq:braidpicture3d} gives the right-hand side thereof. 

We will next identify irreducible subspaces of the mapping class group action.  
\section{Doubled Fibonacci anyons in \texorpdfstring{$\cH_\Sigma$}{H\_Sigma}} \label{sec:anyonsdefinition}

Let $\Sigma_n$ be the $n$-punctured sphere.  \secref{sec:netbasis} showed how to construct 
a computational basis for $\cH_{\Sigma_n}$ based on one of a family of point-set triangulations.  
It also specialized this construction to the case of a triangulation derived recursively from a pants decomposition of $\Sigma_n$.  These computational bases are useful for encoding $\cH_{\Sigma_n}$ using qubits, and we will study this implementation further in \secref{sec:realization}.

However, these bases are inconvenient operationally, because they do not transform cleanly under the application of Dehn-twists and braid-moves.  In particular, it is complicated to express in the pants decomposition computational basis the action of a Dehn-twist about either the root or an internal cylindrical segment.  From the right-hand side of Eq.~\eqnref{eq:rootdehntwist}, the loop has to be propagated upward toward the leaves, potentially requiring many applications of the ribbon graph equivalence rules.  Additionally, the basis is asymmetrical, since the puncture at the root of the pants decomposition tree is treated differently from the other punctures.  Intuitively, too much information is kept at the leaves.  

In this section, we will define an orthonormal basis that simultaneously diagonalizes the Dehn-twists.  Each basis element will be represented by an ``anyon fusion diagram."  
Although slightly more complicated to define, the advantage of this anyonic basis is that it will reveal more mathematical structure and greatly clarify the steps necessary for obtaining universal quantum computation.  

Let a bold, dashed line indicate a linear combination of the two diagrams with and without that line: 
\begin{equation} \label{eq:vacuumlinesfibonacci}
\raisebox{.1cm}{\includegraphics{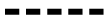}}
=
\frac{1}{\sqrt{1+\tau^2}} ( 
\raisebox{.1cm}{\qquad\quad}
+
\tau \,
\raisebox{.1cm}{\includegraphics{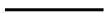}})
\end{equation}
Such lines are called vacuum lines, because a simple calculation shows that other lines can pass over them freely: 
\begin{equation} \label{eq:vacuumquick}
\sqrt{1+\tau^2}
\raisebox{-.6cm}{\includegraphics{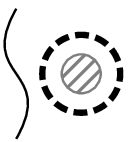}}
= 
\raisebox{-.6cm}{\includegraphics{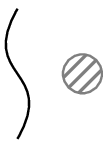}}
+
\tau
\raisebox{-.6cm}{\includegraphics{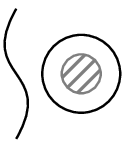}}
=
\raisebox{-.6cm}{\includegraphics{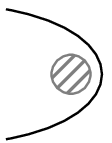}}
+
\tau
\raisebox{-.6cm}{\includegraphics{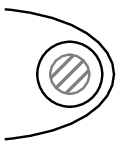}}
= 
\sqrt{1+\tau^2}
\raisebox{-.6cm}{\includegraphics{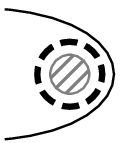}}
\end{equation}
Here, Eq.~\eqnref{eq:vacuumquick} is to be interpreted as a local identity, that is ribbon graphs inside the hatched area or outside the diagram are assumed to be the same for every term.  Its proof is a straightforward application of the rules~\eqnref{eq:firstlevinwen}-\eqnref{eq:lastlevinwen}.  \lemref{lem:vacuum} states further properties of the vacuum lines.

\subsection{Anyonic fusion basis for \texorpdfstring{$\cH_{\Sigma_n}$}{H\_\{Sigma\_n\}}} \label{s:anyonicfusionbasis}

Fix a pants decomposition of $\Sigma_n$, corresponding to a rooted binary tree $T$.  Fix a labeling $\ell$ of the marked boundary points, as in Eq.~\eqnref{eq:boundarylabeldecomp}.  

A labeling of $T$ is an assignment to each edge of $T$ 
either $\oneanyon$ or $\tauanyon$.  A labeling is fusion-consistent if no internal vertex has exactly one incident edge labeled $\tauanyon$.  A pair of labelings of $T$ is boundary-consistent with~$\ell$ if for each boundary point $p$, with labels of the corresponding edges $b_+$ and $b_-$, 
\begin{equation}\begin{split}
\ell(p) = 0 &\Rightarrow b_+ b_- \in \{\oneone, \tautau\} \\
\ell(p) = 1 &\Rightarrow b_+ b_- \in \{\tauone, \onetau, \tautau\}
 \enspace .
\end{split}\end{equation}
A pair of fusion-consistent labelings of $T$  which is boundary-consistent with $\ell$ is called an $\ell$-consistent doubled anyon fusion diagram.  

The \emph{anyonic fusion basis} states for $\cH_{\Sigma_n}^\ell$ are an orthonormal basis, with basis elements indexed by $\ell$-consistent doubled anyon fusion diagrams. 
In the remainder of this subsection, we will define the state $\ket{\ell,d} \in \cH_{\Sigma_n}^\ell$ that corresponds to an $\ell$-consistent doubled anyon fusion diagram~$d$.  
The state $\ket{\ell, d}$ is defined by first constructing a \emph{three-dimensional} ribbon graph living in the thickened surface $\Sigma \times [-1, 1]$, and then reducing this ribbon graph down to two dimensions.  

Think of each of the two labelings in~$d$ as a ribbon graph, by interpreting label~$\oneanyon$ as the absence of an edge and label $\tauanyon$ as the presence of an edge along the edges of the tree $T$.  Place these two ribbon graphs on $\Sigma \times \{1\}$ and $\Sigma \times \{-1\}$ in $\Sigma \times [-1, 1]$.  Then add vacuum lines around each puncture in $\Sigma \times \{0\}$, except not around the puncture at the root of the pants decomposition.  

In the resulting three-dimensional ribbon graph, edges can end at $(p, -1)$ and $(p, 1)$ for each marked boundary point $p$ of $\Sigma$.  However, we would like there only to be a single boundary condition at the point $(p, 0)$.  To fix the boundary conditions, close off the ribbon graphs at each boundary with one of the five following diagrams.  Here, the vertical axis represents the second coordinate, from $-1$ to $1$, with the point $(p, 0)$ marked.  
\begin{equation}
\begin{array}{c @{$\quad\;$} c @{$\quad\;$} c @{$\quad\;$} c @{$\quad\;$} c @{$\quad\;$} c}
&
\includegraphics{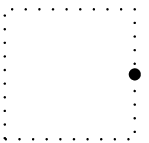}
&
\includegraphics{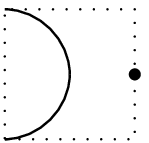}
&
\includegraphics{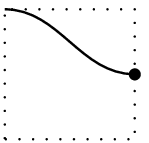}
&
\includegraphics{images/boundaryconditionstauone1}
&
\includegraphics{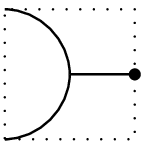}
\\
b_+ b_-: & \oneone & \tautau & \tauone & \onetau & \tautau \\
\ell(p): & 0 & 0 & 1 & 1 & 1
\end{array}
\end{equation}
Boundary-consistency of $d$ implies that one of these diagrams applies.  Note that this closing off is done inside the vacuum line around the hole at $p$.  

Finally, visualize the three-dimensional ribbon graph as a two-dimensional diagram with crossings.  (This requires slightly offsetting the ribbon graph boundaries in $\Sigma \times \{1\}$ from the boundaries in $\Sigma \times \{-1\}$.  Follow any convention for choosing the direction of the offset; different conventions will only change the phase of $\ket{\ell,d}$.)  Use repeatedly the rule
\begin{equation}\begin{split} \label{e:overcrossing}
\overcrossing &= e^{-3\pi i/5} \overcrossingresolvedv + e^{3\pi i/5} \overcrossingresolvedupv \\
\end{split}\end{equation}
to eliminate ribbon crossings, finally obtaining a ribbon graph in $\Sigma \times \{0\} \cong \Sigma$.  This state is~$\ket{\ell,d}$.  

Note that ribbon graphs in three dimensions can be manipulated similarly to ribbon graphs in two dimensions, for example by smoothly deforming the embedding.  However, it is useful to think of the ribbon graphs in three dimensions as actually being thin ribbons, since kinks cannot be freely eliminated.  For example, we derive from Eq.~\eqnref{e:overcrossing} that 
\begin{equation}\begin{split} \label{e:dekink}
\raisebox{-2pt}{\includegraphics[scale=.5]{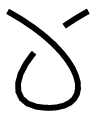}} &= e^{-4\pi i/5} \;
\raisebox{-2pt}{\includegraphics[scale=.5]{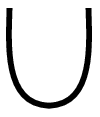}} \\
\raisebox{-4pt}{\includegraphics[scale=.5]{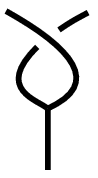}} &= e^{3\pi i/5} \raisebox{-4pt}{\includegraphics[scale=.5]{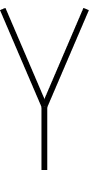}}
\end{split}\end{equation}

The anyonic fusion basis states form an orthonormal basis.  We will prove this and other properties, in a more general setting, in \appref{sec:normalizationgeneral}.

\subsubsection*{Example: Anyonic fusion basis states for $\cH_{\Sigma_2}$}

For $\Sigma_2$, the only pants decomposition is as a single cylindrical segment.  $\cH_{\Sigma_2}$ is seven-dimensional, spanned by the states $\{\vvI, \ttI, \S,\Sdag, \O,\tvD, \vtD\}$, defined by 
\begin{equation} \label{eq:twopuncturedbasis}
\begin{matrix}
\; \cylinder \;&\; \cylinderttI \;&\; \cylinderS \;&\; \cylinderSdag \;&\; \cylinderO \;&\; \cylindertvD \;&\; \cylindervtD \\
\\
\vvI & \ttI & \S & \Sdag & \O & \tvD & \vtD
\end{matrix}
\end{equation}
To explain the notation, notice, for example, that $\vtD$ corresponds to looking downward on the final cylinder above, as the base of the cylinder spreads out.  The basis elements are all of unit length and orthogonal, \emph{except} $\braket \S \Sdag = 1/\tau$.  

An anyonic fusion basis for $\cH_{\Sigma_2}^\ell$ is determined by a single label $\oneone$, $\tautau$, $\onetau$ or $\tauone$ consistent with the boundary conditions $\ell$.  Explicitly, the orthonormal 
bases for $\cH_{\Sigma_2} = \cH_{\Sigma_2}^{(0,0)} \oplus \cH_{\Sigma_2}^{(1,1)} \oplus \cH_{\Sigma_2}^{(1,0)} \oplus \cH_{\Sigma_2}^{(0,1)}$ are given by: 
\begin{eqnarray} \label{e:anyontocomputationalbasisbasecase}
\cH_{\Sigma_2}^{\ell = (0,0)}: & \Bigg\{ \begin{aligned}
\ket{\ell, \oneone} &:= \sqrt{1+\tau^2} \, \oneone \ket{\vvI}, \\
\ket{\ell, \tautau} &:= \sqrt{1+ \bar\tau^2} \, \tautau\coeff\v \ket{\vvI} 
\end{aligned} \Bigg \} \nonumber \\
\cH_{\Sigma_2}^{\ell = (1,1)}: & \left\{ \begin{aligned}
\ket{\ell, \tautau} &:= 5^{1/4} \, \tautau\coeff\t \ket{\ttI}, \\
\ket{\ell, \onetau} &:= \sqrt{1+\tau^2} \, \onetau \ket{\ttI}, \\
\ket{\ell, \tauone} &:= \sqrt{1+\tau^2} \, \tauone \ket{\ttI}
\end{aligned} \right\} \\
\cH_{\Sigma_2}^{\ell = (1,0)}: & \{ \ket{\ell, \tautau} := \ket{\tvD} \} \nonumber \\
\cH_{\Sigma_2}^{\ell = (0,1)}: & \{ \ket{\ell, \tautau} := \ket{\vtD} \} \nonumber
\end{eqnarray}
Here 
\begin{equation}
\taubar := - \frac 1 \tau = \frac{1 - \sqrt 5}{2}
\end{equation}
and we are using the five operators 
\begin{align} \label{e:fiveoperators}
\oneone &= \frac{1}{1+\tau^2} \big( \h\vvI + \tau \h\O \big) \nonumber \\
\tautau\coeff\v &= \frac{1}{1+\bar{\tau}^2} \big( \h\vvI + \bar\tau \h\O \big)
&\tautau\coeff\t &= \frac{1}{1+\tau^2} \big( \tau \h\ttI + \h\S + \h\Sdag \big) \\
\onetau &= \frac{1}{1+\tau^2} \big( \h\ttI + \tau \omega^{-3} \h\S + \tau \omega^3 \h\Sdag \big)
&\tauone &= \frac{1}{1+\tau^2} \big( \h\ttI + \tau \omega^3 \h\S + \tau \omega^{-3} \h\Sdag \big) \nonumber
\end{align}
with $\omega=e^{\pi i/5}$.
These operators are the five minimal central idempotents in the algebra of operators $\cH_{\Sigma_2} \rightarrow \cH_{\Sigma_2}$ of the form $\hat s$, where $s \in \cH_{\Sigma_2}$.

We give more details and additional examples for $\cH_{\Sigma_3}$ in \appref{sec:fibonacciapplication}.

\subsection{Action of Dehn-twists and braid-moves}
The fusion basis states for a particular pants decomposition are simultaneous eigenstates of every Dehn-twist about a cylindrical segment in the decomposition.  This is a main advantage of the fusion basis.  

For concreteness, consider a cylindrical segment with boundary labels $\ell = (1, 1)$ and an anyonic fusion basis state $\ket{\ell, d} = \ket{(1, 1), \tauone}$ such that the segment is labeled $d_+ = \tauanyon$ and $d_- = \oneanyon$  in the two labelings.  Then the ribbon graph, before reducing to two dimensions, looks like 
\begin{equation}
\raisebox{-1.5cm}{\includegraphics{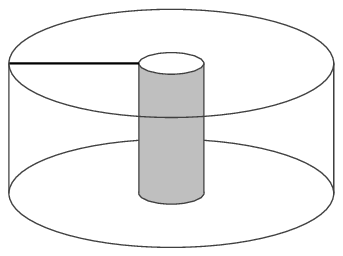}}
\end{equation}
where we show only the $\text{cylindrical segment} \times [-1, 1]$, i.e., $\Sigma_2 \times [-1, 1]$, and not the portion of $\Sigma$ to either side.  By applying Eq.~\eqnref{eq:vacuumquick}, a vacuum loop can be pulled out from a puncture within the shaded region, giving 
\begin{equation}
\raisebox{-.8cm}{\includegraphics{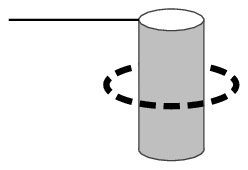}}
\end{equation}
For clarity we have stopped drawing the outer boundary of the $\text{cylindrical segment} \times [-1,1]$.  

The action of a Dehn-twist $D$ on this cylindrical segment is to apply a full rotation.  The Dehn-twist was defined in \secref{sec:mappingclassgroupaction} as acting on $\Sigma$ and ribbon graphs in $\Sigma$, but it is naturally extended to $\Sigma \times [-1, 1]$, and its action on ribbon graphs in $\Sigma \times [-1, 1]$ is easily seen to commute with the reduction to two dimensions $\Sigma \times \{0\}$.  The resulting state simplifies by moving the ribbon down to $\Sigma \times \{0\}$, pulling it over the vacuum loop with Eq.~\eqnref{eq:vacuumquick}, and then removing the kink using Eq.~\eqnref{e:dekink}: 
\begin{equation} \label{e:fibanyondehntwistexample}
\begin{split}
D
\raisebox{-.8cm}{\includegraphics{images/fibanyondehntwistexample1}}
&=
\raisebox{-.8cm}{\includegraphics{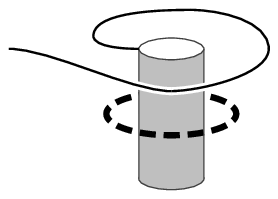}}
=
\raisebox{-.8cm}{\includegraphics{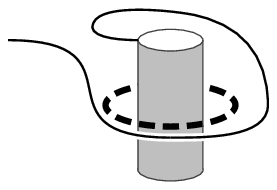}}
\\
&=
\raisebox{-.8cm}{\includegraphics{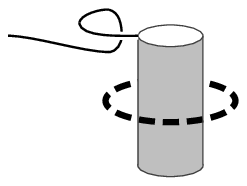}}
=
e^{-4 \pi i / 5}
\raisebox{-.8cm}{\includegraphics{images/fibanyondehntwistexample1}}
\end{split}\end{equation}
Thus $D \ket{(1, 1), \tauone} = e^{-4 \pi i / 5} \ket{(1, 1), \tauone}$.  Had the two edge labels been both $\oneanyon$ or both $\tauanyon$, then similar calculations would have shown $\ket{\ell, d}$ to be a $+1$ eigenstate of the Dehn-twist $D$.  The eigenvalue is~$e^{+4 \pi i / 5}$ when the edge labels are $\onetau$.

Now fix a pants segment of $\Sigma$, and let us compute the action of a braid move $R$ in the anyonic fusion basis.  Let $d$ be a doubled anyonic fusion diagram, and say for example that the labels on the edges incident to the pants segment are given by $\fuse{\text{\scriptsize{$\tautau$}}}{\text{\scriptsize{$\tautau$}}}{\text{\scriptsize{$\tauone$}}}$.  Then the corresponding ribbon graph in three dimensions is given by 
\begin{equation}
\raisebox{-.8cm}{\includegraphics{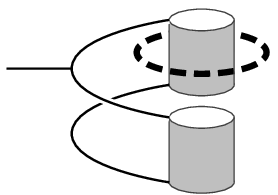}}
\end{equation}
where we have pulled a vacuum loop out from one of the boundaries of $\Sigma$.  Applying $R$ to this state yields 
\begin{equation}\begin{split}
R \;
\raisebox{-.8cm}{\includegraphics{images/fibanyonbraidexample1}}
&=
\raisebox{-1.4cm}{\includegraphics{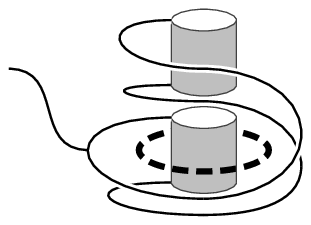}}
=
\raisebox{-1.15cm}{\includegraphics{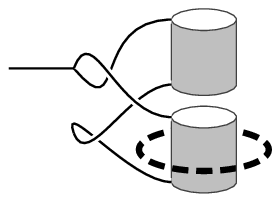}}
=
e^{3 \pi i / 5}
\raisebox{-1.2cm}{\includegraphics{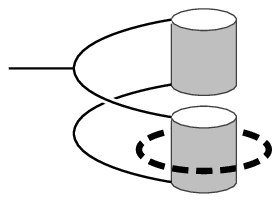}}
\end{split}\end{equation}
Here we have again pulled edges across the vacuum line and applied Eq.~\eqnref{e:dekink} to remove the twists.  

Thus the action of $R$ in this case is to swap the two portions of $d$ beyond two of the boundaries of the pants segment, and add a phase.  In fact, this would have been the action regardless of the doubled anyonic fusion diagram $d$, just with a different phase.  The phase acquired corresponds to the $R$-matrix of the doubled anyon model.  Its general form will be derived in \secref{s:stringnetsdoubledmodel}.

\subsection{Interpretation as a doubled Fibonacci anyon model}

The anyonic fusion basis can be interpreted in terms of the general theory of anyons, in particular of the doubled Fibonacci anyon model~\cite{Fidkowskietal08}.  This interpretation will justify our notation, but readers unfamiliar with anyonic theory may safely skip over it.  

The Fibonacci category $\Fib$ has two anyons, $\oneanyon$ and $\tauanyon$.  Each of the two labelings of $T$ defined in \secref{s:anyonicfusionbasis} is a Fibonacci anyon fusion diagram.  An edge of $T$ labeled $\oneanyon$ represents a vacuum anyon, while an edge labeled $\tauanyon$ represents a $\tauanyon$ anyon.  A vacuum anyon can also be interpreted as the absence of a particle.  The fusion-consistency constraint on a labeling requires that each internal vertex satisfy the fusion rules of $\Fib$: 
\begin{equation}\begin{split}
\oneanyon \anyontimes \oneanyon &\anyoncong \oneanyon \\
\oneanyon \anyontimes \tauanyon &\anyoncong \tauanyon \\
\tauanyon \anyontimes \oneanyon &\anyoncong \tauanyon \\
\tauanyon \anyontimes \tauanyon &\anyoncong \oneanyon \anyonplus \tauanyon
\end{split}\end{equation}
The final equation above should be interpreted to mean that fixing two of the incident labels at a vertex to be $\tauanyon$ and $\tauanyon$, the third label can be either $\oneanyon$ or $\tauanyon$.  However, in the other three cases, fixing two of the incident labels at a vertex also fixes the third label.  

Notice that under both Dehn-twists and braid-moves the two ribbon graphs, in $\Sigma \times \{1\}$ and in $\Sigma \times \{-1\}$, from a doubled anyon fusion diagram transform independently of each other.  Moreover, if these two ribbon graphs are swapped, then the phase of the eigenvalue of any Dehn-twist or braid-move is negated.  

Therefore, a doubled anyon fusion diagram $d$ is the fusion diagram for the \emph{doubled} Fibonacci category $\DFib \cong \Fib \otimes \Fib^*$.  That is, $\DFib$ consists of two copies of the Fibonacci category with opposite chiralities.  If we indicate the four possible pairs of labels for an edge by $\oneone$, $\onetau$, $\tauone$ or $\tautau$, then the allowed fusion rules are
\begin{align}
&\begin{matrix}
\oneone \anyontimes \onetau \anyoncong&\onetau\\
\onetau \anyontimes \oneone \anyoncong&\onetau\\
\onetau \anyontimes \onetau \anyoncong&\oneone \anyonplus \onetau
\end{matrix}
&&\begin{matrix}
\oneone \anyontimes \tauone \anyoncong&\tauone\\
\tauone \anyontimes \oneone \anyoncong&\tauone\\
\tauone \anyontimes \tauone \anyoncong&\oneone \anyonplus \tauone
\end{matrix} \nonumber \\ \\
&\begin{matrix}
\oneone \anyontimes \tautau \anyoncong&\tautau\\
\onetau \anyontimes \tautau \anyoncong&\tautau \anyonplus \onetau\\
\tauone \anyontimes \tautau \anyoncong&\tautau \anyonplus \tauone\\
\tautau \anyontimes \tautau \anyoncong&\oneone \anyonplus \onetau \anyonplus \tauone \anyonplus \tautau\\
\end{matrix}
&&\begin{matrix}
\tautau \anyontimes \oneone \anyoncong&\tautau\\
\tautau \anyontimes \onetau \anyoncong&\tautau \anyonplus \tauone\\
\tautau \anyontimes \tauone \anyoncong&\tautau \anyonplus \onetau\\
\end{matrix} \nonumber \\
&&&\begin{matrix}
\onetau \anyontimes \tauone \anyoncong&\tautau\\
\tauone \anyontimes \onetau \anyoncong&\tautau
\end{matrix} \nonumber
\end{align} 
More compactly, the above fusion rules can be written as
\begin{equation}
(a_1a_2) \anyontimes (b_1b_2) \anyoncong (a_1 \anyontimes b_1)(a_2 \anyontimes b_2)
 \enspace ,
\end{equation}
where $a_i, b_i\in \{\oneanyon, \tauanyon\}$.  They simply require that each coordinate separately be fusion-consistent.  

The fusion rules also allow us to determine the quantum dimensions of these particules, by regarding them as equations for the dimensions.  The result is
\begin{equation}\begin{split}
d_{\oneone} &=1 \\
d_{\onetau} &= d_{\tauone} = \tau \\
d_{\tautau} &= \tau+1
\end{split}\end{equation}
where, recall, $\tau = (1 + \sqrt 5)/2$.

\subsection{Changing the pants decomposition: The  $F$-matrix} \label{sec:changingpants}

To conclude this section, we would like to derive an expression for the basis change between anyonic fusion bases associated to different pants decompositions of the surface $\Sigma$.  A pants decomposition of $\Sigma$ corresponds to a rooted binary tree.  A braid-twist swaps the subtrees at a pants segment, which allows for some basis changes.  However, this only reorders subtrees, and does not change the parent-child structure of the tree.  

Any two pants decompositions having the same root puncture and the same left-to-right ordering of the leaf punctures can be related via a sequence of moves of the following type: 
\begin{equation} \label{e:changingpants}
\fusethree{}{}{}{}{}
\leftrightarrow
\fusethrees{}{}{}{}{}
\end{equation}
This move takes two pants segments connected by a cylindrical segment, and reconnects the middle subtree from one side of the lower pants segment to the other side.  

Let $d$ be a doubled anyon fusion diagram for the pants decomposition on the left-hand side of Eq.~\eqnref{e:changingpants}.  The change in basis to a superposition of doubled anyon fusion diagrams for the right-hand-side pants decomposition is most easily seen in the three-dimensional ribbon graph picture.  
From Eq.~\eqnref{eq:lastlevinwen}, we derive the two equivalent equations 
\begin{equation}\begin{split}
\raisebox{.25cm}{\rotatebox{-45}{\makebox[20pt]{\horizontalv}}} \; &= \;\;\; \frac 1 \tau \; \raisebox{.25cm}{\rotatebox{-45}{\makebox[20pt]{\verticalv}}} \; + \frac 1 {\sqrt \tau} \raisebox{.25cm}{\rotatebox{-45}{\makebox[20pt]{\verticalt}}} \\[5pt]
\raisebox{.25cm}{\rotatebox{-45}{\makebox[20pt]{\horizontalt}}} \; &= \frac 1 {\sqrt \tau} \; \raisebox{.25cm}{\rotatebox{-45}{\makebox[20pt]{\verticalv}}} \; - \frac 1 \tau \;\;\; \raisebox{.25cm}{\rotatebox{-45}{\makebox[20pt]{\verticalt}}}
\end{split}\label{eq:Fmoveeqsimple}\end{equation}
By applying these rules separately to the ribbon graph in $\Sigma \times \{1\}$ and to the ribbon graph in $\Sigma \times \{-1\}$, we obtain the anyon fusion basis change.  As for Dehn-twists and braid-moves, the two ribbon graphs transform independently of each other.  
For example, we find that 
\begin{equation}\begin{split}
\fusethree{\onetau}{\onetau}{\onetau}{\oneone}{\onetau}
&=
\frac{1}{\tau} \fusethrees{\onetau}{\onetau}{\onetau}{\oneone}{\onetau} + \frac{1}{\sqrt{\tau}} \fusethrees{\onetau}{\onetau}{\onetau}{\onetau}{\onetau} \\
\fusethree{\onetau}{\onetau}{\onetau}{\onetau}{\onetau}
&=
\frac{1}{\sqrt{\tau}} \fusethrees{\onetau}{\onetau}{\onetau}{\oneone}{\onetau} - \frac{1}{\tau} \fusethrees{\onetau}{\onetau}{\onetau}{\onetau}{\onetau}
\end{split}\end{equation}
since in the above cases the first labeling is always trivial, i.e., the ribbon graph in $\Sigma \times \{1\}$ is empty.  To give an example of the basis change where both labelings are non-trivial, note that 
\begin{equation}
\fusethree{\tautau}{\tautau}{\tautau}{\tautau}{\tautau}
=
\frac{1}{\tau} \fusethrees{\tautau}{\tautau}{\tautau}{\oneone}{\tautau} + \frac{1}{\tau^2} \fusethrees{\tautau}{\tautau}{\tautau}{\tautau}{\tautau} - \frac{1}{\tau^{3/2}} \left( \fusethrees{\tautau}{\tautau}{\tautau}{\onetau}{\tautau} + \fusethrees{\tautau}{\tautau}{\tautau}{\tauone}{\tautau} \right)
 \enspace .
\end{equation}

In general anyonic theory, the matrix of this basis-change unitary is known as the $F$-matrix; in $\Fib \otimes \Fib^*$, the $F$-matrix is the tensor product of the $F$-matrix for $\Fib$ with the $F$-matrix for $\Fib^*$.  In fact, since the $F$-matrix for $\Fib$ has only real-valued entries, it equals the $F$-matrix for $\Fib^*$.

\subsection{Recursive construction of the anyonic fusion basis states by gluing} \label{sec:recursivegluing}
Let $\Sigma_A = \Sigma_{m+1}$ and $\Sigma_B = \Sigma_{n+1}$ be two spheres
with $n+1$ and $m+1$~punctures, respectively.  By gluing $\Sigma_A$ to $\Sigma_B$ along a boundary component, one obtains the $(m+n)$-punctured sphere $\Sigma = \Sigma_{m+n}$.  Pants decompositions for $\Sigma_A$ and $\Sigma_B$ then combine to give a pants decomposition for $\Sigma$.  Anyonic fusion basis states on $\Sigma$ can thus be understood as the result of combining anyonic fusion basis states for $\Sigma_A$ and $\Sigma_B$.  

Indeed, for a labeling $\ell \in \{0,1\}^{m+n}$ of the boundary components of $\Sigma$, and a fusion-consistent doubled anyon labeling~$d$ of the pants decomposition, we have 
\begin{equation} \label{eq:gluingrule}
\ket{\ell, d}_\Sigma = \sum_{k \in \{0,1\}} \alpha_{k,d} \widehat{\ket{\ell^k_B, d_B}}_{\Sigma_B} \ket{\ell^k_A, d_A}_{\Sigma_A}
 \ .
\end{equation}
Here, $\ell_A^k$ is the restriction of $\ell$ to the boundary components of $\Sigma_A$, additionally with label $k$ assigned to the glued boundary; and $d_A$ is $d$ restricted to the pants decomposition of $\Sigma_A$.  The quantities $\ell_B^k$ and $d_B$ are defined similarly.  The coefficient~$\alpha_{k,d}$ depends on the anyon labels $b_+b_-$ assigned to the connecting edge by~$d$, according to 
\begin{equation} \label{eq:alphakddef}
\alpha_{k,d} = 
\frac{1}{\sqrt{1+\tau^2}}
\begin{cases}
1 & \text{if $b_+ b_- \neq \tautau$} \\
\frac{1}{\tau} & \text{if $b_+ b_- = \tautau$ and $k = 0$} \\
\frac{1}{\sqrt \tau} & \text{if $b_+ b_- = \tautau$ and $k = 1$}
\end{cases}
\end{equation}
This can be derived by using the first identity of Eq.~\eqnref{eq:Fmoveeqsimple} to change the ribbon graph locally so that at most one ribbon crosses from $\Sigma_A$ to $\Sigma_B$.  

\section{Realizing the Fibonacci ribbon-graph Hilbert space \texorpdfstring{$\cH_\Sigma$}{H\_Sigma}} \label{sec:realization}

In this section, we define the Fibonacci surface code.  Starting with a triangulation of a surface~$\Sigma$, certain fixed boundary conditions $\ell$, and qubits placed on each edge of the triangulation, we describe commuting local projection operators such that their joint $+1$ eigenspace is the code space $\cH^\ell_\Sigma$.  
In \secref{sec:levinwen}, we define the local projection operators.  These operators are essentially the same as those in the Levin-Wen Hamiltonian, except defined to take into account the boundary conditions.  In subsequent subsections, we argue that the code space is $\cH^\ell_\Sigma$, and describe various properties of the code.

\subsection{The Fibonacci code} \label{sec:levinwen}
Let $\Sigma = \Sigma_n = S^2 \smallsetminus (A^1 \cup \cdots \cup A^n)$ be the $n$-punctured sphere, and let $\ell$ be a $0$ or $1$ labeling of each hole $A^i$.  

Let $\cT$ be a triangulation of $\Sigma$, and let $\widehat\cT = (V, E)$ be the dual graph to $\cT$.  $\widehat\cT$ is a connected graph embedded in $\Sigma$.  The components of $\Sigma \smallsetminus \widehat\cT$ are simply connected, and we refer to them as plaquettes.  The vertices of $\widehat\cT$ all have degree at most three.  A vertex $v$ in $\cT$ can be univalent (respectively, bivalent) if the corresponding triangle in $\cT$ has two edges (resp., one edge) along a boundary of $\Sigma$.  It will be convenient to imagine boundary edges in the dual graph that correspond to the edges along the boundary in $\cT$.  A boundary edge in the dual graph leaves a vertex $v$ and crosses into some hole $A^i$.  If these boundary edges are included, then each vertex $v$ in $\cT$ is trivalent.  

Now define a labeling on the boundary edges.  For each boundary component, arbitrarily fix one boundary edge, and label that edge by the value of $\ell$ on that component.  Label all other boundary edges $0$.  We will refer to this new labeling also by $\ell$, cf.~\figref{fig:levinwenmodelboundary}.

\begin{figure*}
\centering
\begin{tabular}{c@{$\qquad$}c@{$\qquad$}c}
\subfigure[]{\label{fig:firstgluingdiscr}\raisebox{.1in}{\includegraphics[scale=1]{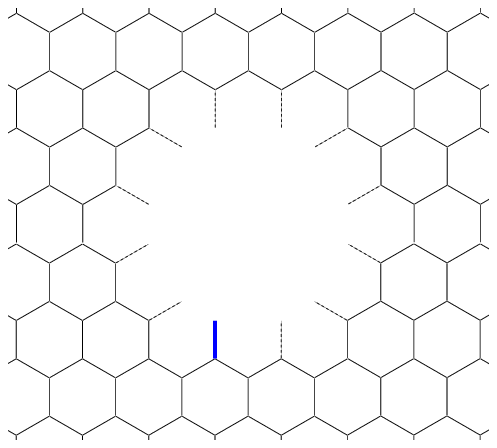}}}
&
\subfigure[]{\label{fig:secondgluingdiscr}\raisebox{.425in}{\includegraphics[scale=1]{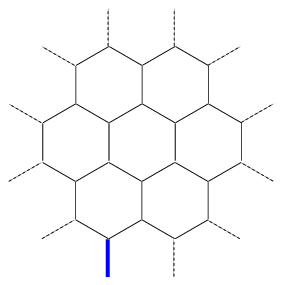}}}
&
\subfigure[]{\label{fig:gluingdiscrete}\raisebox{.1in}{\includegraphics[scale=1]{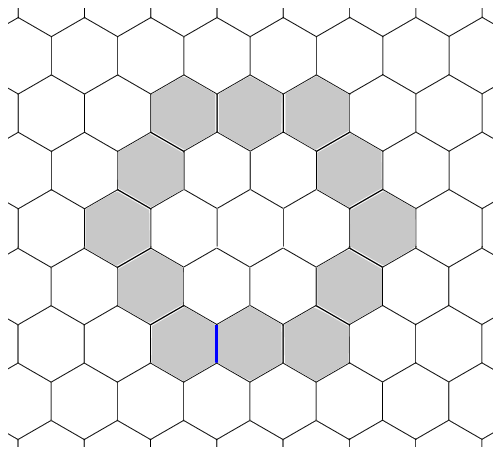}}}
\end{tabular}
\caption{The Levin-Wen model with a boundary, realizing $\cH_\Sigma^\ell$: qubits sit on the solid lattice edges.  In (a) and (b), dashed edges represent ``virtual'' qubits fixed to~$0$.  The thick, blue edge represents a virtual qubit fixed to the label~$\ell(p)$ of the marked boundary point~$p$.  
In (c) is illustrated the gluing of anyon fusion basis states described in \secref{sec:anyonfusionbasislattice}.  The $t$ virtual qubits are replaced by qubits, and prepared in states $\ket k \otimes \ket 0^{\otimes (t-1)}$, for $k \in \{0,1\}$, entangled with states on either side.  Subsequently, all plaquette operators touching the boundary (shaded plaquettes) are applied.  } \label{fig:levinwenmodelboundary}
\end{figure*}
 
The Hilbert space~$\cH_{\widehat \cT} = (\mathbb{C}^2)^{\otimes \abs{E}}$ of our system is obtained by placing a qubit on each edge $e$.  Note that boundary edges are not given qubits.  We use orthonormal bases $\{\ket{0}_e, \ket{1}_e\}$ for the local Hilbert spaces.  As in \secref{sec:netbasis}, we will use the convention that a $1$ indicates the presence of a ribbon along the edge, while a $0$ indicates the absence of a ribbon.  

The Fibonacci code is a subspace of $(\C^2)^{\otimes E}$, defined as the simultaneous~$+1$ eigenspace of a set of commuting projections.  There are two different kinds of projections: 
\begin{itemize}
\item
For every vertex $v$ of the graph, there is a vertex projection $Q_v$.  $Q_v$ depends only on the three edges incident to $v$, and on those three edges is the diagonal operator 
\begin{equation} \label{eq:qvdef}
Q_v = \sum_{\substack{i, j, k \in \{0,1\} : \\ i + j + k \neq 1}} \ketbra{ijk}{ijk}
 \enspace .
\end{equation}
That is, $Q_v$ imposes that the vertex $v$ is a legal ribbon graph vertex without dead-ends.   
If $v$ is connected to one or two boundary edges, then $Q_v$ is defined as in Eq.~\eqnref{eq:qvdef}, except with the boundary edges fixed by the labeling $\ell$.  For example, if there is one boundary edge $e$, then $Q_v$ is given by $\sum_{i, j : \, i + j + \ell(e) \neq 1} \ketbra{ij}{ij}$ on the other two edges.  

Since all of the vertex projections are diagonal in the same basis, they commute with one another.  Let $P$ be the number of plaquettes in $\widehat\cT$ and let $\Delta \cong \Sigma_{n+P}$ be the surface obtained by placing a puncture in the center of each plaquette in $\Sigma$.  Then the simultaneous $+1$ eigenspace of the vertex projections is~$\cH_\Delta^{(\ell,0^P)}$, the space of valid ribbon graphs on $\Delta$ with open boundary conditions on the new punctures.  The computational basis for the qubits corresponds to the computational basis for $\cH_\Delta^{(\ell,0^P)}$ defined in \secref{sec:netbasis}.  We call this space the {\em physical subspace} of $\cH_{\widehat \cT}$.

In more detail, we regard every physical configuration as a ribbon graph on~$\Delta$, using the embedding of $\widehat \cT$ into $\Delta$.  Because of the punctures~$\cP$, no two different computational basis states or physical configurations are identified with the same ribbon graph.  This shows that physical subspace is indeed isomorphic to $\cH_{\Delta}^{(\ell,0^P)}$.
\item
For every plaquette $p$ of the graph, there is a plaquette projection $B_p$.  $B_p$ is supported on~$\cH_\Delta^{(\ell,0^P)}$ and acts on it by adding a vacuum loop, divided by $\sqrt{1+\tau^2}$, around the puncture in the middle of $p$.   Its action on $(\C^2)^{\otimes E}$ can be obtained by reducing the resulting ribbon graph back to the computational basis.  This definition of the plaquette projection~$B_p$ in terms of a vacuum loop is given in the appendix of~\cite{LevinWen} (see \secref{sec:levinwengeneralized} for more details).

$B_p$ depends on all the edges and boundary edges incident to the vertices around the plaquette~$p$.  However, it only possibly changes the qubits on the edges circling the plaquette, i.e., it is a controlled projection, controlled by the edges with exactly one endpoint on the plaquette boundary.  

Since the vacuum loop added by $B_p$ is separated from other vertices or plaquettes, the plaquette operators commute with each other and with the vertex operators.  
\end{itemize}

By Eq.~\eqnref{eq:vacuumquick}, the plaquette operator $B_p$ has the effect of removing the puncture from plaquette~$p$, allowing ribbons to be pulled across it.  Therefore, the simultaneous $+1$ eigenspace of all of the vertex and plaquette operators is isomorphic to~$\cH_\Sigma^\ell$ (see \lemref{lem:mainisomorphism} below).  Thus we have defined a quantum error-correcting code (QECC) based on ribbon graphs.  It encodes $\cH_\Sigma^\ell$ inside of~$(\C^2)^{\otimes \abs{E}}$.  We denote the code space by~$\cH^{\ell, gs}_{\widehat \cT} \subset \cH_{\widehat \cT}$.  Here the superscript~$gs$ stands for \emph{ground space}, because $\cH^{\ell, gs}_{\widehat \cT} \cong \cH^\ell_\Sigma$ is the ground space for the Levin-Wen Hamiltonian
\begin{equation}
H^\ell_{\widehat\cT} = - \sum_{\text{plaquettes $p$}} B_p - \sum_{\text{vertices $v$}} Q_v
 \enspace .\label{eq:levinwenhamiltonian}
\end{equation}

To describe the isomorphism between $\cH^\ell_\Sigma$ and the simultaneous $+1$~eigenspace $\cH_\Delta^{(\ell,0^P)}$ of all plaquette- and vertex-operators on $\cH_{\widehat\cT} = (\C^2)^{\otimes \abs E}$ in more detail, let $B$ be the projection 
\begin{equation} \label{eq:Bdef}
B = \prod_p B_p
 \enspace .
\end{equation}
Then the following lemma, proved in \appref{app:levinwenmodel}, describes how  ribbon graphs are mapped to the qubits on $\widehat \cT$.

\begin{lemma} \label{lem:mainisomorphism}
Given a state $\ket{\Psi_\Sigma} \in \cH^\ell_\Sigma$, deform the ribbon graph (arbitrarily) to avoid all the points in~$\cP$.  The state can then be regarded as an element $\ket{\Phi_\Delta} \in \cH_\Delta^{(\ell,0^P)}$.  Then $B\ket{\Phi_{\Delta}}$ does not depend on the initial deformation (i.e., $\ket{\Phi_\Delta}$).  The map
\begin{align}
\Lambda : \cH^\ell_\Sigma &\rightarrow \cH^{\ell,gs}_{\widehat\cT}\\
 \ket{\Psi_\Sigma} &\mapsto B \ket{\Phi_{\Delta}}
 \enspace ,
\end{align} is an isomorphism (preserving the inner product).
\end{lemma}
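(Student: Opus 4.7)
I would establish \lemref{lem:mainisomorphism} in four steps: (i) $\Lambda$ is well-defined; (ii) its image lies in $\cH^{\ell,gs}_{\widehat\cT}$; (iii) it admits a two-sided inverse; and (iv) it is an isometry.

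Step (i) is the essential content. Two deformations $\ket{\Phi^1_\Delta},\ket{\Phi^2_\Delta}\in\cH_\Delta^{(\ell,0^P)}$ of the same $\ket{\Psi_\Sigma}$ differ by a finite composition of local moves of three kinds: smooth isotopies within $\Delta$ and applications of the local relations \eqnref{eq:firstlevinwen}--\eqnref{eq:lastlevinwen}, both of which are already identified in $\cH_\Delta^{(\ell,0^P)}$; and---the only nontrivial case---isotopies sweeping a strand across some puncture $p\in\cP$. For the latter, the two configurations differ only inside a small neighbourhood of $p$, and the plaquette projector $B_p$ inserts a vacuum loop encircling $p$; the vacuum-line identity \eqnref{eq:vacuumquick} then allows the strand to be pulled across this loop, showing that $B_p$ equates the two configurations. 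Since the $B_p$ mutually commute and are idempotent (see \secref{sec:levinwengeneralized}), $B=\prod_p B_p$ makes $\Lambda$ independent of the chosen deformation. Step (ii) is easy: $\ket{\Phi_\Delta}$ already lies in $\bigcap_v\ker(\mathbf 1-Q_v)$ since a ribbon graph has by definition only legal trivalent vertices and respects the boundary labelling~$\ell$, while $B_{p}B=B$ for every $p$ yields the plaquette constraints.

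For step (iii), I would define $\Lambda^{-1}:\cH^{\ell,gs}_{\widehat\cT}\to\cH^\ell_\Sigma$ by interpreting a state in $\cH^{\ell,gs}_{\widehat\cT}\subset\cH_\Delta^{(\ell,0^P)}$ as a formal sum of ribbon graphs on $\Delta$ and then forgetting the plaquette punctures (i.e., passing to the quotient by the relations of \secref{sec:stringnetdefinition}). This is automatically well-defined because the equivalence on $\cH^\ell_\Sigma$ is coarser than that on $\cH_\Delta^{(\ell,0^P)}$: the extra freedom is precisely the freedom to isotope through $\cP$. One checks $\Lambda^{-1}\Lambda=\mathrm{id}_{\cH^\ell_\Sigma}$ because forgetting punctures after inserting vacuum loops recovers the original ribbon graph, and $\Lambda\Lambda^{-1}=\mathrm{id}_{\cH^{\ell,gs}_{\widehat\cT}}$ because $B$ is the identity on its image. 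For step (iv), I would choose a point-set triangulation of $\Sigma$ refining $\cT$; the associated computational bases of \secref{sec:netbasis} are orthonormal on $\cH^\ell_\Sigma$ and, via Eq.~\eqnref{eq:vacuumquick}, correspond to an orthonormal set in the image of $\Lambda$. The normalization $1/\sqrt{1+\tau^2}$ in each $B_p$ is tuned precisely so that $B_p$ is an orthogonal projection, which makes $\Lambda$ an isometry.

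The main obstacle is the case analysis in (i): one must verify that every puncture-crossing move is absorbed by the vacuum loop at the corresponding plaquette, including the subcase where a trivalent vertex itself is swept across the puncture. In that subcase one first applies the $F$-move rule \eqnref{eq:lastlevinwen} to reduce to a clean strand-crossing, and then invokes Eq.~\eqnref{eq:vacuumquick}. Once (i) is in place, the remaining steps reduce to formal manipulations of commuting projectors together with the basis bookkeeping of \secref{sec:netbasis}.
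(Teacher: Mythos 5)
Your strategy coincides with the paper's in its essential content: well-definedness of $\Lambda$ is reduced to the fact that vacuum loops allow strands (and, after an $F$-move, trivalent vertices) to be pulled across the punctures in~$\cP$, and bijectivity is obtained by exhibiting the forgetful map back to $\cH^\ell_\Sigma$ as an explicit inverse. The paper packages the same ideas differently --- it proves an auxiliary single-puncture statement (\lemref{lem:punctureaddition}) and then inducts over the punctures in $\cP$ --- but your steps (i)--(iii) are a sound, if more monolithic, version of that argument.

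The gap is in step (iv). The observation that each $B_p$ is an idempotent (even a self-adjoint, hence orthogonal, projection) does not by itself make $\Lambda = B\circ(\text{deform and include})$ inner-product preserving: the domain and codomain carry inner products defined by \emph{different} computational bases (a point-set triangulation of $\Sigma$ versus the finer one of $\Delta$ with a vertex in every plaquette), so one must actually evaluate the norm of a state after the vacuum loops are inserted. The paper does exactly this computation in the proof of \lemref{lem:punctureaddition}, expanding in a basis of tadpoles around the new puncture, and finds $\spr{\tilde\Psi_{\Sigma'}}{\Psi_{\Sigma'}} = \frac{1}{\cD^2}\spr{\tilde\Psi_\Sigma}{\Psi_\Sigma}$ --- i.e., each normalized vacuum loop \emph{shrinks} the norm by a factor of $1/\cD$, which is why that lemma inserts an explicit compensating factor of $\cD$ into the map. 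A quick sanity check confirms this: applied to the empty diagram, $B_p$ produces $\frac{1}{\cD^2}\bigl(\text{empty} + \tau\cdot\text{loop}_p\bigr)$, two orthogonal computational basis states of $\cH_\Delta^{(\ell,0^P)}$, with squared norm $(1+\tau^2)/\cD^4 = 1/\cD^2$. So your assertion that the orthonormal computational basis of $\cH^\ell_\Sigma$ ``corresponds to an orthonormal set in the image of $\Lambda$'' is precisely the statement that requires proof, and the computation yields a nontrivial normalization factor ($\cD^{-P}$ over all $P$ plaquettes) that your argument does not account for. You need to either carry out the tadpole-basis computation of \lemref{lem:punctureaddition} or otherwise track this constant explicitly.
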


\lemref{lem:mainisomorphism} allows us essentially to forget about the graph $\widehat \cT$, and work completely in the off-lattice picture $\cH_\Sigma^\ell$.  For any state in $\cH_\Sigma^\ell$, the corresponding codeword, or ground state, $\ket{\Psi_{\widehat \cT}} \in \cH_{\widehat \cT}^{\ell,gs}$ is obtained by mapping the off-lattice ribbon graph to the graph~$\widehat \cT$, getting a state $\ket{\Phi_\Delta}$), and then applying the projection~$B$.  

It will often be convenient to use the state $\ket{\Phi_\Delta}$ to represent the ground state $\ket{\Psi_{\widehat \cT}} = B \ket{\Phi_\Delta}$.  Of course the projection $B$ is not one-to-one.  However, one can argue that $B \ket{\Phi_\Delta} = B \ket{\Phi'_\Delta}$ if and only if $\ket{\Phi_\Delta}$ and $\ket{\Phi'_\Delta}$ are equivalent under certain discrete lattice versions of the local relations in Eqs.~\eqnref{eq:firstlevinwen}--\eqnref{eq:lastlevinwen} (see~\cite{LevinWen}).

\subsection{Anyonic fusion basis states and gluing} \label{sec:anyonfusionbasislattice}

Here we describe the discrete analog of the gluing property explained in \secref{sec:recursivegluing}.  Let $\cT$ be a triangulation of a surface~$\Sigma$ that is  obtained by gluing $\Sigma_A$ to $\Sigma_B$ along a closed curve~$\gamma$.  Assume that $\gamma$ is part of the triangulation~$\cT$.  Let $\cT_A$ and $\cT_B$ be the restricted triangulations of $\cT$ to $\Sigma_A$ and $\Sigma_B$, respectively, and $\widehat{\cT}_A$ and $\widehat{\cT}_B$ their dual graphs.  \figref{fig:levinwenmodelboundary} shows these dual graphs schematically; edges intersected by~$\gamma$ are shown as ``virtual edges'' attached to the boundaries of~$\Sigma_A$ and $\Sigma_B$, respectively.  Pick an edge~$e$ from among the $t$ virtual edges.  

The images $\ket{\ell, d}_{\widehat \cT} = \Lambda \ket{\ell, d} \in \cH^{\ell, gs}_{\widehat \cT}$ of the anyonic fusion basis vectors~$\ket{\ell, d} \in \cH^\ell_\Sigma$ satisfy essentially the same properties as their off-lattice counterparts.  
Observe that plaquette operators within $\Sigma_A$ and $\Sigma_B$ do not affect qubits on the virtual edges, and can be applied before the plaquette operators $B(\gamma) = \prod_{p: p \cap \gamma \neq \emptyset}B_p$ of plaquettes intersecting~$\gamma$, by commutativity.  If we apply the isomorphism $\Lambda$ to a state decomposed as in~\eqnref{eq:gluingrule}, the result can therefore be written as
\begin{equation} \label{eq:discreteglueing}
\ket{\ell, d}_{\widehat \cT}
=
B(\gamma) \sum_{k \in \{0,1\}} \alpha_{k,d} \ket{\ell^k_A,d_A}_{\widehat{\cT}_A} \ket{\ell^k_B,d_B}_{\widehat{\cT}_B} \ket{k}_e \ket{0}^{\otimes (t-1)} 
 \enspace ,
\end{equation}
illustrated in \figref{fig:gluingdiscrete}.  Here, the states $\ket{\ell^k_A,d_A}_{\widehat{\cT}_A}$, $\ket{\ell^k_B, d_B}_{\widehat{\cT}_B}$ are the  anyonic fusion basis states associated with Hamiltonians $H^{\ell^k_A}_{\widehat{\cT}_A}$, $H^{\ell^k_B}_{\widehat{\cT}_B}$, respectively.  Thus $\ket{\ell, d}_{\widehat \cT}$ results from applying projections along~$\gamma$ to a certain superposition of product states on $\cH_{\widehat{\cT}_A} \otimes \cH_{\widehat{\cT}_A} \otimes (\mathbb{C}^2)^{\otimes t}$.  

Consider now the case $t = 1$. 
Then $B(\gamma) = B_q$ for a single plaquette $q$.  The following lemma states that the qubit on edge~$e$ directly reveals some information about the anyon labels $b_+b_-$ assigned to the connecting edge between $\Sigma_A$ and $\Sigma_B$ by the doubled fusion diagram~$d$.  

\begin{lemma} \label{lem:tadpolegluing}
Let $\widehat\cT$, $\widehat{\cT}_A$, $\widehat{\cT}_B$ be graphs as described above with $t=1$, and let $e$ be the connecting edge.  
Then the anyonic fusion basis state $\ket{\ell,d}_{\widehat \cT}$ is given by 
\begin{equation}
\ket{\ell, d}_{\widehat \cT}
= 
\begin{cases}
\ket{\ell^0_A,d_A}_{\widehat{\cT}_A} \ket{\ell^0_B,d_B}_{\widehat{\cT}_B} \ket{0}_e
&
\text{if $b_+ b_- = \oneone$} \\
\ket{\ell^1_A,d_A}_{\widehat{\cT}_A} \ket{\ell^1_B,d_B}_{\widehat{\cT}_B} \ket{1}_e
&
\text{if $b_+ b_- \in \{\onetau, \tauone\}$} \\
\frac 1 \tau \ket{\ell^0_A,d_A}_{\widehat{\cT}_A} \ket{\ell^0_B,d_B}_{\widehat{\cT}_B} \ket{0}_e + \frac{1}{\sqrt \tau} \ket{\ell^1_A,d_A}_{\widehat{\cT}_A} \ket{\ell^1_B,d_B}_{\widehat{\cT}_B} \ket{1}_e
&
\text{if $b_+ b_- = \tautau$} 
\end{cases}
\end{equation}
\end{lemma}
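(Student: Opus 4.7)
The plan is to specialize the gluing formula \eqnref{eq:discreteglueing} to $t=1$, in which case $B(\gamma)$ collapses to a single plaquette projection $B_q$, and then evaluate $B_q$ on each of the two summands indexed by $k \in \{0,1\}$. The analysis proceeds by translating to the off-lattice ribbon-graph picture via \lemref{lem:mainisomorphism}: the tensor product $\ket{\ell_A^k,d_A}_{\widehat\cT_A}\ket{\ell_B^k,d_B}_{\widehat\cT_B}$ corresponds to a ribbon graph on $\Sigma_A \sqcup \Sigma_B$, while the qubit $\ket{k}_e$ records the presence ($k=1$) or absence ($k=0$) of a ribbon crossing the gluing circle $\gamma$. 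The plaquette projection $B_q$ acts, up to the normalization $1/\sqrt{1+\tau^2}$, by inserting a vacuum loop around a puncture at the centre of $q$, which encircles $\gamma$.

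I would then split into three cases according to $b_+ b_-$, abbreviating $A_k := \ket{\ell_A^k,d_A}_{\widehat\cT_A}\ket{\ell_B^k,d_B}_{\widehat\cT_B}\ket{k}_e$. For $b_+b_- = \oneone$, no $\tauanyon$-ribbon crosses $\gamma$ in either layer of the underlying 3D ribbon graph, so the $A_1$ contribution terminates the ribbon on $e$ inside the plaquette $q$, where the vacuum loop inserted by $B_q$ closes it into a contractible loop evaluating to $\tau$ by \eqnref{e:looprule}; combined with the $A_0$ contribution and the common factor $\alpha_{k,d} = 1/\sqrt{1+\tau^2}$, only the $\ket{0}_e$ component survives with coefficient $1$. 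For $b_+b_- \in \{\onetau,\tauone\}$, exactly one $\tauanyon$-ribbon crosses $\gamma$ in one of the 3D layers, so the $A_0$ summand necessarily contains a dangling $\tauanyon$-ribbon inside $q$, which becomes a tadpole after $B_q$ insertion and vanishes by \eqnref{eq:tadpolerule}; only the $A_1$ contribution survives with coefficient $1$. For $b_+b_- = \tautau$, both 3D layers carry a $\tauanyon$-ribbon across $\gamma$, and the reduction rule \eqnref{eq:lastlevinwen} rewrites the pair in terms of configurations with zero or one ribbon on $e$, after which the prescribed $\alpha_{k,d}$ in \eqnref{eq:alphakddef} combined with the vacuum-loop insertion yields the stated coefficients $1/\tau$ and $1/\sqrt{\tau}$.

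The main obstacle will be the $\tautau$ case: both 3D layers contribute ribbons crossing $\gamma$, and one must carefully track their interaction with the vacuum loop inserted by $B_q$ through the F-move relations of \eqnref{eq:lastlevinwen}. The essential point is that the values of $\alpha_{k,d}$ in \eqnref{eq:alphakddef} have been chosen precisely so that, in the $\tautau$ sector, $B_q$ acts as the scalar $\sqrt{1+\tau^2}$ on each $A_k$ (the two crossing ribbons plus the inserted vacuum loop fuse trivially), converting $(1/\tau,1/\sqrt{\tau})/\sqrt{1+\tau^2}$ into the target coefficients $(1/\tau,1/\sqrt{\tau})$. In the $\oneone$ and $\onetau/\tauone$ cases, by contrast, the argument reduces to a single localized application of the loop-contraction rule or the tadpole rule, respectively, so these cases should go through with minimal calculation once the off-lattice picture is in place.
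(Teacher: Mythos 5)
Your strategy---specializing \eqnref{eq:discreteglueing} to $t=1$ and evaluating $B_q$ on each summand---is a genuinely different route from the paper's. The paper never passes through \eqnref{eq:discreteglueing}; it proves the general \lemref{lem:tadpolegluinggeneral} by working directly with the ribbon graph on $\Sigma\times[-1,1]$: doubling a vacuum loop around a boundary component (at cost $1/\cD^2$), applying an $F$-move to fuse the two ribbons $b_+,b_-$ crossing $\gamma$ into a single ribbon $m$, and then evaluating $B_q$ via the vacuum-line identities \eqnref{it:secondpropertyvacuum}, \eqnref{it:firstpropertyvacuum} and \eqnref{eq:Fmovevacuumstring} together with $\sum_m d_m\delta_{b_+b_-m^*}=d_{b_+}d_{b_-}$. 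Your route could in principle work, but as written it has a genuine normalization gap. $B_q$ is a projection (a product of commuting projections), so it cannot ``act as the scalar $\sqrt{1+\tau^2}$'' on any state---that would be an eigenvalue exceeding $1$. The summands $A_k$ are orthonormal, so with the coefficients of \eqnref{eq:alphakddef} the input $\sum_k\alpha_{k,d}A_k$ has norm $1/\sqrt{1+\tau^2}$, whereas the target state has norm $1$; no projection maps the former to the latter. This factor of $\cD=\sqrt{1+\tau^2}$ is precisely what the paper's proof recovers by merging the doubled vacuum loops and invoking \eqnref{eq:Fmovevacuumstring}, and your argument never accounts for it (taking \eqnref{eq:discreteglueing} entirely at face value is itself part of the problem).

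The individual cases also have defects. For $b_+b_-=\oneone$ there is no $k=1$ summand at all, since the boundary label $1$ is not boundary-consistent with $\oneone$; and even formally, a single dangling $\tauanyon$-ribbon ending inside $q$ cannot be ``closed into a contractible loop evaluating to $\tau$'' by the inserted vacuum loop---a vacuum loop is closed and cannot cap an open ribbon end, which instead forms a tadpole and vanishes by \eqnref{eq:tadpolerule}. (Symmetrically, for $\onetau,\tauone$ the $k=0$ summand is simply absent rather than killed by $B_q$.) Finally, since $e$ lies on the boundary of the plaquette $q$, the operator $B_q$ does change the qubit on $e$ and therefore does not act diagonally in the $\ket{k}_e$ decomposition: in the $\tautau$ sector, $B_q$ restricted to the span of $A_0$ and $A_1$ is a rank-one projection that mixes the two summands, so treating each $A_k$ separately as an eigenvector is unjustified. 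To repair the argument you would need either to rederive the gluing decomposition with the correct normalization in the three-dimensional picture (as the paper does), or to compute the $2\times 2$ matrix of $B_q$ on that span explicitly and track where the factor $\cD$ enters.
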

A proof of this statement is given in \appref{app:levinwenmodel}.  

\subsection{Relating ground spaces for different triangulations} \label{sec:differenttriangulations}

In preparation for presenting a scheme for computing on codewords, in \secref{sec:computation}, let us conclude this section by deriving how the code space changes under local changes to the triangulation~$\cT$.  

Let $e$ be an edge in $\cT$ that does not go along the boundary of $\Sigma$.  Let $\cT'$ be a triangulation obtained from $\cT$ by reconnecting the edge~$e$ as in \figref{fig:fmove}.  In the dual graph $\widehat\cT$, this corresponds to an operation on up to five edges to construct $\widehat\cT'$.  

\begin{figure*}
\begin{center}
\begin{equation*}
\raisebox{-5em}{	
\includegraphics[totalheight=0.20\textheight,viewport=425 213 575 408,clip]{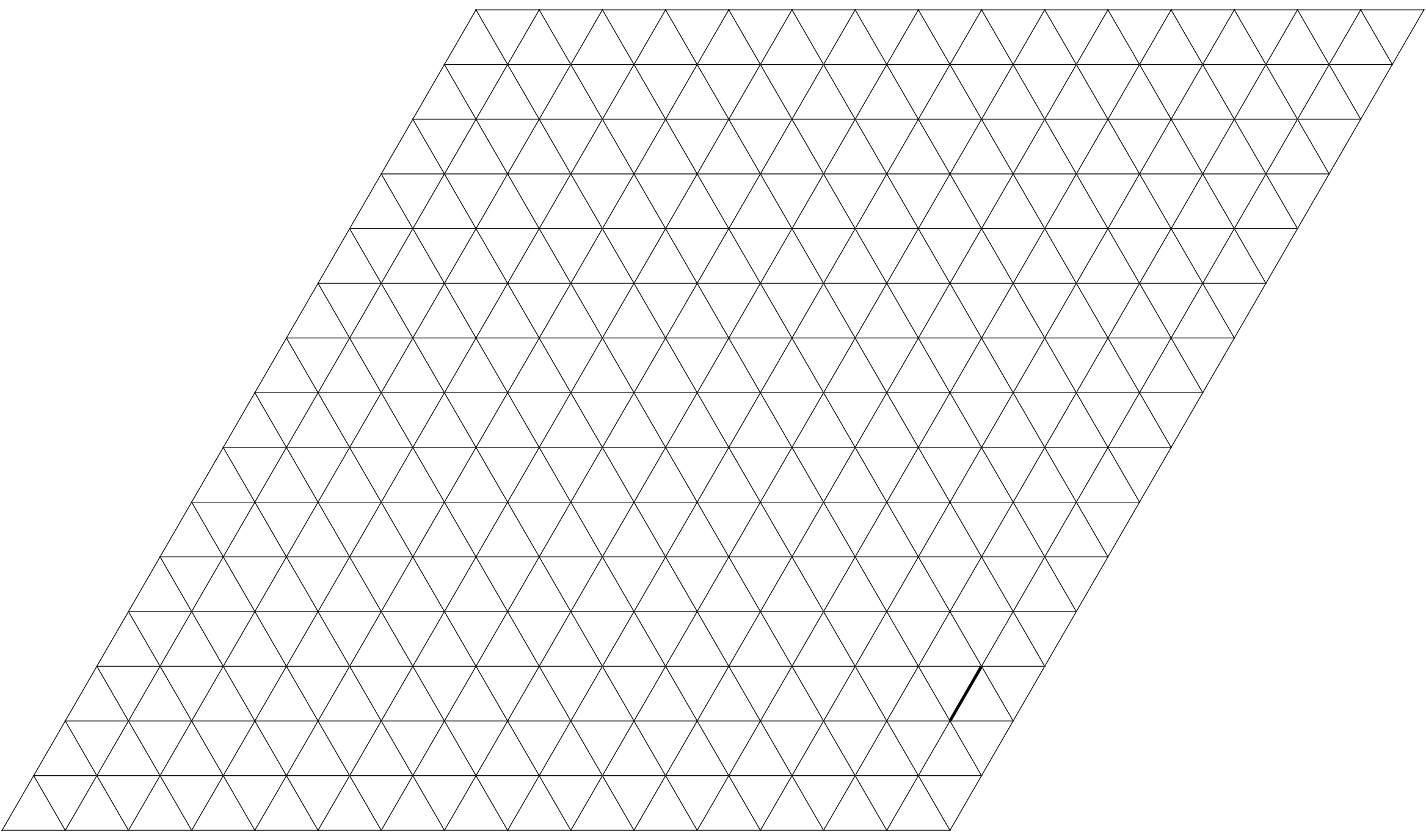}
\put(-51,63){$e$}
}
\mapsto 
\raisebox{-5em}{\includegraphics[totalheight=0.20\textheight,viewport=425 213 575 408,clip]{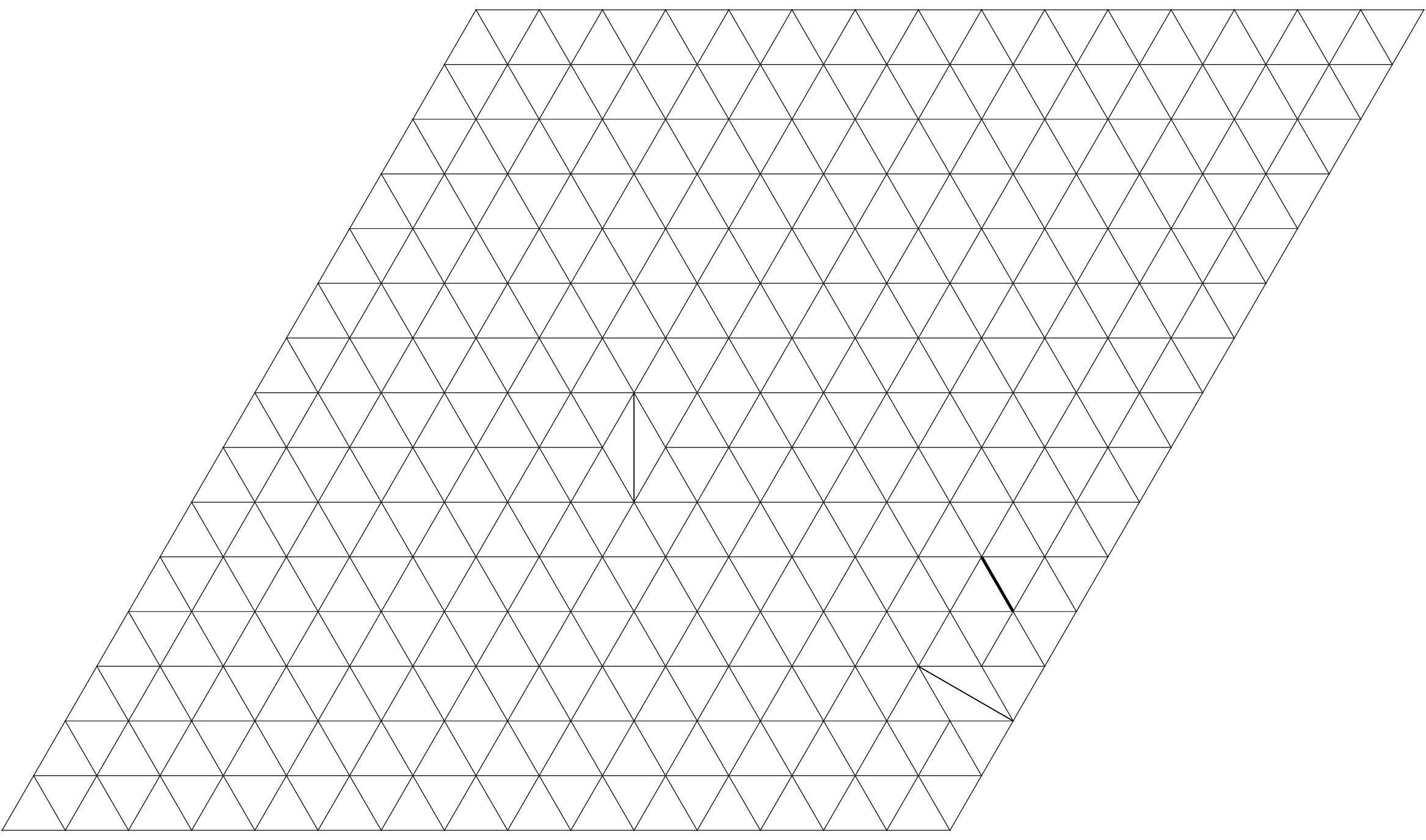}}
\qquad
\raisebox{-5em}{
\includegraphics[totalheight=0.20\textheight,viewport=450 243 600 437,clip]{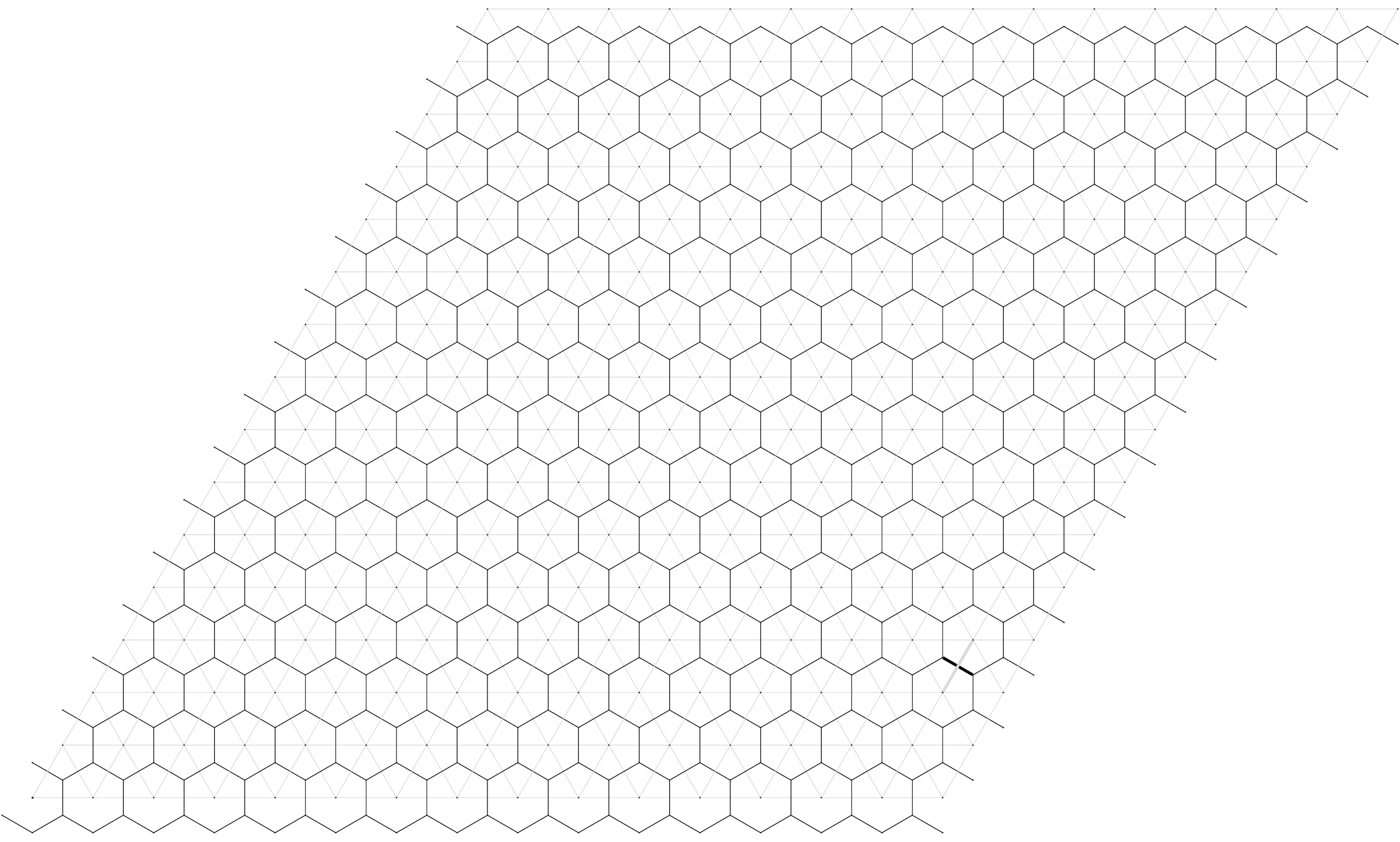}
\put(-58,61){$\widehat{e}$}
}\mapsto 
\raisebox{-5em}{
\includegraphics[totalheight=0.20\textheight,viewport=450 243 600 437,clip]{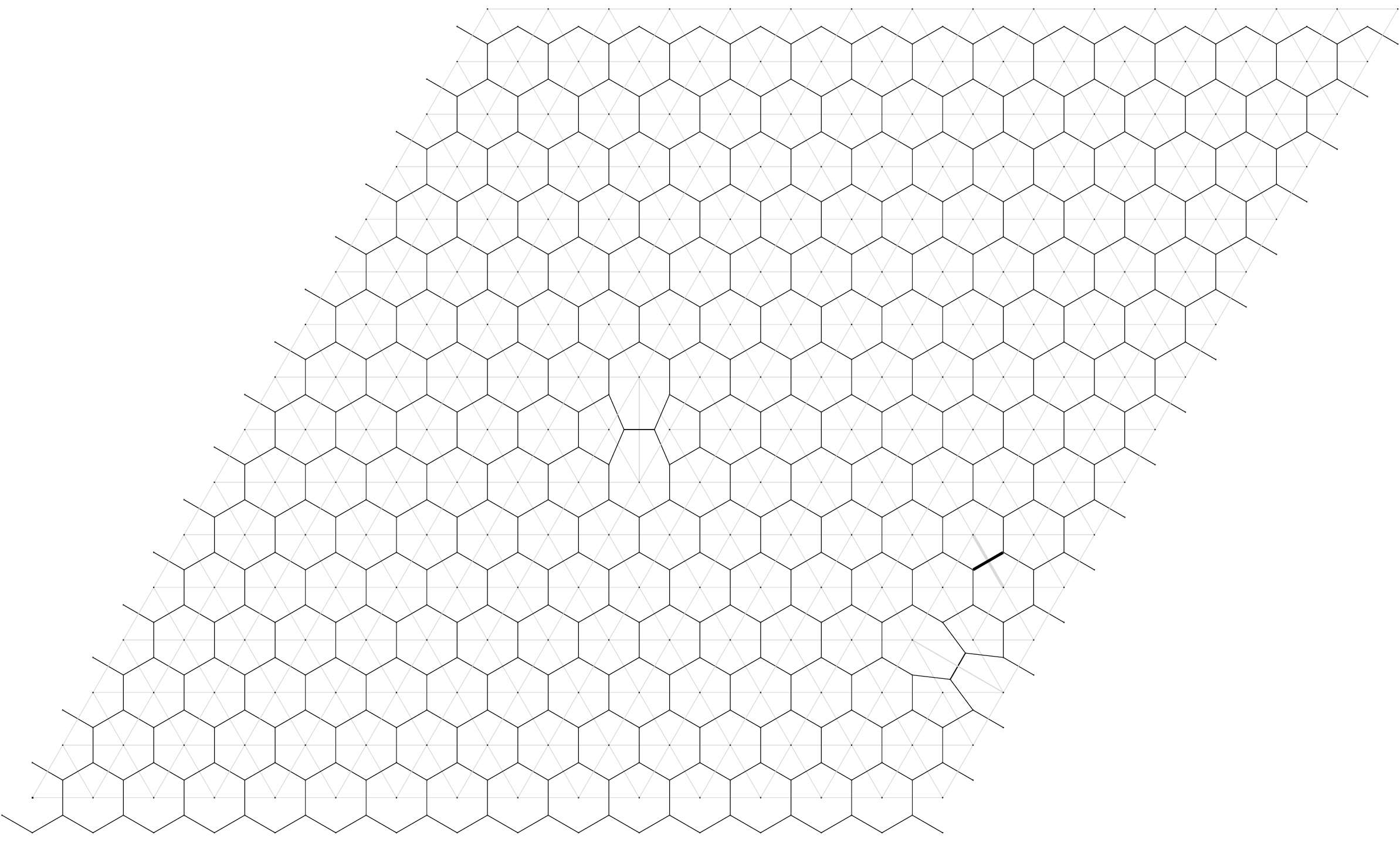}
}
\end{equation*}
\end{center}
\caption{Change of triangulation $\cT$ by flipping edge~$e$ (left), and the effect on the dual lattice~$\widehat\cT$.} \label{fig:fmove}
\end{figure*}

Since $e$ does not go along the boundary of $\Sigma$, we can use the same boundary conditions $\ell$ for $\widehat\cT'$ as for $\widehat\cT$.  Moreover, the dual graph $\widehat{\cT}'$ has the same number of edges~$\abs E$ and punctures~$P$, and can be used to define a punctured surface~$\Delta'$.  Thus as before we can define Hilbert spaces $\cH_{\widehat{\cT'}}$, $\cH^\ell_{\widehat{\cT'}}$, a Hamiltonian~$H^\ell_{\widehat \cT'}$, and the associated subspaces~$\cH^{(\ell,0^P)}_{\Delta'}$ and $\cH^{\ell,gs}_{\widehat\cT'}$.  

To study the relationship between the two ground spaces $\cH^{\ell,gs}_{\widehat\cT}$ and $\cH^{\ell,gs}_{\widehat\cT'}$, we define a linear operator $F_{\widehat e} : \cH^{(\ell,0^P)}_{\Delta} \rightarrow \cH^{(\ell,0^P)}_{\Delta'}$ by 
\newcommand*{\basise}[5]{\raisebox{-0.7em}{
\scalebox{.7}{
\begin{pspicture}(-.8,-.5)(.8,.5)
\psset{unit=.7cm}
\psset{linewidth=.8pt} 
\psset{labelsep=2.5pt} 
\SpecialCoor
\pnode(-.5,0){A}
\pnode(.5,0){B}
\psline[linestyle=#5]{-}(A)(B)
\psline[linestyle=#1]{-}([nodesep=1,angle=120]A)(A)
\psline[linestyle=#2]{-}([nodesep=1,angle=-120]A)(A)
\psline[linestyle=#4]{-}([nodesep=1,angle=60]B)(B)
\psline[linestyle=#3]{-}([nodesep=1,angle=-60]B)(B)
\end{pspicture}}}}

\newcommand*{\basisf}[5]{\raisebox{-0.7em}{\scalebox{.7}{
\begin{pspicture}(-.8,-.5)(.8,.5) 
\psset{unit=.7cm}
\psset{linewidth=.56pt} 
\psset{labelsep=2.5pt} 
\SpecialCoor
\pnode(0,.5){A}
\pnode(0,-.5){B}
\pnode(-.5,0){Aa}
\pnode(.5,0){Ba}
\pnode([nodesep=1,angle=120]Aa){a}
\pnode([nodesep=1,angle=-120]Aa){b}
\pnode([nodesep=1,angle=-60]Ba){c}
\pnode([nodesep=1,angle=60]Ba){d}
\psline[linestyle=#5]{-}(A)(B)
\psline[linestyle=#1]{-}(a)(A)
\psline[linestyle=#2]{-}(b)(B)
\psline[linestyle=#3]{-}(c)(B)
\psline[linestyle=#4]{-}(d)(A)
\end{pspicture}}}}

\begin{align} \label{eq:fmoveexplicit}
\ket{\basise{solid}{solid}{solid}{solid}{dotted}} &\mapsto \frac{1}{\tau}\ket{\basisf{solid}{solid}{solid}{solid}{dotted}} + \frac{1}{\sqrt{\tau}} \ket{\basisf{solid}{solid}{solid}{solid}{solid}} \nonumber \\[.25cm]
\ket{\basise{solid}{solid}{solid}{solid}{solid}} &\mapsto \frac{1}{\sqrt{\tau}}\ket{\basisf{solid}{solid}{solid}{solid}{dotted}}-\frac{1}{\tau}\ket{\basisf{solid}{solid}{solid}{solid}{solid}} \nonumber \\[.25cm]
\ket{\basise{solid}{solid}{dotted}{dotted}{dotted}} &\mapsto \ket{\basisf{solid}{solid}{dotted}{dotted}{solid}} &\ket{\basise{dotted}{dotted}{solid}{solid}{dotted}} &\mapsto \ket{\basisf{dotted}{dotted}{solid}{solid}{solid}} \\[.25cm]
\ket{\basise{solid}{dotted}{solid}{dotted}{solid}} &\mapsto \ket{\basisf{solid}{dotted}{solid}{dotted}{solid}} &\ket{\basise{dotted}{solid}{dotted}{solid}{solid}} &\mapsto \ket{\basisf{dotted}{solid}{dotted}{solid}{solid}} \nonumber \\[.25cm]
\ket{\basise{dotted}{solid}{solid}{dotted}{solid}} &\mapsto \ket{\basisf{dotted}{solid}{solid}{dotted}{dotted}} &\ket{\basise{solid}{dotted}{dotted}{solid}{solid}} &\mapsto \ket{\basisf{solid}{dotted}{dotted}{solid}{dotted}} \nonumber
\end{align}
in the computational bases.  Here, a solid line represents the state $\ket 1$, whereas a dotted line represents the state~$\ket 0$.  We can easily extend this to a unitary $F_{\widehat e} : \cH_{\widehat\cT} \rightarrow \cH_{\widehat\cT'}$.  (To make sense of unitarity, use the obvious isomorphism between the two spaces, each isomorphic to $(\C^2)^{\otimes \abs E}$.)  We will henceforth refer to this as an $F$-move; depending on which is more convenient, we will use either $\widehat e$ or the dual edge~$e$ to label this move.

It can be shown~\cite{KoeReiVid09} that $F_{\widehat e}$ maps the ground space $\cH^{\ell,gs}_{\widehat\cT}$
isomorphically to $\cH^{\ell,gs}_{\widehat\cT'}$.  Moreover, this isomorphism is compatible with the representation of ground states from \lemref{lem:mainisomorphism} in the following sense:

\begin{lemma}[\cite{KoeReiVid09}] \label{lem:gsrepresentationb}
The following diagram commutes, where $B'$ is the product of the plaquette terms for $\widehat\cT'$, as in Eq.~\eqnref{eq:Bdef}: 
\begin{equation}
\begin{diagram}[balance,width=2.75em,height=2.75em,tight]
\cH^{(\ell,0^P)}_{\Delta} & \rA^B & \cH^{\ell,gs}_{\widehat\cT} \\
\dA^{F_{\widehat e}} & & \dA^{F_{\widehat e}} \\
\cH^{(\ell,0^P)}_{\Delta'} & \rA^{B'} & \cH^{\ell,gs}_{\widehat\cT'}
\end{diagram}
\end{equation}
\end{lemma}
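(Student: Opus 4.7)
The plan is to reduce the commutativity of the diagram to two facts: (i) the unitary $F_{\widehat e}$ implements, at the level of ribbon graphs, the local $F$-move relation \eqnref{eq:lastlevinwen} applied at edge $e$, and (ii) this relation is already imposed in $\cH_\Sigma^\ell$.  Together with \lemref{lem:mainisomorphism} this will make commutativity automatic.

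First, I would verify the ``ribbon-graph interpretation'' of $F_{\widehat e}$.  For each of the eight computational basis states on the right-hand side of \eqnref{eq:fmoveexplicit}, the vector $F_{\widehat e}\ket{\Phi_\Delta} \in \cH^{(\ell,0^P)}_{\Delta'}$ -- viewed as a formal combination of ribbon graphs on $\Delta'$ and then (by isotoping punctures within $\Sigma$) as an element of $\cH_\Sigma^\ell$ -- coincides with $\ket{\Phi_\Delta}$ viewed the same way.  This is a term-by-term check: enumerate the legal local vertex patterns around the flipped edge $e$ (those satisfying the trivalence constraint \eqnref{eq:qvdef}) and confirm that \eqnref{eq:fmoveexplicit} is exactly what one gets by applying \eqnref{eq:lastlevinwen} -- or its trivial specialization when fewer ribbons are incident to $e$ -- to the corresponding ribbon-graph configuration.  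The coefficients $1/\tau$ and $1/\sqrt{\tau}$ in \eqnref{eq:fmoveexplicit} are precisely those in \eqnref{eq:lastlevinwen}.

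Second, I would invoke \lemref{lem:mainisomorphism} for both triangulations.  Let $\Lambda$ and $\Lambda'$ denote the resulting isomorphisms $\cH_\Sigma^\ell \to \cH^{\ell,gs}_{\widehat\cT}$ and $\cH_\Sigma^\ell \to \cH^{\ell,gs}_{\widehat\cT'}$ respectively.  The previous step shows that $\ket{\Phi_\Delta}$ and $F_{\widehat e}\ket{\Phi_\Delta}$ represent the same element $\psi \in \cH_\Sigma^\ell$, so that
\begin{equation*}
B\ket{\Phi_\Delta} \,=\, \Lambda(\psi), \qquad B' F_{\widehat e}\ket{\Phi_\Delta} \,=\, \Lambda'(\psi).
\end{equation*}
The restriction of the unitary $F_{\widehat e}$ to ground spaces is characterized in \cite{KoeReiVid09} by $F_{\widehat e}\circ \Lambda = \Lambda'$ (this is the content of the cited mapping property of $F_{\widehat e}$); granting this identification,
\begin{equation*}
F_{\widehat e}\, B\ket{\Phi_\Delta} \,=\, \Lambda'(\psi) \,=\, B' F_{\widehat e}\ket{\Phi_\Delta},
\end{equation*}
which is the desired commutativity for arbitrary $\ket{\Phi_\Delta}$ in a basis of $\cH^{(\ell,0^P)}_{\Delta}$, hence for all of it.

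The main obstacle is the careful verification of the first step: one must handle all legal local ribbon-graph patterns around $e$ -- including the boundary cases in \eqnref{eq:fmoveexplicit} where fewer than four exterior ribbons are present -- and match them to either \eqnref{eq:lastlevinwen} or a trivial specialization thereof.  A secondary subtlety is that $\Delta$ and $\Delta'$ have their $P$ plaquette-center punctures at different locations; this is harmless because these punctures lie in plaquette interiors and may be isotoped freely within $\Sigma$, so identifying a ribbon graph on $\Delta$ with one on $\Delta'$ is canonical modulo the local relations defining $\cH_\Sigma^\ell$.
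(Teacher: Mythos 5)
The first thing to note is that the paper itself offers no proof of this lemma: it is attributed wholesale to \cite{KoeReiVid09}, with only the surrounding remark that $F_{\widehat e}$ maps $\cH^{\ell,gs}_{\widehat\cT}$ isomorphically onto $\cH^{\ell,gs}_{\widehat\cT'}$. So the question is whether your argument is self-contained modulo inputs that are strictly weaker than the lemma. It is not, because of your third step. Your first two steps correctly show that $\ket{\Phi_\Delta}$ and $F_{\widehat e}\ket{\Phi_\Delta}$ project to the same element $\psi\in\cH^\ell_\Sigma$, hence $B\ket{\Phi_\Delta}=\Lambda(\psi)$ and $B'F_{\widehat e}\ket{\Phi_\Delta}=\Lambda'(\psi)$. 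But writing $\pi,\pi'$ for the (surjective, non-injective) quotient maps $\cH^{(\ell,0^P)}_{\Delta}\to\cH^\ell_\Sigma$ and $\cH^{(\ell,0^P)}_{\Delta'}\to\cH^\ell_\Sigma$, so that $B=\Lambda\circ\pi$ and $B'=\Lambda'\circ\pi'$ by \lemref{lem:mainisomorphism}, your step 1 is exactly the identity $\pi'\circ F_{\widehat e}=\pi$, and with it the commutativity of the diagram is \emph{equivalent} to $F_{\widehat e}\circ\Lambda=\Lambda'$. You then import $F_{\widehat e}\circ\Lambda=\Lambda'$ from the citation. In other words, the argument reduces the lemma to an equivalent restatement and cites that; the genuinely nontrivial content --- how $F_{\widehat e}$ interacts with the plaquette projector $B$ --- is never addressed. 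Note that the ribbon-graph interpretation of $F_{\widehat e}$ on the physical subspace cannot by itself determine $F_{\widehat e}\Lambda(\psi)$: knowing $\pi'F_{\widehat e}B\ket{\Phi_\Delta}=\psi$ only says you have landed on \emph{some} representative of $\psi$ on $\Delta'$, not on the distinguished one $\Lambda'(\psi)$.

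There are two ways to close the gap. (i) Invoke only the weaker fact stated just before the lemma, namely that $F_{\widehat e}$ maps $\cH^{\ell,gs}_{\widehat\cT}$ into $\cH^{\ell,gs}_{\widehat\cT'}$, i.e.\ $B'F_{\widehat e}B=F_{\widehat e}B$. Combined with your steps 1--2 and the easy observation that $\pi\circ B=\pi$ (filling in a puncture turns the added vacuum loop into a contractible one of value $\cD$, cancelling the normalization), one gets $F_{\widehat e}B\ket{\Phi_\Delta}=B'F_{\widehat e}B\ket{\Phi_\Delta}=\Lambda'\pi'F_{\widehat e}B\ket{\Phi_\Delta}=\Lambda'\pi\ket{\Phi_\Delta}=B'F_{\widehat e}\ket{\Phi_\Delta}$, which is the lemma. (ii) Alternatively, and this is the actual content of \cite{KoeReiVid09}, prove directly that $F_{\widehat e}$ conjugates each plaquette projector of $\widehat\cT$ into the corresponding one of $\widehat\cT'$ (only the two plaquettes adjacent to $\widehat e$ are modified, and the computation there uses the pentagon identity), giving $F_{\widehat e}BF_{\widehat e}^\dagger=B'$ and hence the lemma immediately. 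Either route requires an argument about the projectors that your write-up currently replaces with the conclusion itself.
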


\section{Computation with doubled Fibonacci anyons in \texorpdfstring{$\cH_\Sigma$}{H\_Sigma}} \label{sec:computation}

In this section, we explain how to operate on the code: how to encode qubits, prepare and measure states, and apply unitary operators on codewords.

Universality results for Fibonacci anyons were obtained by Freedman et al.~\cite{Freedmanuniversality00,LarsenWang05}, using an encoding of a logical qubit into three-anyon states: 
\begin{align} \label{eq:nontrivialchargeencoding}
\ket{0} &\mapsto \fusethree{\ \tauanyon}{\ \tauanyon}{\ \tauanyon}{\ \oneanyon}{\ \tauanyon}
&
\ket{1} &\mapsto \fusethree{\ \tauanyon}{\ \tauanyon}{\ \tauanyon}{\ \tauanyon}{\ \tauanyon}
\end{align}
This encoding was also used by Bonesteel et al.~\cite{Bonesteeletal05}, who showed how to efficiently approximate any gate in terms of a sequence of braids,  using the Solovay-Kitaev construction~\cite{Kitaevbook}.  This encoding, while efficient, has the physical disadvantage that the total anyonic charge of the code-state is non-trivial.  Hormozi et al.~\cite{Hormozi07} used an alternative encoding, with four $\tauanyon$ anyons for each qubit: 
\begin{align} \label{eq:trivialchargeencoding}
\ket{0} &\mapsto \fusefour{\tauanyon}{\tauanyon}{\tauanyon}{\tauanyon}{\oneanyon}{\oneanyon}{\oneanyon}
&
\ket{1} &\mapsto \fusefour{\tauanyon}{\tauanyon}{\tauanyon}{\tauanyon}{\tauanyon}{\tauanyon}{\oneanyon}
\end{align}
As explained in~\cite{Freedmanuniversality00,Bonesteeletal05,Hormozi07} (cf.~\cite{Kitaev03,preskill,Nayaketal08}), an arbitrary $n$-qubit circuit can be efficiently approximated if we can
\begin{itemize}
\item
Create the encoded state $\ket{0}^{\otimes n}$.  In both encodings~\eqnref{eq:nontrivialchargeencoding} and~\eqnref{eq:trivialchargeencoding}, this amounts to preparing the state of $n$~pairs of $\tau$ anyons, each pair fusing to the trivial particle~$\oneanyon$.  
\item
Execute arbitrary braids.  
\item
Measure the anyon label $\{\oneanyon,\tauanyon\}$ of a fixed edge in a fixed fusion tree basis.  
\end{itemize}

We will use the ``doubled'' versions $\oneone$ and $\tauone$ instead of $\oneanyon$ and $\tauanyon$.  The other anyons in the doubled theory, $\onetau$ and $\tautau$, will not be used for computation.

\subsection{State preparation and measurement} \label{sec:statepreparation}

A method for preparing the (unique) ground state of the Levin-Wen Hamiltonian on a sphere has been given in~\cite{KoeReiVid09}.  The essential idea is to apply a version of discretized surgery, which easily generalizes.  

Assume that we have a codeword $\ket \Psi$ for the Fibonacci code on a surface $\Sigma$, and a codeword $\ket \Phi$ on the surface $\Upsilon$.  Then~\cite{KoeReiVid09} gives a way of cutting a hole inside a plaquette of $\Sigma$, and a hole inside a plaquette of $\Upsilon$, and gluing these holes together.  Here we discuss the details needed to create specific codewords on the $n$-punctured sphere.

As explained in \secref{sec:anyonfusionbasislattice}, the gluing of codewords associated with two surfaces~$\Sigma_A$ and $\Sigma_B$ takes a particularly simple form for graphs with tadpole-like structures.  To make use of this fact, we change the triangulation to obtain such structures: \figref{fig:tadpolgenerationsummary} shows how a region $\Sigma_A$ bounded by a closed curve $\gamma$ can be isolated from the rest by a procedure~$cut(\gamma)$ consisting of $F$-moves.  The inverse procedure~$glue(\gamma)$ integrates a tadpole-like structure into a regular lattice.  

\begin{figure*}  
\begin{center} 
\raisebox{-6.5em}{\begin{picture}(200,175)(0,0)
\includegraphics[totalheight=0.25\textheight,viewport=165 80 275 170,clip]{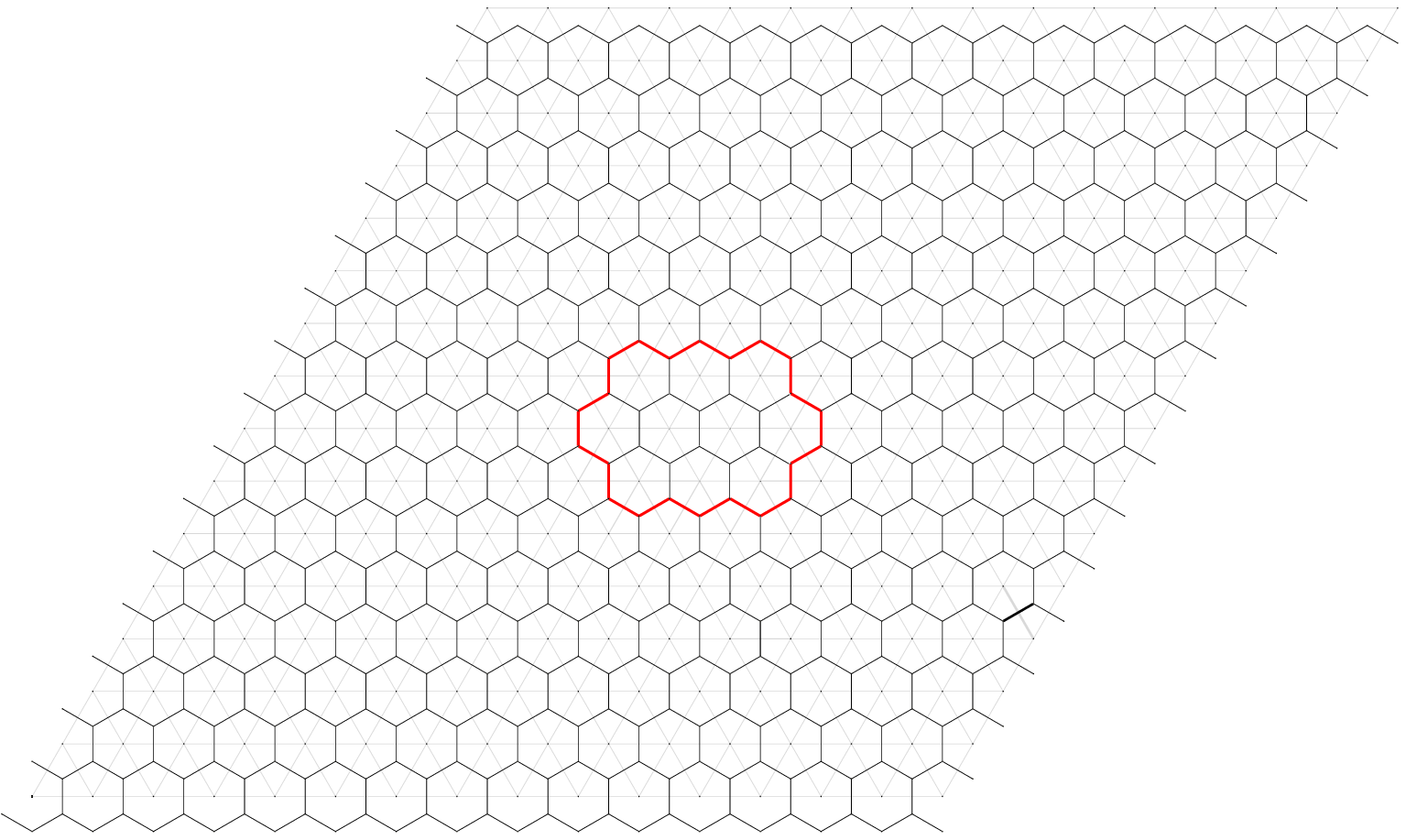}
\put(-120,85){$\Sigma_A$}
\put(-45,90){$e_{i}$} 
\put(-43,107){$e_{i+1}$} 
\put(-90,20){$\Sigma_B$}
\end{picture}}
$\begin{matrix}
cut(\gamma)\\
\longrightarrow\\
\\ \\
glue(\gamma)\\
\longleftarrow
\end{matrix}$ \enspace \enspace
\raisebox{-6.5em}{\begin{picture}(190,175)(0,0)
\includegraphics[totalheight=0.25\textheight,viewport=145 80 255 170,clip]{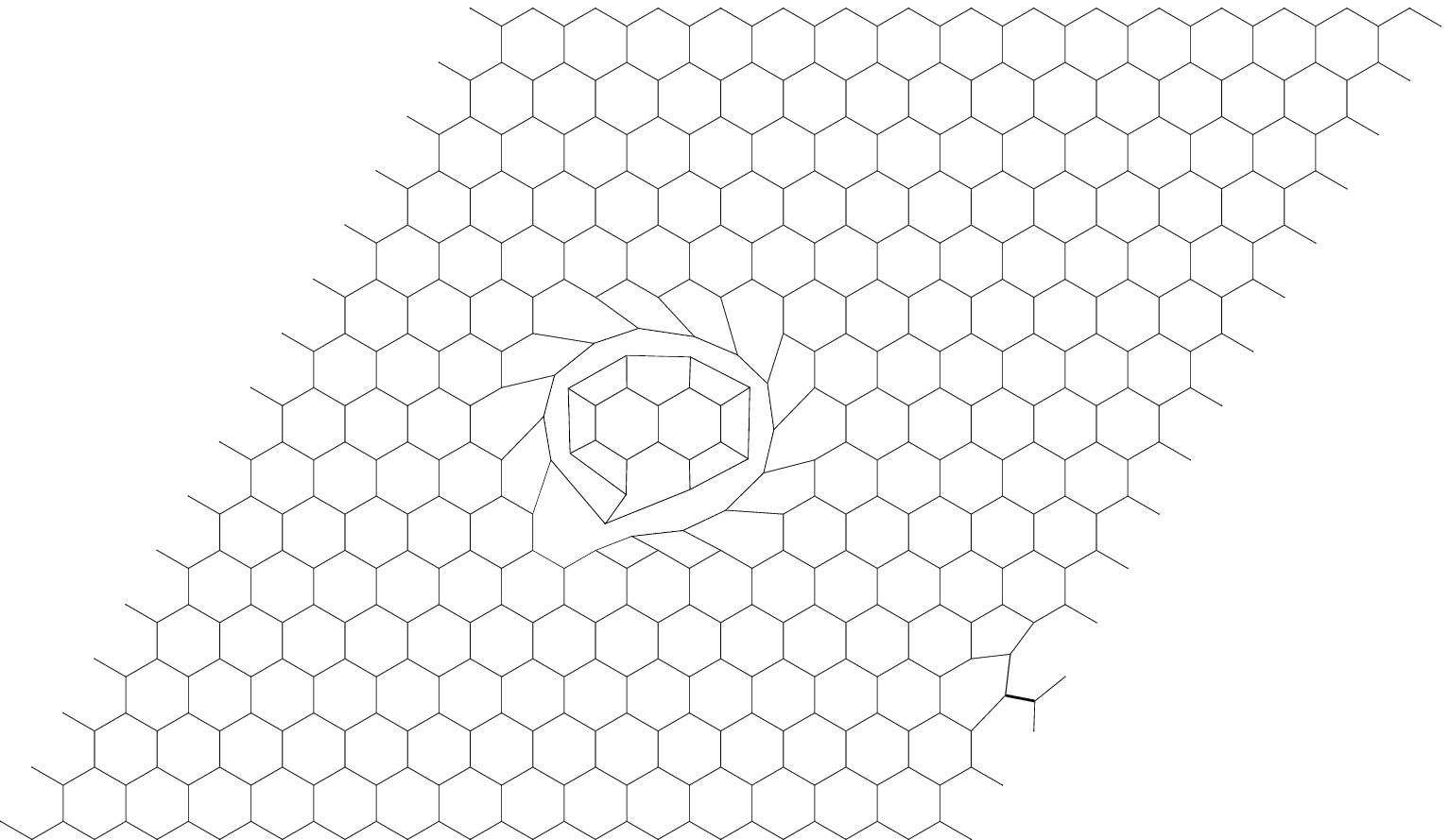}
\put(-120,85){$\Sigma_A$}
\put(-145,40){$e$}
\put(-55,20){$\Sigma_B$}
\end{picture}}
\end{center}
\caption{The procedure $cut(\gamma)$ isolates a region $\Sigma_A$ enclosed by a closed curve~$\gamma$, by making $F$-moves in sequence along the edges $e_1, e_2, \ldots, e_n$, colored red, counterclockwise along $\gamma$ on the dual graph~$\widehat \cT$.  The result is a tadpole-like structure as in \lemref{lem:tadpolegluing}, with a single edge~$e$ and plaquette~$q$ between the two regions.  The inverse operation is $glue(\gamma)$.} \label{fig:tadpolgenerationsummary}
\end{figure*}

\begin{description}
\item[Procedure $cut(\gamma)$:] Let $e_1, e_2, \ldots, e_n \in \widehat \cT$ be the sequence of edges in counterclockwise order constituting a closed path on $\widehat \cT$ along~$\gamma$.  
For $i$ from $1$ up to $n$, apply $F_{e_i}$.  
\item[Procedure $glue(\gamma)$:] Let $e_1', e_2', \ldots, e_n'$ be the edges in the deformed lattice corresponding to the sequence $e_1, e_2, \ldots, e_n$ in the original lattice.  For $i$ from $n$ down to $1$, apply $F_{e_i'}$.  
\end{description}

\noindent \figref{fig:tadpolegenerationexample} shows the intermediate steps in an example of applying $cut(\gamma)$.  Note that the $F$-moves constituting $cut(\gamma)$ do not commute with each other.  Note also that these $F$-moves lead to a degenerate triangulation. 
By adding more $F$-moves, this procedure can be modified to keep the degree of the bounding plaquette small.  

\newcommand*{\tadpolepicture}[2]{
$\begin{matrix}
\fbox{\includegraphics[totalheight=0.17\textheight,viewport=245 135 410 280,clip]{images/tadpolegeneration/out#1}}\\
#2
\end{matrix}$
}

\begin{figure*}
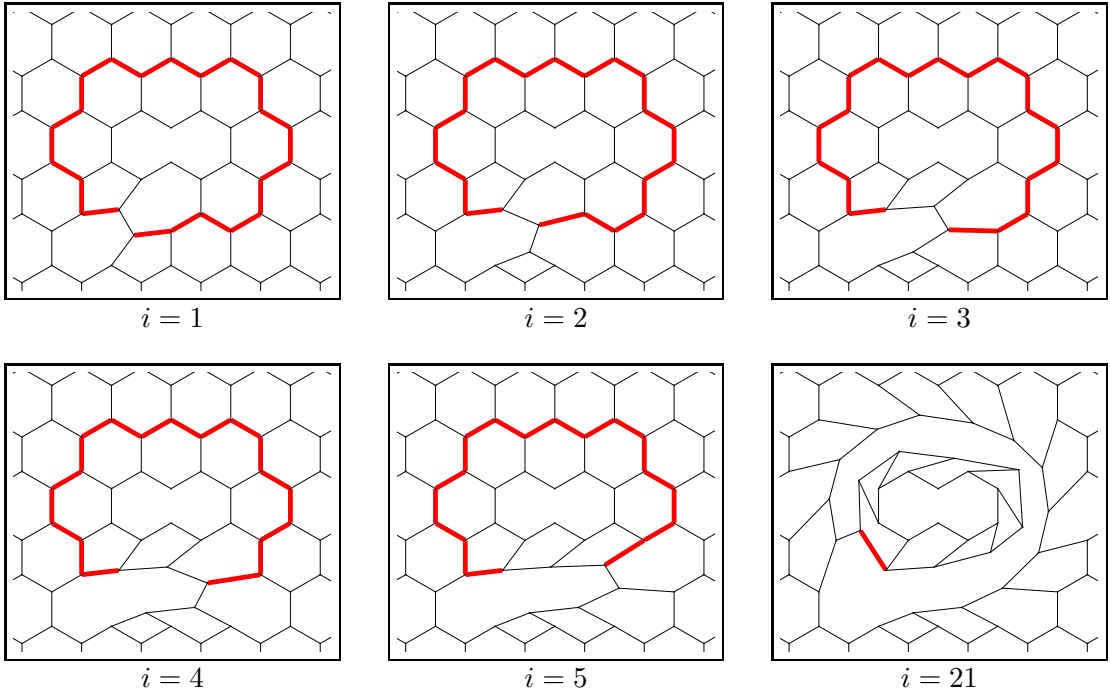

\centering
\begin{tabular}{c@{$\quad$}c@{$\quad$}c}
\subfigure{\tadpolepicture{2}{i=1}}&
\subfigure{\tadpolepicture{3}{i=2}}&
\subfigure{\tadpolepicture{4}{i=3}}\\
\subfigure{\tadpolepicture{5}{i=4}}&
\subfigure{\tadpolepicture{6}{i=5}}&
\subfigure{\tadpolepicture{22}{i=21}}
\end{tabular}
\caption{The successive figures show the intermediate steps of the $cut(\gamma)$ procedure, after applying $F_{e_i}$.  The thick, red edges are $e_1, \ldots, e_n$.  The overall effect of the procedure is shown in \figref{fig:tadpolgenerationsummary}.  } \label{fig:tadpolegenerationexample}
\end{figure*}

Our goal is to create $2n$~anyons of type~$\tauone$, which fuse to~$\oneone$ in pairs.  
This can be achieved by $n$ times cutting out a circular disk and replacing it by a three-punctured sphere with state $\fuse{\tauone}{\tauone}{\oneone}$.  

To describe this in more detail, let $\cT$ be a triangulation of $\Sigma$, a surface obtained by gluing $\Sigma_A$ to $\Sigma_B$.  Let $d_A$ and $d_B$ be fusion diagrams with each root edge carrying the trivial label~$\oneone$, and consider the state $\ket{\ell_A \ell_B, d_A d_B}_{\widehat \cT}$.  

The necessary steps for cutting out $\Sigma_A$ and replacing it by a different surface~$\Sigma_{A'}$ are especially simple when $\widehat \cT$ has the tadpole-like form considered in \lemref{lem:tadpolegluing}; we merely have to trace out the qubits on $\widehat \cT_A$, and replace them by a state $\ket{\ell_{A'}, d_{A'}}_{\Sigma_{A'}}$.  Indeed, by \lemref{lem:tadpolegluing}, we have
\begin{equation}\begin{split}
\ket{\ell_A \ell_B, d_A d_B}_{\widehat \cT}
&= \ket{\ell^0_A, d_A}_{\widehat \cT_A} \ket{\ell^0_B, d_B}_{\widehat \cT_B} \ket{0}_e \\
&\overset{\tr_{\widehat{\cT_A}}}{\longmapsto} \ket{\ell^0_B, d_B}_{\widehat \cT_B} \ket{0}_e \\
&\overset{\textrm{append}}{\longmapsto} \ket{\ell^0_{A'}, d_{A'}}_{\widehat \cT_{A'}} \ket{\ell^0_B, d_B}_{\widehat \cT_B} \ket{0}_e \\
&= \ket{\ell_{A'}\ell_B,d_{A'}d_B}_{\widehat{\cT}'}
 \enspace .
\end{split}\end{equation}
To give $\widehat \cT$ the tadpole form, choose any closed curve $\gamma$ not enclosing any holes in the surface---this ensures that the corresponding fusion diagrams~$d_A$ and~$d_B$ have trivial total charge.  Then use the $cut(\gamma)$ and $glue(\gamma)$ procedures as in the following circuit: 
\smallskip
\begin{equation}\raisebox{1cm}{
\mbox{\Qcircuit @C=1em @R=1.2em{
&                                    &\multigate{2}{cut(\gamma)} &\ustick{\widehat \cT_B}\qw&\qw&\qw             &\qw &\qw &\qw  &\multigate{2}{glue(\gamma)}& \\
\lstick{\ket{\ell_{A}\ell_B,d_{A}d_B}}&\qw &\ghost{cut(\gamma)}            &\ustick{e}\qw&\qw&\qw             &\qw &\qw &\qw  &\ghost{glue(\gamma)}&\qw&\rstick{\ket{\ell_{A'}\ell_B,d_{A'}d_B}}\\
&                                   & \ghost{cut(\gamma)}      &\ustick{\widehat \cT_{A}}\qw&\qw&\measure{\mbox{$\tr$}}& & \qw &\qw &\ghost{glue(\gamma)}&\\
\lstick{\ket{\ell_{A'},d_{A'}}_{\widehat \cT_{A'}}}&\qw &    \qw                    & \qw                       &\qw&\qw              & \qw\qwx[-1] & & &
}}\vspace{0.3cm}
}\end{equation}

In our application, the triangulation $\cT_{A'}$ can be chosen to be a coarse triangulation of $\Sigma_3$, such that the state $\ket{\ell_{A'}, d_{A'}}_{\widehat \cT_{A'}}$ corresponding to a $\tauone$ anyon pairs can be created by a projective measurement on a small number of qubits.

According to \lemref{lem:tadpolegluing}, the qubit on the tail of a tadpole is $0$ or $1$ according to whether the anyon label of the corresponding pant segment is $\oneone$ or $\tauone$, respectively.  This leads to the following circuit for measuring the anyon across a particular pant segment:
\smallskip
\begin{equation}\raisebox{1cm}{
\mbox{\Qcircuit @C=1em @R=1.2em{
                                      &                           &\multigate{2}{cut(\gamma)} &\ustick{\widehat \cT_B}\qw&\qw  \\
\lstick{\ket{\ell,d}}&\qw                        &\ghost{cut(\gamma)}        &\ustick{e}\qw           & \meter\\ 
                                      &                           & \ghost{cut(\gamma)}       &\ustick{\widehat \cT_{A}}\qw&\qw}}\vspace{0.3cm}
}\end{equation}

Implementing these procedures in the presence of noise, with periodic error correction, both measurements and preparations expose the encoded information.  Beneath a threshold, a constant measurement error rate can be reduced by taking the majority of multiple measurement outcomes.  Errors in preparation can be efficiently dealt with using the composite Fibonacci anyon distillation scheme of~\cite{Koenig09distillation}, in order to concentrate the entropy into specific regions using reversible gates.

\subsection{Implementing diffeomorphisms on \texorpdfstring{$\Sigma$}{Sigma} by changes of triangulation} \label{sec:implementingdiffeomorphisms}

Recall that the mapping class group of $\Sigma$ acts on $\cH_\Sigma$ by deforming embedded ribbon graphs.  If $\cT$ is a triangulation of $\Sigma$, such that the dual graph $\widehat \cT$ has $N$ edges, then $\cH_{\widehat \cT} \cong (\C^{2})^{\otimes N}$.  In this section, we show how to implement the mapping class group's action on $\cH_{\widehat \cT}^{\ell, gs} \cong \cH_\Sigma^\ell$---both Dehn-twists and braid-moves---by local unitary operators on~$(\C^{2})^{\otimes N}$.  As before, the implementation is based on changes of triangulation using the operators~$F_e$ from Eq.~\eqnref{eq:fmoveexplicit}.  

Consider first a Dehn-twist, specified by a simple closed curve~$\gamma$, as in Eq.~\eqnref{eq:dehntwistright}.  
Assume that $\gamma$ is supported on edges of the triangulation~$\cT$, and let $\Sigma_L$ and $\Sigma_R$ be the triangulated surfaces on either side of~$\gamma$.  The Dehn-twist operator $D(\gamma)$ is obtained by applying $F$-moves along edges of one of these surfaces, more precisely by the following procedure.

\begin{description}
\item[Dehn-twist $D(\gamma)$:] 
Let $\gamma$ consist of~$n$ edges. 
Repeat the following for rounds $r = 1, \ldots, 2(n-1)$:

In round~$r$, consider the triangulation $\cT^{r-1}$ constructed after $r-1$ rounds.  $\cT^0 = \cT$ is the initial triangulation.  Let $T_0, \ldots, T_{n-1}$ be the triangles bordering $\gamma$ in $\Sigma_R$, in counterclockwise order.  Assume that each triangle has exactly two vertices on $\gamma$; if true in the first round, then this will hold always.  Let $v_i$ be the vertex shared between $T_i$ and $T_{i + 1 \pmod n}$, let $w_i$ be the vertex of $T_i$ not in $\gamma$, and let $e_i$ be the edge $(v_i, w_i)$.  
For $i$ from $1$ to $n$, apply the maps $F_{e_i}$ to obtain the triangulation $\cT^r$.  
\end{description}

\newcommand*{\braidingpicturedual}[1]{
\raisebox{.1in}{\fbox{\includegraphics[totalheight=0.165\textheight,viewport=245 130 600 405,clip]{images/braidingfigures/outdual#1}}}
}

\begin{figure*}
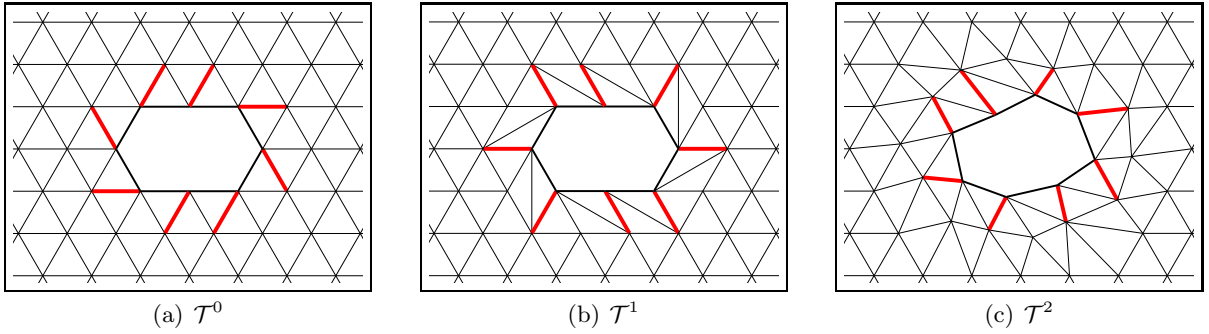

\centering
\begin{tabular}{c@{$\quad$}c@{$\quad$}c}
\subfigure[$\cT^0$]{\braidingpicturedual{0}}&
\subfigure[$\cT^1$]{\braidingpicturedual{8}}&
\subfigure[$\cT^2$]{\braidingpicturedual{16}}
\end{tabular}
\caption{
From left to right, the triangulations $\cT^0$, $\cT^1$ and $\cT^2$ obtained while implementing a Dehn-twist about the inner curve~$\gamma$.  The portion $\Sigma_L$ of the surface within $\gamma$ is not shown.  The thick, red edges are the locations of $F$-moves in the next round, i.e., the edges $\{ e_i \}$.  
} \label{fig:duallatticedeformationfigure}
\end{figure*}

Note that the $F$-moves within each round commute, so can be applied in an arbitrary order.  \figref{fig:duallatticedeformationfigure} shows an example of this procedure.  
A clockwise Dehn-twist can be defined similarly.  

With the isomorphism $\Lambda$ described in \lemref{lem:mainisomorphism}, and the explicit form~\eqnref{eq:fmoveexplicit} of the maps $F_e$ and \lemref{lem:gsrepresentationb}, it is straightforward to see that the operator $D(\gamma)$ implements a ribbon-graph Dehn-twist on $\cH_\Sigma \cong \cH^{gs}_{\widehat \cT}$, as studied in \secref{sec:mappingclassgroupaction}.  Figures~\ref{fig:stringnetdeformationaction} and~\ref{fig:stringnetdeformationactionlast} show an example, tracking through $F$-moves states $\ket \Phi \in \cH^{phys}_{\widehat \cT}$ that represent ground states.

A half-twist twists along a curve~$\gamma$ by $\pi$ instead of $2\pi$.  Define an operator $\sqrt{D(\gamma)}$ on $(\C^2)^{\otimes N}$ to execute only rounds $r = 1, \ldots, n-1$ of $D(\gamma)$.  
The braid move of Eq.~\eqnref{eq:braidpicture3d} can then be implemented 
by three half Dehn-twists 
about the curves $\gamma_A$, $\gamma_B$ and $\gamma_C$ given by
\begin{equation}
\raisebox{-3em}{\pantsdehnind}
\end{equation}
Indeed, we have
\begin{equation}\begin{split}
\sqrt{D(\gamma_C)} \sqrt{D(\gamma_B)} \sqrt{D(\gamma_A)^{-1}} \pantsttt
&= \sqrt{D(\gamma_C)} \pantsdehnfirst \\
&= \pantsdehnsecond = \pantsbraided
\end{split}\end{equation}

Therefore both braids and Dehn-twists on the ribbon-graph Hilbert space can be implemented using $F$-moves.  

\newcommand*{\braidingpicture}[1]{
$\begin{matrix}
\fbox{\includegraphics[totalheight=0.19\textheight,viewport=320 160 580 390,clip]{images/braidingfigures/out#1}}\\
i = #1
\end{matrix}$
}
\newcommand*{\braidingpicturespecial}[2]{
$\begin{matrix}
\fbox{\includegraphics[totalheight=0.19\textheight,viewport=320 160 580 390,clip]{images/braidingfigures/out#1#2}}\\
i = #1'
\end{matrix}$
}

\begin{figure*}
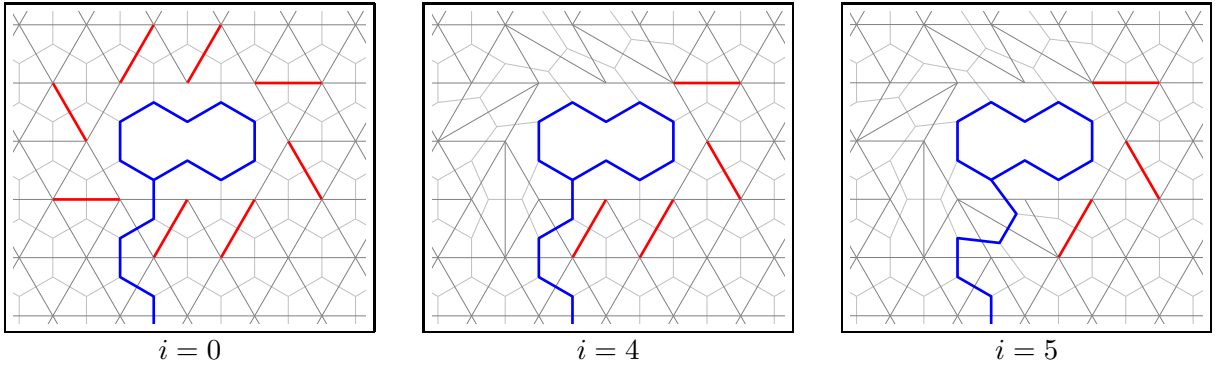

\centering
\begin{tabular}{c@{$\quad$}c@{$\quad$}c}
\subfigure{\braidingpicture{0}}&
\subfigure{\braidingpicture{4}}&
\subfigure{\braidingpicture{5}}
\end{tabular}
\caption{This figure shows for various $i$ how $i$ $F$-moves in the first round $(r=1)$ of the Dehn-twist $D(\gamma)$ affect a ribbon graph $\ket{\Phi} \in \cH^{(\ell,0^P)}_\Delta$ embedded in the dual graph $\widehat{\protect\cT}$.  } \label{fig:stringnetdeformationaction}
\end{figure*}

\begin{figure*}
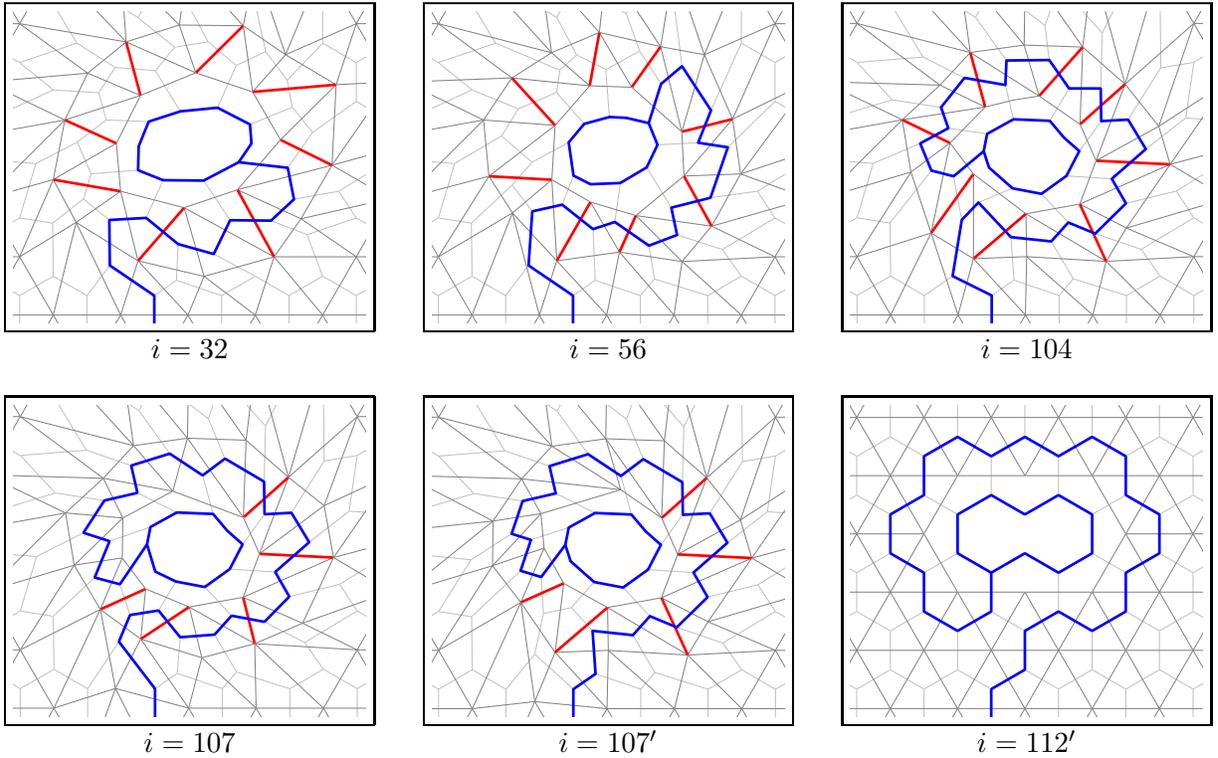

\centering
\begin{tabular}{c@{$\quad$}c@{$\quad$}c}
\subfigure{\braidingpicture{32}}&
\subfigure{\braidingpicture{56}}&
\subfigure{\braidingpicture{104}}\\
\subfigure{\braidingpicture{107}}&
\subfigure{\braidingpicturespecial{107}{b}}&
\subfigure{\braidingpicturespecial{112}{c}}
\end{tabular}
\caption{This figure continues \figref{fig:stringnetdeformationaction} for larger $i$, in order to illustrate that $D(\gamma)$ indeed corresponds to the deformation of an off-lattice ribbon graph by a Dehn-twist.  The half-Dehn-twist $\sqrt{D(\gamma)}$ is implemented by $i = 56$ $F$-moves.  After the $107$-th $F$-move, we switch from $\ket \Phi \in \cH^{(\ell, 0^P)}_\Delta$ to a different representative $\ket{\Phi'} \in \cH^{(\ell, 0^P)}_\Delta$ for convenience (note that $B \ket \Phi = B \ket{\Phi'}$).  } \label{fig:stringnetdeformationactionlast}
\end{figure*}
\section{Quantum computation with Turaev-Viro codes from general anyon models} \label{sec:generalanyonmodels} 

Although for concreteness and to minimize notational complexity we have focused on the doubled Fibonacci anyon model, the calculations and methods generalize.  In this section and \appref{app:levinwenmodel}, we will go over the construction of surface codes for other doubled anyon models.  \secref{sec:parametersgeneralanyon} begins by collecting the data needed to specify an anyon model.  \secref{sec:stringnethilbertspacegeneral} introduces the ribbon graph Hilbert space $\cH_\Sigma$ associated with a surface~$\Sigma$ and includes the derivations of a number of useful local identities.  \secref{s:stringnetsdoubledmodel} introduces the doubled fusion basis states by considering ribbon graphs in $\Sigma \times [-1,1]$.  \secref{sec:levinwengeneralized} describes how the ribbon graph Hilbert space $\cH_\Sigma$ is realized as the ground space of the Levin-Wen Hamiltonian.

\subsection{Background on categories}

While we only rely on basic definitions of ``anyon models'' for the generalization discussed below,  some category-theoretic terminology is required to set our work in context.  A thorough discussion of these concepts is beyond the scope of this introductory article.  We restrict ourselves to a summary  and refer the interested reader to the literature for detailed definitions. 

Roughly, a {\em spherical category} as introduced by Barrett and Westbury~\cite{BarrettWestbury93spherical} is a set of data which gives rise to an isotopy invariant of trivalent directed labeled planar graphs embedded in the two-sphere.  A corresponding ``coherence theorem''~\cite{BarrettWestbury93spherical} guarantees consistency (of local rules) under a finite set of conditions on the data.  More specific properties are needed in our context: a {\em unitary braided fusion category} (see, e.g., the appendix of~\cite{KitaevAnyons}) additionally has a notion of braiding, and is equivalently referred to as a {\em unitary ribbon category}.  The reason for this terminology is that the braiding notion allows to make sense of crossing edges (i.e., non-planar graphs) and also allows to introduce a notion of twisting.  This leads to an invariant of ribbon tangles, as studied by Reshetikhin and Turaev~\cite{ReshTur90}.  The coherence theorem in this context is more commonly referred to as Mac Lane's theorem~\cite{MacLane98}.  Finally, a {\em unitary modular tensor category} is a unitary ribbon category with a non-singular $S$-matrix, as defined below; the terminology here stems from the fact that this give rise to a (projective) representation of the modular group.  

There are multiple ways of obtaining new categories from old ones.  The {\em categorical double} or {\em Drinfeld centre} $D\cC$~(see~\cite{Mue03} or~\cite[Section XIII.4]{Kassel95}) of a (not necessarily braided) fusion category~$\cC$ is always a braided category.  Moreover, if~$\cC$ is spherical, then $D\cC$ is modular, as shown by M\"uger~\cite{Mue03}.  The category~$D\cC$ has a particularly simple structure if~$\cC$ is itself already modular: in this case, $D\cC \cong \cC \otimes \cC^*$ is isomorphic to the direct product of~$\cC$ and the conjugate category $\cC^*$ obtained by complex conjugation of all data (see below).  We formulate all our results for this case.

\subsection{Quantum codes from categories}

A Turaev-Viro TQFT can be defined using any spherical category~$\cC$.  In particular, the arguments given below in \secref{sec:turaevvirocode} imply the following: For any finite unitary spherical category~$\cC$, and for any triangulated surface~$\Sigma$, with or without boundary, there is a corresponding quantum error-correcting code.  This code has qudits assigned to the edges, where $d$~is equal to the number of simple objects in~$\cC$.  Unless~$\cC$ is multiplicity-free, the triangles also need qudits.   

For certain categories~$\cC$, this code can be expressed as the ground state of a Levin-Wen local stabilizer Hamiltonian~\cite{LevinWen}.  
Here we focus on the case where $\cC$ is modular, and show that the anyons of the code are from the category~$\cC\otimes\cC^*$. A more general theorem~\cite{Turaevpers10}, proved in the modular case in~\cite{Walker91,Turaev94book},  asserts that if~$\cC$ is any finite semisimple spherical category, then the Turaev-Viro or Barrett-Westbury TQFT is the same as the Reshetikhin-Turaev TQFT for the quantum double~$D\cC$ (see e.g.,~\cite{Mue03a}).

\subsection{Parameters of an anyon model} \label{sec:parametersgeneralanyon}
Consider an anyon model described by a 
tensor category~$\cC$.  Such a model has particle types $\{\vac, i, i^*, j, j^*,\ldots \}$, where $^*$ denotes charge conjugation and $\vac = \vac^*$ is the trivial particle.  Let $d_i = d_{i^*}$ be the quantum dimension of particle~$i$ (note that $d_\vac = 1$), and let $\cD = \sqrt{\sum_i d_i^2}$ be the total quantum dimension.  Let $\delta_{abc^*} = 1$ if the fusion space~$V^{c}_{ab}$ is at least one-dimensional, and $0$~otherwise.  Assume for simplicity that the anyon model has no fusion multiplicities.  (Our results directly generalize to anyon models with fusion multiplicities.)  These fusion rules satisfy
\begin{equation} \label{eq:fusionassociativity}
\sum_m \delta_{i j m^*}\delta_{m k l^*} = \sum_m \delta_{j k m^*} \delta_{i m l^*}
 \enspace ,
\end{equation}
which expresses associativity of fusion. The quantum dimensions satisfy
\begin{equation} \label{eq:qdimensionsidentity}
d_i d_j = \sum_k \delta_{i j k} d_k
 \enspace .
\end{equation}

It is further necessary to specify the operations for fusing and braiding anyons.  The fusion tensor, or $F$-move, for changing fusion bases has entries
\begin{equation}
\fusethreedirected{i}{j}{k}{m}{\ell} = \sum_n F^{i j m^*}_{k \ell^* n} 
\fusethreesdirected{i}{j}{k}{n}{\ell}
 \enspace .
\end{equation}
It satisfies the five properties, for all $i, j, \ldots, s$, 
\begin{align} \label{eq:levinwensymmetries}
\def\customspacing{.2cm}
\begin{tabular}{r r@{$\,=\,$}l}
\small{physicality:} 
&
$F^{ijm}_{k \ell n} \delta_{ijm} \delta_{k \ell m^*}$ & $F^{ijm}_{k \ell n} \delta_{i \ell n} \delta_{jkn^*}$
\\[\customspacing]
\small{pentagon identity:}
&
$\sum_n F^{m \ell q}_{kpn} F^{jip^*}_{mns} F^{jsn}_{\ell kr}$ & $F^{jip^*}_{q^*kr} F^{r^*iq^*}_{m \ell s}$
\\[\customspacing]
\small{unitarity:}
& 
$(F^{ijm}_{k \ell n})^*$ & $F^{i^*j^*m^*}_{k^*\ell^*n^*}$
\\[\customspacing]
\small{tetrahedral symmetry:} 
&
$F^{ijm}_{k \ell n} = F^{jim}_{\ell kn^*}$ & $F^{\ell k m^*}_{jin} = F^{imj}_{k^* n \ell} \sqrt{ \frac{d_m d_n}{d_j d_\ell} }$
\\[\customspacing]
\small{normalization:} 
&
$F^{ii^*\vac}_{j^*jk}$ & $\sqrt{\frac{d_k}{d_i d_j}} \delta_{ijk}$
\end{tabular}
\end{align}

The braiding operations, called $R$-moves, are given by 
\begin{equation} \label{eq:Rmatrixfirstdef}
\arrowbraiddiagram{j}{i}{k} = R^{ij}_k \arrowfusiondiagram{j}{i}{k}
\qquad
\arrowbraidoppositediagram{i}{j}{k} = (R^{i^*j^*}_{k^*})^* \arrowfusiondiagram{i}{j}{k}
 \enspace .
\end{equation}
The entries satisfy $\abs{R^{ij}_k} = 1$ and the hexagon identities: 
\begin{equation}\begin{split}
R^{ki}_m F^{k^* i^* m}_{\ell j^* g} R^{kj}_g &= \sum_n F^{i^* k^* m}_{\ell j^* n} R^{k n}_\ell F^{j^* i^* n}_{\ell k^* g} \\
R^{ik}_m F^{k i m^*}_{\ell^* j g^*} R^{jk}_g &= \sum_n F^{i k m^*}_{\ell^* j n^*} R^{nk}_\ell F^{j i n^*}_{\ell^* k g^*}
 \enspace .
\end{split}\end{equation}

Combined with $F$-moves, $R$-moves allow for the resolution of crossings, via, e.g., 
\begin{equation} \label{eq:overcrossingresolution}
\threeDcrossingunresolved{i}{j} = \sum_k \sqrt{\frac{d_k}{d_i d_j}} \delta_{i j k^*} R^{ij}_k\threeDcrossingresolved{i}{j}{k}
 \enspace .
\end{equation}
They also allow for defining the topological phases and the topological $S$-matrix.  The topological phases are given by $\theta_i = (R^{i^* i}_{\vac})^*$ and allow a ribbon to be untwisted: 
\begin{equation} \label{eq:topphase}
\threeDlinetwistright{i} = \theta_i \threeDline{i}
 \enspace .
\end{equation}
The topological $S$-matrix is defined by
\begin{equation}
S_{ij} = \frac{1}{\cD}\ \Smatrix{i}{j}
 \enspace .
\end{equation}
This matrix is symmetric and unitary; it satisfies
\begin{equation}
S_{ij} = S_{ji} = S_{i^*j^*} = S_{j^*i^*}, \qquad S_{i\vac} = d_i / \cD
 \enspace .
\end{equation}

For an anyon model with particles~$\{ \vac_+, a_+, b_+, \ldots \}$ described by a tensor category~$\cC_+$, we can define a dual category~$\cC_-$ with anyons~$\{ \vac_-, a_-, b_-, \ldots \}$.  Here~$a_-$ is the same as~$a_+$, but with opposite chirality, i.e., the topological phase is given by $\theta_{a_-} = \theta^*_{a_+}$.  Furthermore, the $R$-matrix is replaced by its adjoint, which for the multiplicity-free case considered here boils down to $R_{a_-}^{b_-c_-} = (R_{a_+}^{b_+c_+})^*$.  
The {\em double} of~$\cC = \cC_+$ is the category~$D\cC = \cC_+ \otimes \cC_-$; its particles are $\{a_+ \otimes b_-\ |\ a_+ \in \cC_+, b_- \in \cC_-\}$, and the braiding and fusion matrices are tensor products of those of~$\cC_+$ and~$\cC_-$.  (Note: Here we used the assumption that $\cC$ is modular.  For a general category~$\cC$, the double $D\cC$ need not be isomorphic to~$\cC_+\otimes\cC_-$.)

\subsection{The ribbon graph Hilbert space \texorpdfstring{$\cH_\Sigma$}{H\_Sigma}} \label{sec:stringnethilbertspacegeneral}

Using the fusion rules $\delta_{ijk}$, the quantum dimensions $d_i$ and the $F$-tensor $F^{ijm}_{k \ell n}$,  we can define the ribbon graph Hilbert space $\cH_\Sigma$ associated with a surface $\Sigma$ as in \secref{sec:stringnetdefinition}.  A (colored) ribbon graph is a graph with labeled directed edges embedded into $\Sigma$, with vertices of degree two and three in the interior, and degree-one vertices at the marked boundary points of $\Sigma$.  Switching the direction of an edge corresponds to conjugating the anyon label, and edges assigned the trivial label~$\vac$ can be removed or added from the picture according to
\begin{align}
\begin{pspicture}[shift=-.7](-1.25,-0.75)(1.25,0.75)
\psset{unit=.7cm}
\psset{linewidth=.56pt} 
\psset{labelsep=2.5pt} 
\SpecialCoor
\pnode(-.5,0){A}
\pnode(.5,0){B}
\pnode([nodesep=1,angle=120]A){Aa}
\pnode([nodesep=1,angle=-120]A){Ab}
\pnode([nodesep=1,angle=60]B){Ba}
\pnode([nodesep=1,angle=-60]B){Bb}
\pscurve{->}(Aa)(A)(Ab)
\pscurve{->}(Ba)(B)(Bb)
\rput(A){\uput[30](.5;110){$i$}}
\rput(B){\uput[150](.5;60){$j$}}
\end{pspicture}
&= 
\begin{pspicture}[shift=-.7](-1.25,-0.75)(1.25,0.75)
\psset{unit=.7cm}
\psset{linewidth=.56pt}
\psset{labelsep=2.5pt} 
\SpecialCoor
\pnode(-.5,0){A}
\pnode(.5,0){B}
\psline{-}(A)(B)
\psline{->}([nodesep=1,angle=120]A)(A)
\psline{<-}([nodesep=1,angle=-120]A)(A)
\psline{->}([nodesep=1,angle=60]B)(B)
\psline{<-}([nodesep=1,angle=-60]B)(B)
\uput[90](0,0){$\vac$}
\rput(A){\uput[30](.5;120){$i$}}
\rput(A){\uput[-30](.5;-120){$i$}}
\rput(B){\uput[-150](.5;-60){$j$}}
\rput(B){\uput[150](.5;60){$j$}}
\end{pspicture} 
\end{align}
Valid ribbon graphs are those that satisfy the fusion rules at every vertex, i.e., having $\delta_{ijk} = 1$ for every vertex with incoming edges labeled $i$, $j$ and $k$.  (For a vertex of degree two, this rule is adapted by adding an edge with label~$\vac$.)  The Hilbert space $\cH_\Sigma$ is the space of formal linear combinations of such ribbon graphs, modulo the local relations 
\begin{align}
\begin{pspicture}[shift=-.4](-0.75,-.5)(0.0,.5)
\psset{unit=.7cm}
\psset{linewidth=.56pt}
\psset{labelsep=2.5pt} 
\SpecialCoor
\pnode(-.5,0){A}
\pnode(.5,0){B}
\pnode([nodesep=1,angle=120]A){Aa}
\pnode([nodesep=1,angle=-120]A){Ab}
\pnode([nodesep=1,angle=60]B){Ba}
\pnode([nodesep=1,angle=-60]B){Bb}
\pscurve{->}(Aa)(A)(Ab)
\rput(A){\uput[30](.5;110){$i$}}
\end{pspicture}
&= \begin{pspicture}[shift=-.4](0.0,-.5)(0.75,.5)
\psset{unit=.7cm}
\psset{linewidth=.56pt}
\psset{labelsep=2.5pt} 
\SpecialCoor
\pnode(-.5,0){A}
\pnode(.5,0){B}
\pnode([nodesep=1,angle=120]A){Aa}
\pnode([nodesep=1,angle=-120]A){Ab}
\pnode([nodesep=1,angle=60]B){Ba}
\pnode([nodesep=1,angle=-60]B){Bb}
\pscurve{->}(Ba)(B)(Bb)
\rput(B){\uput[150](.5;60){$i$}}
\end{pspicture}\\
\raisebox{-1.0em}{\scalebox{1.0}{
\begin{pspicture}(-.4,-.4)(.4,.4)
\psset{linewidth=.56pt} 
\psset{labelsep=2.5pt}
\SpecialCoor
\psline{->}(-.7,0)(-.35,0)
\put(0,0){\psarc{->}(0,0){0.35}{-180}{180}}
\uput[30](.35;30){$i$}
\uput[150](.35;150){$j$}
\end{pspicture}
}}&= d_i \delta_{j\vac} \label{e:looprulegeneral} \\
\begin{pspicture}[shift=-.5](-1.25,-.5)(1.0,.5)
\psset{unit=.7cm}
\psset{linewidth=.56pt}
\psset{labelsep=2.5pt} 
\SpecialCoor
\pnode(-.5,0){A}
\pnode(.5,0){B}
\psline{<-}(A)(B)
\psline{->}([nodesep=1,angle=120]A)(A)
\psline{->}([nodesep=1,angle=-120]A)(A)
\psline{->}([nodesep=1,angle=60]B)(B)
\psline{->}([nodesep=1,angle=-60]B)(B)
\uput[90](0,0){$m$}
\rput(A){\uput[30](.5;120){$i$}}
\rput(A){\uput[-30](.5;-120){$j$}}
\rput(B){\uput[-150](.5;-60){$k$}}
\rput(B){\uput[150](.5;60){$\ell$}}
\end{pspicture}
&= 
\sum_n F^{ijm}_{k\ell n}
\begin{pspicture}[shift=-.75](-.5,-.8)(.5,.8)
\psset{unit=.7cm}
\psset{linewidth=.56pt}
\psset{labelsep=2.5pt} 
\SpecialCoor
\pnode(0,.5){A}
\pnode(0,-.5){B}
\psline{<-}(A)(B)
\psline{->}([nodesep=1,angle=30]A)(A)
\psline{->}([nodesep=1,angle=150]A)(A)
\psline{->}([nodesep=1,angle=-30]B)(B)
\psline{->}([nodesep=1,angle=-150]B)(B)
\uput[0](0,0){$n$}
\rput(A){\uput[60](.5;150){$i$}}
\rput(B){\uput[120](.5;-150){$j$}}
\rput(B){\uput[60](.5;-30){$k$}}
\rput(A){\uput[120](.5;30){$\ell$}}
\end{pspicture} \label{e:Frulesmooth}
\end{align}
Compare to Eqs.~\eqnref{eq:firstlevinwen}--\eqnref{eq:lastlevinwen}.  

The Hilbert space $\cH_\Sigma$ decomposes into spaces $\cH^\ell_\Sigma$ indexed by labelings~$\ell$ of the marked boundary points on the boundary components of~$\Sigma$, as in Eq.~\eqnref{eq:boundarylabeldecomp}.  $\cH^\ell_\Sigma$ is the subspace of ribbon graphs with, for each boundary point~$p$, an edge carrying label~$\ell(p)$ leaving~$p$.  An orthonormal basis of~$\cH^\ell_\Sigma$ can be constructed using ribbon graphs living on the dual graph of a triangulation of~$\Sigma$, as explained in \secref{sec:stringnetdefinition}.  For example, two orthonormal bases of the space $\cH_{\Sigma_2}^{(k^*, \ell)}$ on the annulus are given by the ribbon graphs
\begin{equation} \label{eq:annulusorthonormalbasis}
\cB := \Bigg\{ \raisebox{-.75em}{\scalebox{.8}{\Danyonsstandardbasistwo{k}{i}{j}{\ell}}} \,\Bigg\vert\, \text{$\delta_{k i^* j^*}, \delta_{i j \ell^*} \neq 0$} \Bigg\} \qquad \cB' := \Bigg\{ \raisebox{-.75em}{\scalebox{.8}{\Danyonsstandardbasisone{k}{i}{j}{\ell}}} \,\Bigg\vert\, \text{$\delta_{k i^* j^*} \neq 0, \delta_{i j \ell^*} \neq 0$} \Bigg\}
 \enspace .
\end{equation}
The inner product can be defined using any such basis; the result is independent of the chosen basis because different triangulations of $\Sigma$ with the same number of edges are related by edge flips and $F$-moves are unitary.

\subsection{Ribbon graphs on \texorpdfstring{$\Sigma \times [-1,1]$}{Sigma x [-1,1]} and fusion basis states of the doubled model} \label{s:stringnetsdoubledmodel}

The anyonic fusion basis states for the doubled theory are defined, as in \secref{s:anyonicfusionbasis}, by reducing ``doubled" ribbon graphs in the manifold~$\Sigma \times [-1,1]$ down to the surface~$\Sigma$.  

In the following, we use undirected, unlabeled, dashed lines to denote the superposition of all anyon types, weighting~$i$ by~$d_i/\cD$: 
\begin{equation} \label{eq:threeDlinevacuum}
\threeDlinevacuum = \frac{1}{\cD} \sum_i d_i \threeDline{i} = \frac{1}{\cD} \sum_i d_i \threeDlinedown{i}
\end{equation}
These vacuum lines generalize Eq.~\eqnref{eq:vacuumlinesfibonacci}.  \lemref{lem:vacuum} below summarizes their main properties.  

Consider a fusion diagram describing a state of $n$~anyons of a doubled theory $D\cC = \cC_+ \otimes \cC_-$.  This is a tree~$G$ with $n+1$ leaves at the punctures of the $(n+1)$-punctured sphere~$\Sigma = \Sigma_{n+1}$, and with edge labels of the form~$i_+ \otimes j_-$.  
Consider the manifold~$\cM = \Sigma \times [-1,1]$, and take two copies of $G$,
$G_\pm = G \times \{\pm 1\} \subset \Sigma \times \{\pm 1\}$.   
For an edge $e \in G$ labeled $i_+ \otimes j_-$ in the original fusion diagram, label the corresponding edges $e_+$ and $e_-$ in $G_+$ and $G_-$ by $i$ and $j$, respectively.  
In addition to ``doubling'' the edges in this fashion, add vacuum lines, from Eq.~\eqnref{eq:threeDlinevacuum}, embedded in $\Sigma \times \{ 0 \}$ around $n$ of the punctures.  

For concreteness, consider the case of a single anyon $i_+ \otimes j_-$, corresponding to the ribbon graph
\begin{equation} \label{eq:singleanyonthreed}
\raisebox{-.5in}{\includegraphics{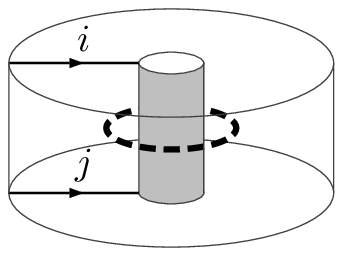}}
\end{equation}
on $\Sigma_2 \times [-1,1]$.  In this figure, the shaded area is the inner component of $(\partial\Sigma_2) \times [-1,1]$.

To see that this is consistent with an $i_+\otimes j_-$ anyon, observe first that it has the required idempotency property, since stacking gives the same diagram up to a constant factor:
\begin{equation}
\raisebox{-.3in}{\includegraphics{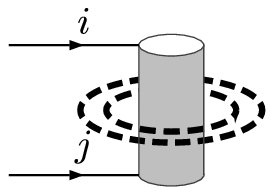}}
= 
\cD \raisebox{-.3in}{\includegraphics{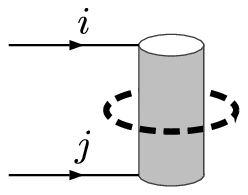}}
\end{equation}
according to the ``doubling'' property~\eqnref{it:firstpropertyvacuum} of vacuum lines stated in~\lemref{lem:vacuum}.  We can also verify that it carries the correct topological phase~$\theta_{i_+} \theta_{j_-} = \theta_i \theta_j^*$, by computing the action of a Dehn-twist~$D(\gamma)$: 
\begin{equation}
D(\gamma) \raisebox{-.3in}{\includegraphics{images/3Dtwiststraight}}
= \raisebox{-.43in}{\includegraphics{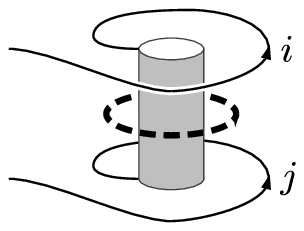}}
= \raisebox{-.32in}{\includegraphics{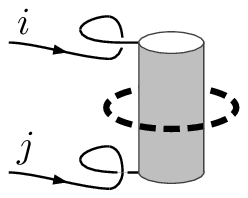}} 
= \theta_i \theta_j^* \raisebox{-.3in}{\includegraphics{images/3Dtwiststraight}}
\end{equation}
Here, as in Eq.~\eqnref{e:fibanyondehntwistexample}, we have used Eq.~\eqnref{it:secondpropertyvacuum} to pull the ribbons over the vacuum line,
and then removed the kinks using the definition of the topological phase, Eq.~\eqnref{eq:topphase}.  

Next, let us check that this realization reproduces the correct $R$-matrix, by considering a braid: 
\begin{equation}
\raisebox{-1.1cm}{\includegraphics{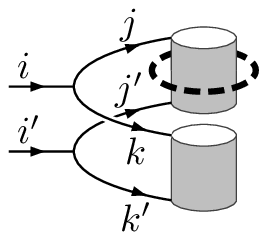}}
\mapsto
\raisebox{-1.16cm}{\includegraphics{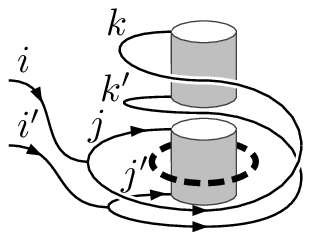}}
=
\raisebox{-1.12cm}{\includegraphics{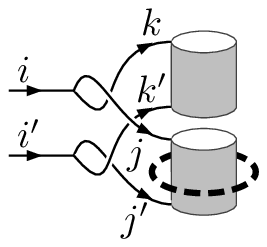}}
\end{equation}
Using Eq.~\eqnref{eq:Rmatrixfirstdef}, we find that, indeed, 
\begin{equation}
R_{i\otimes i'}^{j\otimes j' k\otimes k'} = R_i^{j k} (R_{i'^*}^{k'^* j'^*})^*
 \enspace .
\end{equation}

It is easy to see that this definition of ``doubled'' ribbon graph on $\Sigma \times [-1,1]$ also gives rise to the correct $F$-matrix of the doubled theory~$D\cC$, since the tensor factors behave independently.  

Note that this realization of the doubled theory by ribbon graphs on $\Sigma \times [-1,1]$ is only an intermediate step; we are ultimately interested in ribbon graphs on~$\Sigma$.  Ribbon graphs on $\Sigma \times [-1,1]$ can be reduced to~$\Sigma$.  First, pick a label~$\ell(p)$ for every boundary point~$p \in \partial\Sigma$ fusion-consistent with the anyon labels $i_+$ at~$(p, 1)$ and $i_-$ at~$(p, -1)$, respectively.  Then, connect up these two ends, and attach an edge with label~$\ell(p)$ ending at~$(p, 0)$.  Clearly, the resulting ribbon graphs have only one edge ending at every hole; in particular, we can project the string-net onto~$\Sigma$, and then resolve crossings by use of identity~\eqnref{eq:overcrossingresolution}.  This eventually gives a ribbon graph on~$\Sigma$, an anyonic fusion basis state.

For example, for $\Sigma_2$ we obtain from Eq.~\eqnref{eq:singleanyonthreed} the ribbon graphs 
\begin{equation} \label{eq:kaelldef}
\myvecstand{k}{\ell}{i}{j} := \threeDdoubleringsingle{k}{i}{j}{\ell} = \Danyons{k}{i}{j}{\ell}
\end{equation}
provided $\delta_{k i^*j^*} \neq 0$ and $\delta_{ij \ell^*} \neq 0$.  
Such ribbon graphs are explicitly reduced to elements of $\cH_{\Sigma_2}$ for the Fibonacci theory in \appref{sec:fibonacciapplication}.  
By \lemref{lem:idempotentsfromdouble} below, the ribbon graphs $\myvecstand{k}{\ell}{i}{j}$ are  idempotents under stacking.  To give another example, the described recipe gives  basis vectors for $\cH_{\Sigma_4}$ of the form
\begin{equation} \label{eq:generalizedfourpunctureds}
\myvecstandardpic{k}{\ell}{a}{b}{c}{d}
\end{equation}
\appref{sec:normalizationgeneral} proves that the anyonic fusion basis states form a complete, orthonormal basis.

\subsection{The Turaev-Viro code as the ground space of the Levin-Wen Hamiltonian} \label{sec:levinwengeneralized}

In this section, we present Levin and Wen's lattice Hamiltonians~\cite{LevinWen} realizing the  space $\cH_\Sigma^\ell$ of (colored) ribbon graphs for a general anyon model with no fusion multiplicities, as in \secref{sec:parametersgeneralanyon}.  This generalizes the construction of \secref{sec:realization} for the Fibonacci model.  

As in \secref{sec:realization}, let $\cT$ be a triangulation of $\Sigma$, and let $\widehat \cT = (V, E)$ be the dual graph of~$\cT$.  Let $\cH_{\widehat \cT} \cong (\mathbb{C}^N)^{\otimes \abs E}$ be the Hilbert space obtained by associating a $N$-dimensional qudit with orthonormal basis $\{ \ket i \}_i$ corresponding to the $N$~different anyon labels to every edge.  Computational basis vectors will be represented by directing the edges of~$\widehat \cT$ and labeling them by anyon labels; reversing the direction of an edge corresponds to conjugating the label.  Labelings~$\ell$ of boundary points $p \in \partial \Sigma$ by anyon labels will be extended to virtual boundary edges attached to~$\widehat\cT$ as discussed in \secref{sec:realization}; that is, we associate the trivial label~$\vac$ to all but one edge that carries label~$\ell(p)$.  
 
On $\cH_{\widehat \cT}$, we define vertex and plaquette projections as follows:
\begin{itemize}
\item
For every vertex $v$ of $\widehat\cT$, there is a vertex projection~$Q_v$.  This operator depends only on the edges incident on $v$, and enforces the fusion rules at~$v$.  It is given by 
\begin{equation}
Q_v = \sum_{i,j,k} \delta_{ijk}\proj{ijk}
 \enspace .
\end{equation}
As before, these are commuting projections.  Their simultaneous $+1$-eigenspace is (isomorphic) to the space~$\cH_\Delta^{(\ell, \vac^P)}$ of ribbon graphs on a surface~$\Delta$.  Here $\Delta \cong \Sigma_{n+P}$ is obtained by placing a puncture into each plaquette of $\widehat \cT$ embedded in~$\Sigma_n$.  Let $\cP$ be the set of these new punctures.  The ribbon graphs in $\cH_\Delta^{(\ell, \vac^P)}$ have open boundary conditions on punctures $p \in \cP$, and are in one-to-one correspodence to the computational basis for the qudits.

\item
For each plaquette~$p$ of~$\widehat \cT$, there is a plaquette projection~$B_p$.  $B_p$ is supported on the subspace~$\cH^{(\ell, \vac^P)}_\Delta$, and acts by adding a vacuum loop, divided by total quantum dimension~$\cD$.  By reducing the resulting ribbon graph, its action on computational basis vectors is given by 
\begin{equation} \label{eq:bpidefac}
B_p = \frac{1}{\cD^2} \sum_i d_i O^i_p \quad \textrm{where} \quad O^i_p \, \Bigg |
\raisebox{-0.6cm}{\includegraphics[scale=.7]{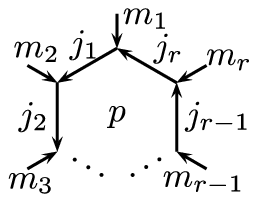}}
\Bigg \rangle
= \!\!
\sum_{k_1,\ldots,k_n} 
\!\!\!\! \bigg( \! \prod_{\nu = 1}^r F^{m_\nu j_\nu j_{\nu-1}}_{i k_{\nu-1} k_\nu} \! \bigg)
\Bigg |
\raisebox{-0.6cm}{\includegraphics[scale=.7]{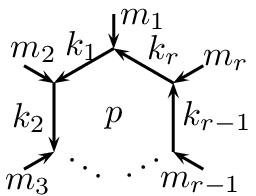}}
\Bigg \rangle
\end{equation}
Here we have assumed that $p$ has $r$~boundary edges, and have identified $j_0 = j_r$ and $k_0 = k_r$.  The plaquette operators commute with each other and with the vertex operators.
\end{itemize}

The Levin-Wen Hamiltonian on $H_{\widehat \cT}$ on $\cH_{\widehat \cT}$ is the sum of these operators, as in Eq.~\eqnref{eq:levinwenhamiltonian}.  According to \lemref{lem:mainisomorphism} (which holds for the more general definitions given in this section), its ground space $\cH^{\ell,gs}_{\widehat \cT} \subset \cH_{\widehat \cT}$, i.e., the simultaneous~$+1$-eigenspace of all plaquette and vertex operators, is isomorphic to the ribbon graph space~$\cH^\ell_\Sigma$.  Quantum computations can therefore be performed in a similar manner as before using $F$-moves, i.e., five-qudit gates.  A generalized version of \lemref{lem:tadpolegluing} allowing for state preparation by gluing and measurement of topological charge is given as \lemref{lem:tadpolegluinggeneral} in the appendix.  
\newcommand*{\pTV}{\mathbb{TV}}
\newcommand*{\TV}{\mathsf{TV}}
\newcommand*{\CKK}{\mathsf{CKK}}
\newcommand*{\MCG}{\mathsf{MCG}}
\newcommand*{\doubletetrahedron}{
\begin{pspicture}[shift=-.875](-1,-.1)(1.2,2.7)
\psset{unit=1.6cm} 
\psset{linewidth=.8pt} 
\psset{labelsep=1pt} 
\SpecialCoor
\pnode(0,0){v3}
\pnode(1.5,1.1){v0}
\pnode(.5,1.8){v1}
\pnode(-.7,.8){v2}
\pnode(1.4,0.4){w}
\pspolygon*[fillstyle=solid,linecolor=lightgray,fillcolor=lightgray](v1)(v2)(v3)\pspolygon*[fillstyle=solid,linecolor=lightgray,fillcolor=lightgray](v0)(v1)(w)
\pcline{->}(v0)(v2)
  \Bput{$\ \ \ k$}
\psline[linestyle=solid,linewidth=3pt, linecolor=white](v1)(v3)
\pcline{->}(v1)(v2) 
  \Bput{$m$}
\pcline{->}(v1)(v3)
\Bput{$i$}
\pcline{->}(v0)(v3)
\Aput{$n$}
\pcline{->}(v2)(v3)
\Aput{$j$}
\psline[linestyle=solid,linewidth=3pt, linecolor=white](v1)(w)
\pcline{->}(v1)(w)
\Aput{$q$}
\pcline{->}(v3)(w)
\Bput{$r$}
\pcline{->}(v0)(w)
\Aput{$s$}
\pcline{->}(v0)(v1) 
  \Bput{$\ell$}
\end{pspicture}}
\newcommand*{\standardtetrahedronfirst}{
\begin{pspicture}[shift=-.875](-1,-.1)(1.2,2.1)
\psset{unit=1.6cm} 
\psset{linewidth=.8pt} 
\psset{labelsep=1pt} 
\SpecialCoor
\pnode(0,0){v3}
  \rput(0,-0.1){$v_3$}
\pnode(.5,.8){v0}
  \rput(.7,.8){$v_0$}
\pnode(.5,1.3){v1}
  \rput(.5,1.4){$v_1$}
\pnode(-.7,.8){v2}
  \rput(-.9,.8){$v_2$}
\pcline{->}(v0)(v2)
  \Bput{$k$}
\psline[linestyle=solid,linewidth=3pt, linecolor=white](v1)(v3)
\pcline{->}(v1)(v2) 
  \Bput{$m$}
\pcline{->}(v1)(v3)
  \lput*(.6){$\! i\!\!$}
\pcline{->}(v0)(v1) 
  \Bput{$\ell$}
\pcline{->}(v0)(v3)
\Aput{$n$}
\pcline{->}(v2)(v3)
\Aput{$j$}
\end{pspicture}}

\newcommand*{\standardtetrahedronsecond}{
\begin{pspicture}[shift=-.875](-1,-.1)(1.2,2.1)
\psset{unit=1.6cm} 
\psset{linewidth=.8pt} 
\psset{labelsep=1pt} 
\SpecialCoor
\pnode(0,0){v3}
\pnode(.5,.8){v0}
\pnode(.5,1.3){v1}
\pnode(-.7,.8){v2}
\pcline{->}(v0)(v2)
  \Bput{$k$}
\psline[linestyle=solid,linewidth=3pt, linecolor=white](v1)(v3)
\pcline{->}(v1)(v2) 
  \Bput{$m$}
\pcline{->}(v1)(v3)
  \lput*(.6){$\! i\!\!$}
\pcline{->}(v0)(v1) 
  \Bput{$\ell$}
\pcline{->}(v0)(v3)
\Aput{$n$}
\pcline{->}(v2)(v3)
\Aput{$j$}
\end{pspicture}}

\newcommand*{\standardtetrahedronthird}{
\begin{pspicture}[shift=-.875](-1,-.1)(1.2,2.1)
\psset{unit=1.6cm} 
\psset{linewidth=.8pt} 
\psset{labelsep=1pt} 
\SpecialCoor
\pnode(0,0){v3}
\pnode(.5,.8){v0}
\pnode(.5,1.3){v1}
\pnode(-.7,.8){v2}
\pcline{->}(v0)(v2)
  \Bput{$\ \  \ k_{t-1}$}
\psline[linestyle=solid,linewidth=3pt, linecolor=white](v1)(v3)
\pcline{->}(v1)(v2) 
  \Bput{$k_{t-1}'$}
\pcline{->}(v1)(v3)
  \lput*(.6){$k_t'\!\!\!$}
\pcline{->}(v0)(v1) 
  \Bput{$i$}
\pcline{->}(v0)(v3)
\Aput{$k_t$}
\pcline{->}(v2)(v3)
\Aput{$m_t$}
\end{pspicture}}

\newcommand*{\nonstandardtetrahedronfirst}{
\begin{pspicture}[shift=-.875](-1,-.1)(1.2,2.1)
\psset{unit=1.6cm} 
\psset{linewidth=.8pt} 
\psset{labelsep=1pt} 
\SpecialCoor
\pnode(0,0){v3}
  \rput(0,-0.1){$v'_2$}
\pnode(.5,.8){v0}
  \rput(.7,.8){$v'_3$}
\pnode(.5,1.3){v1}
  \rput(.5,1.4){$v'_0$}
\pnode(-.7,.8){v2}
  \rput(-.9,.8){$v'_1$}
\pcline{->}(v0)(v2)
  \Bput{$k$}
\psline[linestyle=solid,linewidth=3pt, linecolor=white](v1)(v3)
\pcline{->}(v1)(v2) 
  \Bput{$m$}
\pcline{->}(v1)(v3)
  \lput*(.6){$\! i\!\!$}
\pcline{->}(v0)(v1) 
  \Bput{$\ell$}
\pcline{->}(v0)(v3)
\Aput{$n$}
\pcline{->}(v2)(v3)
\Aput{$j$}
\end{pspicture}}

\newcommand*{\nonstandardtetrahedronsecond}{
\begin{pspicture}[shift=-.875](-1,-.1)(1.2,2.1)
\psset{unit=1.6cm} 
\psset{linewidth=.8pt} 
\psset{labelsep=1pt} 
\SpecialCoor
\pnode(0,0){v3}
\pnode(.5,.8){v0}
\pnode(.5,1.3){v1}
\pnode(-.7,.8){v2}
\pcline{->}(v2)(v0)
  \Aput{$k$}
\psline[linestyle=solid,linewidth=3pt, linecolor=white](v1)(v3)
\pcline{->}(v1)(v2) 
  \Bput{$m$}
\pcline{->}(v1)(v3)
  \lput*(.6){$\! i\!\!$}
\pcline{->}(v1)(v0)
  \Aput{$\ell$}
\pcline{->}(v3)(v0)
\Bput{$n$}
\pcline{->}(v2)(v3)
\Aput{$j$}
\end{pspicture}}

\newcommand*{\standardtetrahedron}{
\begin{pspicture}[shift=-.875](-1,-.1)(1.2,2.1)
\psset{unit=1.6cm} 
\psset{linewidth=.8pt} 
\psset{labelsep=1pt} 
\SpecialCoor
\pnode(0,0){v3}
  \rput(0,-0.1){$v_3$}
\pnode(.5,.8){v0}
  \rput(.7,.8){$v_0$}
\pnode(.5,1.3){v1}
  \rput(.5,1.4){$v_1$}
\pnode(-.7,.8){v2}
  \rput(-.9,.8){$v_2$}
\pcline{->}(v0)(v2)
  \Bput{$k$}
\psline[linestyle=solid,linewidth=3pt, linecolor=white](v1)(v3)
\pcline{->}(v1)(v2) 
  \Bput{$m$}
\pcline{->}(v1)(v3)
  \lput*(.6){$\! i\!\!$}
\pcline{->}(v0)(v1) 
  \Bput{$\ell$}
\pcline{->}(v0)(v3)
\Aput{$n$}
\pcline{->}(v2)(v3)
\Aput{$j$}
\end{pspicture}}

\newcommand*{\tetrahedrexample}{
\begin{pspicture}[shift=-.875](-1,-.1)(1.2,2.1)
\psset{unit=1.6cm} 
\psset{linewidth=.8pt} 
\psset{labelsep=1pt} 
\SpecialCoor
\pnode(0,0){v3}
\pnode(.5,.8){v0}
\pnode(.5,1.3){v1}
\pnode(-.7,.8){v2}
\pcline{->}(v2)(v3)
\Aput{$m_{\nu}$}
\pcline{->}(v0)(v2)
  \Bput{$\!\! j_{\nu-1}\!\!$}
\pcline{->}(v1)(v3)
  \lput*(.6){$k_{\nu}\!\!$}
\pcline{->}(v0)(v1) 
  \Bput{$i$}
\pcline{->}(v1)(v2) 
  \Bput{$k_{\nu-1}$}
\pcline{->}(v0)(v3)
\Aput{$j_{\nu}$}
\end{pspicture}}

\newcommand*{\labeledplaquette}[2]{
\begin{pspicture}[shift=-.875](-1.1,-.9)(1.4,1.2)
\psset{unit=.7cm} 
\psset{linewidth=.8pt} 
\psset{labelsep=2.5pt} 
\SpecialCoor
\pnode(0,0){O}
\pnode(1;150){v1}
\pnode(1;90){v2}
\pnode(1;30){v3}
\pnode(1;-30){v4}
\pnode(1;-90){v5}
\pnode(1;-150){v6}
\psline{->}(v4)(v3)
\psline{->}(v3)(v2)
\psline{->}(v2)(v1)
\psline{->}(v1)(v6)
\def\leglength{.5}
\def\rthree{.866025}
\psline{<-}(v1)([nodesep=\leglength,angle=150]v1)
\psline{<-}(v2)([nodesep=\leglength,angle=90]v2)
\psline{<-}(v3)([nodesep=\leglength,angle=30]v3)
\psline{<-}(v4)([nodesep=\leglength,angle=-30]v4)
\psline{<-}(v6)([nodesep=\leglength,angle=-150]v6)
\rput(O){$p$}
\rput[b]{-30}([nodesep=\rthree,angle=-120]O){$\ldots$}
\rput[b]{30}([nodesep=\rthree,angle=-60]O){$\ldots$}
\rput[b]([nodesep=.5,angle=200]v2){${#1}_1$}
\uput[180]([nodesep=\rthree,angle=180]O){${#1}_2$}
\uput[0]([nodesep=\rthree,angle=0]O){${#1}_{r-1}$}
\uput[110]([nodesep=.15,angle=90]v3){${#1}_r$}
\rput[b]([nodesep=1.35,angle=72]O){${#2}_1$}
\uput[90]([nodesep=1.35,angle=150]O){${#2}_2$}
\rput([nodesep=.6,angle=-130]v6){${#2}_3$}
\rput([nodesep=.6,angle=-50]v4){${#2}_{r-1}$}
\uput[30]([nodesep=1.35,angle=30]O){${#2}_r$}
\end{pspicture}}

\newcommand*{\pneighborhood}{
\begin{pspicture}[shift=-0.2](-1.3,-1.1)(1.3,1.2)
\psset{unit=.7cm} 
\psset{linewidth=.8pt} 
\psset{labelsep=2.5pt} 
\SpecialCoor
\def\sqrt3{1.73205}
\pnode(0,0){O}
\pnode(\sqrt3;-60){p1}
\pnode(\sqrt3;0){p2}
\pnode(\sqrt3;60){p3}
\pnode(\sqrt3;120){p4}
\pnode(\sqrt3;180){p5}
\pnode(\sqrt3;240){p6}

\def\shadingsep{.2}
\pnode([nodesep=\shadingsep,angle=-60]p1){p1p}
\pnode([nodesep=\shadingsep,angle=0]p2){p2p}
\pnode([nodesep=\shadingsep,angle=60]p3){p3p}
\pnode([nodesep=\shadingsep,angle=120]p4){p4p}
\pnode([nodesep=\shadingsep,angle=180]p5){p5p}
\pnode([nodesep=\shadingsep,angle=240]p6){p6p}
\pspolygon[linestyle=none,fillstyle=solid,fillcolor=gray](p1p)(p2p)(p3p)(p4p)(p5p)(p6p)(p6)(p5)(p4)(p3)(p2)(p1)

\psdots(O)
\psline(p1)(p2)(p3)(p4)(p5)(p6)
\psline(p1)(O)(p2) \psline(p3)(O)(p4) \psline(p5)(O)(p6)

\uput{.3}[-90](O){$p$}
\rput([nodesep=1.55,angle=-90]O){$\cdots$}
\end{pspicture}
}

\newcommand*{\pblister}{
\begin{pspicture}[shift=-0.2](-1.2,-1.1)(1.4,1.2)
\psset{unit=1.2cm} 
\psset{linewidth=.8pt} 
\psset{labelsep=2.5pt} 
\SpecialCoor
\pnode(0,-.1){p}
\pnode(.1,1){q}
\pnode(.5,-.8){v1}
\pnode(.8,.2){v2}
\pnode(-.4,.7){v3}
\pnode(-.9,.15){v4}
\pnode(-.6,-.7){v5}

\pnode([nodesep=.1,angle=-60]v1){v1p}
\pnode([nodesep=.1,angle=30]v2){v2p}
\pnode([nodesep=.1,angle=120]v3){v3p}
\pnode([nodesep=.1,angle=170]v4){v4p}
\pnode([nodesep=.1,angle=-120]v5){v5p}
\pspolygon[linestyle=none,fillstyle=solid,fillcolor=gray](v1p)(v2p)(v3p)(v4p)(v4)(v3)(v2)(v1)

\psdots(p)

\psline(p)(v2) \psline(p)(v3) \psline(p)(v4)
\psline(v2)(v3)
\psset{border=2pt}
\psline(v1)(q) \psline(v4)(q) \psline(v5)(q)
\psset{border=0pt}
\pspolygon[linestyle=none,fillstyle=solid,fillcolor=gray](v3p)(v4p)(v5p)(v1p)(v2p)(v2)(v1)(v5)(v4)(v3)
\psline(v1)(q) \psline(v4)(q) \psline(v5)(q)
\psline(v2)(q)(v3)
\psline(v5)(p)(v1)
\psline(v3)(v4)(v5)(v1)(v2)
\psline(p)(q)
\rput(-0.05,-0.4){$p$}
\end{pspicture}
}

\newcommand*{\orientedlabeledplaquette}[2]{
\begin{pspicture}[shift=-.875](-1.1,-.9)(1.4,1.2)
\psset{unit=.9cm} 
\psset{linewidth=.8pt} 
\psset{labelsep=2.5pt} 
\SpecialCoor
\pnode(0,0){O}
\pnode(1;150){v1}
\pnode(1;90){v2}
\pnode(1;30){v3}
\pnode(1;-30){v4}
\pnode(1;-90){v5}
\pnode(1;-150){v6}
\psline{->}(v4)(v3)
\psline{->}(v3)(v2)
\psline{->}(v2)(v1)
\psline{->}(v1)(v6)
\def\leglength{0.5}
\def\rthree{.866025}
\psline{->}([nodesep=\leglength,angle=150]v1)(v1)
\psline{->}([nodesep=\leglength,angle=90]v2)(v2)
\psline{->}([nodesep=\leglength,angle=30]v3)(v3)
\psline{->}([nodesep=\leglength,angle=-30]v4)(v4)
\psline{->}([nodesep=\leglength,angle=-150]v6)(v6)
\rput(O){$p$}
\rput[b]{-30}([nodesep=\rthree,angle=-120]O){$\ldots$}
\rput[b]{30}([nodesep=\rthree,angle=-60]O){$\ldots$}
\rput[b]([nodesep=.5,angle=200]v2){${#1}_1$}
\uput[180]([nodesep=\rthree,angle=180]O){${#1}_2$}
\uput[0]([nodesep=\rthree,angle=0]O){${#1}_{r-1}$}
\uput[110]([nodesep=.15,angle=90]v3){${#1}_r$}
\rput[b]([nodesep=1.35,angle=72]O){${#2}_1$}
\uput[90]([nodesep=1.35,angle=150]O){${#2}_2$}
\rput([nodesep=.6,angle=-130]v6){${#2}_3$}
\rput([nodesep=.6,angle=-50]v4){${#2}_{r-1}$}
\uput[30]([nodesep=1.35,angle=30]O){${#2}_r$}
\end{pspicture}}

\newcommand*{\orientedpneighborhood}{
\begin{pspicture}[shift=-0.2](-1.6,-1.4)(1.5,1.6)
\psset{unit=.9cm} 
\psset{linewidth=.8pt} 
\psset{labelsep=2.5pt} 
\SpecialCoor
\def\sqrt3{1.73205}
\def\outerlabl{2.13205}
\pnode(0,0){O}
\pnode(\sqrt3;-60){p1}
\pnode(\sqrt3;0){p2}
\pnode(\sqrt3;60){p3}
\pnode(\sqrt3;120){p4}
\pnode(\sqrt3;180){p5}
\pnode(\sqrt3;240){p6}
\def\shadingsep{.2}
\pnode([nodesep=\shadingsep,angle=-60]p1){p1p}
\pnode([nodesep=\shadingsep,angle=0]p2){p2p}
\pnode([nodesep=\shadingsep,angle=60]p3){p3p}
\pnode([nodesep=\shadingsep,angle=120]p4){p4p}
\pnode([nodesep=\shadingsep,angle=180]p5){p5p}
\pnode([nodesep=\shadingsep,angle=240]p6){p6p}
\pspolygon[linestyle=none,fillstyle=solid,fillcolor=gray](p1p)(p2p)(p3p)(p4p)(p5p)(p6p)(p6)(p5)(p4)(p3)(p2)(p1)

\psdots(O)
\pcline{->}(p1)(p2)
  \Bput{$m_{r-1}$}
\pcline{->}(p2)(p3)
   \Bput{$m_r$}
\pcline{->}(p3)(p4)
   \Bput{$m_1$}
\pcline{->}(p4)(p5)
   \Bput{$m_2$}
\pcline{->}(p5)(p6)
   \Bput{$m_3$}
\pcline{->}(O)(p1)
   \lput*(.5){$k_{r-2}$}
\pcline{->}(O)(p2)
   \lput*(.5){$k_{r-1}$}
\pcline{->}(O)(p3)
   \lput*(.5){$k_r$}
\pcline{->}(O)(p4)
  \lput*(.5){$k_1$}
\pcline{->}(O)(p5)
   \lput*(.5){$k_2$}
\pcline{->}(O)(p6)
   \lput*(.5){$k_3$}

\uput{.3}[-90](O){$p$}
\rput([nodesep=1.55,angle=-90]O){$\cdots$}
\end{pspicture}
}

\newcommand*{\orientedpblister}{
\begin{pspicture}[shift=-0.2](-1.6,-1.4)(1.5,1.6)
\psset{unit=1.6cm} 
\psset{linewidth=.8pt} 
\psset{labelsep=2.5pt} 
\SpecialCoor

\pnode(0,-.1){p}
\pnode(.1,1){q}
\pnode(.5,-.8){w1}
\pnode(.85,-.3){w2}
\pnode(.55,.6){w3}
\pnode(-0.4,.7){w4}
\pnode(-.9,.15){w5}
\pnode(-.6,-.7){w6}
\pnode([nodesep=.1,angle=-60]w1){w1p}
\pnode([nodesep=.1,angle=-10]w2){w2p}
\pnode([nodesep=.1,angle=60]w3){w3p}
\pnode([nodesep=.1,angle=120]w4){w4p}
\pnode([nodesep=.1,angle=170]w5){w5p}
\pnode([nodesep=.1,angle=-120]w6){w6p}
\psdots(p)
\psline{->}(w3)(w4)
\psline{->}(p)(w3)
\psline{->}(p)(w2) \psline{->}(p)(w4) \psline{->}(p)(w5)

\psset{border=1.5pt}
\psline(w1)(q) \psline(w5)(q) \psline(w6)(q)
\psline(w2)(q)
\psline(p)(q)
\psset{border=0pt}
\pspolygon[linestyle=none,fillstyle=solid,fillcolor=gray](w4p)(w5p)(w6p)(w1p)(w2p)(w3p)(w3)(w2)(w1)(w6)(w5)(w4)
\pspolygon[linestyle=none,fillstyle=solid,fillcolor=gray](w4p)(w4)(w3)(w3p)
\psline{->}(q)(w1) \psline{->}(q)(w5) \psline{->}(q)(w6)
\psline{->}(q)(w2)
\psline{->}(q)(w4)

\psline{->}(p)(w6)
\psline{->}(p)(w1)

\psline{->}(w4)(w5)
\psline{->}(w5)(w6)
\psline{->}(w6)(w1)
\psline{->}(w1)(w2)
\psline{->}(p)(q)
\psline{->}(q)(w3)\psline{->}(w2)(w3)
\psdots(p)
\rput(-0.05,-0.3){$p$}
\rput(.1,1.1){$q$}
\end{pspicture}
}

\newcommand*{\secondorientedpblister}{
\begin{pspicture}[shift=-0.2](-1.6,-1.4)(1.5,1.6)
\psset{unit=1.6cm} 
\psset{linewidth=.8pt} 
\psset{labelsep=2.5pt} 
\SpecialCoor

\pnode(0,-.1){p}
\pnode(.1,1){q}
\pnode(.5,-.8){w1}
\pnode(.85,-.3){w2}
\pnode(.55,.6){w3}
\pnode(-0.4,.7){w4}
\pnode(-.9,.15){w5}
\pnode(-.6,-.7){w6}
\pnode([nodesep=.1,angle=-60]w1){w1p}
\pnode([nodesep=.1,angle=-10]w2){w2p}
\pnode([nodesep=.1,angle=60]w3){w3p}
\pnode([nodesep=.1,angle=120]w4){w4p}
\pnode([nodesep=.1,angle=170]w5){w5p}
\pnode([nodesep=.1,angle=-120]w6){w6p}
\psdots(p)
\psline{->}(w3)(w4)
\psline{->}(p)(w3)
\psline{->}(p)(w2) \psline{->}(p)(w4) \psline{->}(p)(w5)

\psset{border=1.5pt}
\psline(w1)(q) \psline(w5)(q) \psline(w6)(q)
\psline(w2)(q)
\psline{->}(p)(q)
\psset{border=0pt}
\pspolygon[linestyle=none,fillstyle=solid,fillcolor=gray](w4p)(w5p)(w6p)(w1p)(w2p)(w3p)(w3)(w2)(w1)(w6)(w5)(w4)
\pspolygon[linestyle=none,fillstyle=solid,fillcolor=gray](w4p)(w4)(w3)(w3p)
\psline{->}(q)(w1) \psline{->}(q)(w5) \psline{->}(q)(w6)
\psline{->}(q)(w2)
\psline{->}(q)(w4)

\psline{->}(p)(w6)
\psline{->}(p)(w1)

\psline{->}(w4)(w5)
\psline{->}(w5)(w6)
\psline{->}(w6)(w1)
\psline{->}(w1)(w2)
\psline{->}(p)(q)
\psline{->}(q)(w3)\psline{->}(w2)(w3)
\psdots(p)
\rput(-0.05,-0.3){$p$}
\rput(.1,1.1){$q$}
\rput(-0.2,1.0){$k_1'$}
\rput(.4,0.9){$k_r'$}
\rput(.75,0.3){$k_{r-1}'$}
\rput(-0.38,0.2){$k_3'$}
\rput(-0.4,0.6){$k_2'$}
\rput(.5,0.05){$k_{r-2}'$}
\rput(-0.8,0.5){$m_2$}
\rput(-0.95,-0.3){$m_3$}

\rput(1.05,-0.6){$m_{r-1}$}
\end{pspicture}
}

\newcommand*{\orientedtetrahedronlabeled}{
\begin{pspicture}[shift=-.875](-1,-.1)(1.2,2.1)
\psset{unit=1.6cm} 
\psset{linewidth=.8pt} 
\psset{labelsep=1pt} 
\SpecialCoor
\pnode(0,0){v1}
\pnode(.5,.8){v2}
\pnode(.5,1.3){v3}
\pnode(-.7,.8){v4}

\pcline{->}(v1)(v4)
\Aput{$m_\nu$}

\pcline{->}(v4)(v2)
\Bput{$j_{\nu-1}$}

\pcline{<-}(v1)(v3) \lput*(.6){$\!\!k_\nu^*\!\!$}
\pcline(v2)(v3) \Bput{$i^*$}
\pcline{->}(v3)(v4) \Bput{$k_{\nu-1}$}

\pcline{->}(v2)(v1)
\Aput{$j_\nu^*$}
\end{pspicture}
}
\newcommand*{\gluingsurfacepatchesone}{
\begin{pspicture}[shift=0](-2.4,-1.2)(2.1,1.7)
\psset{unit=1.2cm} 
\psset{linewidth=.8pt} 
\psset{labelsep=2.5pt} 
\SpecialCoor
\pnode(-2,.8){nw}
\pnode(-1.5,-1){sw}
\pnode(-.25,-.5){se}
\pnode(-.8,1.2){ne}
\pnode(-.9,.3){center}
\nccurve[angleA=-50,angleB=70,ncurv=.67]{nw}{sw}
\nccurve[angleA=30,angleB=-160,ncurv=.67]{sw}{se}
\nccurve[angleA=90,angleB=-50,ncurv=.67]{se}{ne}
\nccurve[angleA=180,angleB=60,ncurv=.67]{ne}{nw}
\rput{60}{\psellipse[linestyle=solid,linewidth=.4pt,fillstyle=solid,fillcolor=lightgray](center)(.22,.2)}

\pnode(2,.8){nw}
\pnode(1.5,-1){sw}
\pnode(.25,-.5){se}
\pnode(.8,1.2){ne}
\pnode(.9,.3){center}
\nccurve[angleA=-130,angleB=110,ncurv=.67]{nw}{sw}
\nccurve[angleA=150,angleB=-20,ncurv=.67]{sw}{se}
\nccurve[angleA=90,angleB=-130,ncurv=.67]{se}{ne}
\nccurve[angleA=0,angleB=120,ncurv=.67]{ne}{nw}
\rput{120}{\psellipse[linestyle=solid,linewidth=.4pt,fillstyle=solid,fillcolor=lightgray](center)(.22,.2)}
\end{pspicture}
}

\newcommand*{\gluingsurfacepatchestwo}{\begin{pspicture}[shift=0](-2.4,-1.2)(2.1,1.7)
\psset{unit=1.2cm} 
\psset{linewidth=.8pt} 
\psset{labelsep=2.5pt} 
\SpecialCoor
\pnode(-2,.8){nw}
\pnode(-1.5,-1){sw}
\pnode(-.25,-.5){se}
\pnode(-.8,1.2){ne}
\pnode(-.9,.3){center}
\nccurve[angleA=-50,angleB=70,ncurv=.67]{nw}{sw}
\nccurve[angleA=30,angleB=-160,ncurv=.67]{sw}{se}
\nccurve[angleA=90,angleB=-50,ncurv=.67]{se}{ne}
\nccurve[angleA=180,angleB=60,ncurv=.67]{ne}{nw}

\pnode(2,.8){nw}
\pnode(1.5,-1){sw}
\pnode(.25,-.5){se}
\pnode(.8,1.2){ne}
\pnode(.9,.3){center}
\nccurve[angleA=-130,angleB=110,ncurv=.67]{nw}{sw}
\nccurve[angleA=150,angleB=-20,ncurv=.67]{sw}{se}
\nccurve[angleA=90,angleB=-130,ncurv=.67]{se}{ne}
\nccurve[angleA=0,angleB=120,ncurv=.67]{ne}{nw}

\pscircle*[linecolor=white](-.37,.3){.2}
\pscircle*[linecolor=white](.37,.3){.2}
\pnode(-.95,.55){nw}
\pnode(.95,.55){ne}
\nccurve[angleA=-20,angleB=-160,ncurv=.5]{nw}{ne}
\pnode(-1,0){sw}
\pnode(1,0){se}
\nccurve[angleA=25,angleB=155,ncurv=.67]{sw}{se}
\end{pspicture}
}

\newcommand*{\gluingthreetetrahedra}{\def\puncture {\psdots[dotstyle=+,dotangle=45,dotscale=1](0,0)}
\def\dot {\psdots[dotscale=1](0,0)}
\begin{pspicture}[shift=0](-2.6,-1.6)(2.6,1.9)
\psset{unit=1.2cm} 
\psset{linewidth=.8pt} 
\psset{labelsep=2.5pt} 
\SpecialCoor

\pnode(-1,1){l1}
\pnode(-1.5,0){l2}
\pnode(-.5,-.5){l3}
\pnode(1,1){r1}
\pnode(1.5,0){r2}
\pnode(.5,-.5){r3}

\pcline(l2)(r2) \lput{:U}(.4){\puncture}

\pcline[border=.1](l3)(r1) \lput{:U}{\puncture}
\pcline(l3)(r2) \lput{:U}(.4){\puncture}

\pcline[border=.1](r1)(r3) \mput{\dot}
\pcline(r1)(r2) \mput{\dot}
\pcline(r2)(r3) \mput{\dot}

\pcline(l2)(r1) \lput{:U}{\puncture}

\pcline[border=.1](l1)(l3) \mput{\dot}
\pcline(l1)(l2) \mput{\dot}
\pcline(l2)(l3) \mput{\dot}

\pcline(l1)(r1) \mput{\puncture}
\pcline(l3)(r3) \mput{\puncture}

\pcline(l3)(r1)

\pnode(-2,1.5){nw}
\pnode(2,1.5){ne}
\nccurve[angleA=-20,angleB=-160,ncurv=.5]{nw}{ne}
\pnode(-2.2,-1.2){sw}
\pnode(2.2,-1.2){se}
\nccurve[angleA=25,angleB=155,ncurv=.67]{sw}{se}
\end{pspicture}}

\newcommand*{\newdot}{\begin{pspicture}(-0.2,-.2)(0.2,0.2)
\psset{unit=1cm} 
\psset{linewidth=.8pt} 
\psset{labelsep=2.5pt} 
\SpecialCoor
\psdots[dotscale=1](0,0)
\end{pspicture}
}

\newcommand*{\newpuncture}{\def\puncture {\psdots[dotstyle=+,dotangle=45,dotscale=1](0,0)}
\begin{pspicture}(-0.2,-0.2)(0.2,0.2)
\psset{unit=1cm} 
\psset{linewidth=.8pt} 
\psset{labelsep=2.5pt} 
\SpecialCoor
\puncture
\end{pspicture}}

\section{The Turaev-Viro invariant and the Turaev-Viro code} \label{sec:tvinvariantcode}

In \secref{sec:levinwengeneralized}, we have discussed how the ribbon graph Hilbert space~$\cH_\Sigma$ can be realized as the ground space of the Levin-Wen qudit lattice Hamiltonian.  Here we  give a more direct formulation that justifies calling this ground space the Turaev-Viro code.  This approach historically preceeds the Hamiltonian construction and explains its origin.  It is the extension of the Turaev-Viro invariant of $3$-manifold to a TQFT~\cite{TuraevViro92topology}.   

We proceed as follows.  In \secref{sec:turaevvirodefinition}, we define the Turaev-Viro invariant~\cite{TuraevViro92topology} in the general category-theoretic formulation due to Barrett and Westbury~\cite{BarrettWestbury96invariants}.  \secref{sec:turaevvirocode} defines the Turaev-Viro code for a surface~$\Sigma$, based on the Turaev-Viro invariant for the thickened surface $\Sigma \times [-1,1]$.  \secref{sec:localstabilizers} shows that this definition corresponds with the Levin-Wen Hamiltonian's ground space, by deriving the Levin-Wen Hamiltonian; applying a plaquette operator corresponds to attaching a ``blister" to the surface.  

\appref{sec:crane} gives an alternative definition of the Turaev-Viro invariant, due to Kontsevich, Crane and Kohno~\cite{Kontsevich88, Crane91, Kohno92}, and sketches how the anyonic fusion basis state decomposition of~$\cH_\Sigma$ can be used to argue that the different definitions are equivalent.  This background is useful for the proof in~\cite{AlagicJordanKoenigReichardt10TuraevViro} that approximating the Turaev-Viro invariant is a BQP-complete problem.

\subsection{Definition of the Turaev-Viro invariant} \label{sec:turaevvirodefinition}

In this section, we present the definition of the Turaev-Viro invariant~$\TV_\cC(M)$ of an oriented $3$-manifold~$M$.  For simplicity, we assume that the $F$ tensor of the category~$\cC$ satisfies tetrahedral symmetry, though this is not necessary (see~\cite{BarrettWestbury96invariants}).  The manifold~$M$ may also have a boundary, in which case the Turaev-Viro invariant~$\TV_\cC(M, \chi)$ depends on~$\chi$, a labeling by anyon types of the oriented edges of a triangulation of the boundary.  

Triangulate~$M$ compatibly with the possible boundary triangulation.  For each tetrahedron~$t$, place a total order $<_t$ on its four vertices, such that the orders are consistent across tetrahedra, i.e., if~$t$ and $t'$ share an edge $\{v, w\}$, $v<_t w$ when $v <_{t'} w$.  For example, such orderings may be obtained from a single ordering of all the vertices in the triangulation.  

For a labeling~$\phi$ of the edges of the triangulation by anyon types from~$\cC$, and for a tetrahedron~$t$, define a scalar~$g_t^\phi$ as follows.  First, we consider the ``standard'' tetrahedron, with~vertices ordered $v_0 < \cdots < v_3$ and labels $i, j, k, \ell, m, n$ of the six edges as shown in \figref{f:stndtetrahedron}.  Here we have oriented the edges according to the vertex orders.  Define the tensor 
\begin{equation} \label{eq:tvdefinitionFmatrix}
\left|
\begin{matrix}
i & j & m\\
k & \ell & n
\end{matrix}
\right|
:= 
\frac{F^{i^* j m}_{k \ell^* n}}{\sqrt{d_m d_n}}
 \enspace .
\end{equation}
Now for a tetrahedron~$t$, the value~$g_t^\phi$ is obtained from this tensor by arbitrarily aligning~$t$ to the standard tetrahedron, and then conjugating a label if the orientation of the edge does not agree with the standard orientation.  For example, if~$t$ is the tetrahedron of \figref{f:nonstandardtetrahedronfirst},  
then 
\begin{equation}
g_t^\phi = \left|\begin{matrix} i & j & m \\ k^* & \ell^* & n^* \end{matrix}\right|
 \enspace .
\end{equation}
The tetrahedral symmetry of~$F$ implies that $g_t^\phi$ is well-defined.  

\begin{figure*}
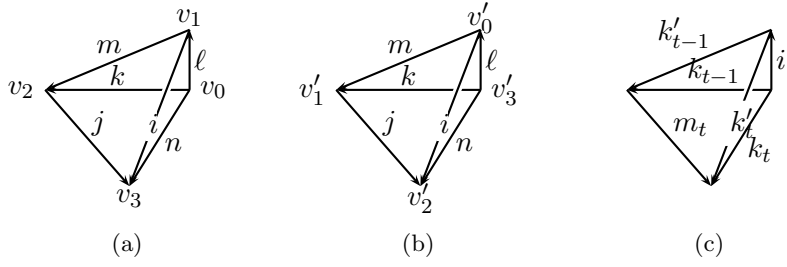

\centering
\begin{tabular}{c@{$\qquad\qquad$}c@{$\qquad\qquad $}c}
\subfigure[]{\label{f:stndtetrahedron}\raisebox{40pt}{\standardtetrahedronfirst}}
&
\subfigure[]{\label{f:nonstandardtetrahedronfirst}\raisebox{40pt}{\nonstandardtetrahedronfirst}}
&
\subfigure[]{\label{f:standardtetrahedronthird}\raisebox{40pt}{\standardtetrahedronthird}}
\end{tabular}
\caption{A standard tetrahedron labeling and edge orientations (a), and two other labelings.} 
\end{figure*}

Then set 
\begin{equation} \label{eq:tvdefinition}
\TV_\cC(M, \chi) = \cD^{-2 \abs{V_M} + \abs{V_{\partial M}}} \sum_{\substack{\textrm{labelings }\phi: \\
\phi|_{\partial M} = \chi}} \prod_{\substack{\text{edges}\\e \in M \backslash \partial M}} d_{\phi(e)} \prod_{\substack{\text{edges}\\e \in \partial M}} \sqrt{d_{\phi(e)}} \prod_{\text{tetrahedra $t$}} g_t^\phi
 \enspace ,
\end{equation}
where $|V_M|$ is the number of vertices in the triangulation, and $|V_{\partial M}|$ is the number of vertices on the boundary.  The sum is over all labelings~$\phi$ that agree with~$\chi$ on the boundary after conjugating labels so edge orientations on the boundary triangulation agree with the orientations from the full triangulation.  It can be shown that the quantity~$\TV_\cC(M, \chi)$ does not depend on the choice of triangulation in the interior of~$M$.

\subsection{The Turaev-Viro code} \label{sec:turaevvirocode}

Let~$\Sigma$ be a triangulated two-dimensional surface.  For simplicity, assume that $\Sigma$ lacks a boundary.  Consider the $3$-manifold~$M = \Sigma \times [-1,1]$.  Triangulate its boundary~$\partial M = \Sigma \times \{-1, 1\}$ with two copies of the triangulation of~$\Sigma$.  Then a labeling~$\chi$ of this triangulation consists of a pair $(\chi_+, \chi_-)$ of labelings of the triangulation of~$\Sigma$.  In particular, we can interpret the quantity~$\TV_\cC(M, \chi)$ as the matrix element of an operator 
\begin{equation} \label{eq:TVcodeprojection}
\pTV_\cC(\Sigma \times [-1,1]) = \sum_{\chi = (\chi_+, \chi_-)} \TV_\cC(M, \chi) \ket{\chi_+}\bra{\chi_-}
 \enspace .
\end{equation}
This operator acts on the Hilbert space~$(\mathbb{C}^{m})^{\otimes |E|}$ of labelings of the $|E|$~edges of the triangulation of~$\Sigma$ by the $m$~anyon labels of~$\cC$.  It only depends on the boundary triangulation, and not on a triangulation of~$M$.  

Define the Turaev-Viro code on~$\Sigma$ as the range of $\pTV_\cC(\Sigma \times [-1,1])$.  Since gluing together two copies of $\Sigma \times [-1,1]$ along a boundary component gives again $\Sigma \times [-1,1]$, and since the gluing of manifolds corresponds to the concatenation of the linear maps from Eq.~\eqnref{eq:TVcodeprojection}, $\pTV_\cC(\Sigma \times [-1,1])$ is a projection.  

We will next show that the Turaev-Viro code is the same as~$\cH_\Sigma$, i.e., the Hilbert space of equivalence classes of colored ribbon graphs embedded in~$\Sigma$.

\subsection{Local stabilizers for the Turaev-Viro code: The Levin-Wen Hamiltonian} \label{sec:localstabilizers}

Consider a vertex~$p$, with degree~$r$ in the triangulation, as in \figref{f:pyramidbase}.  Place a ``blister" put on top of this configuration by introducing a new vertex~$q$, with the new edges on the boundary labeled $\vec{k}' = (k_1', \ldots, k_r')$, as shown in \figref{f:pyramidcomplete} (the edge from $p$ to $q$ is oriented upwards).  The blister is a triangulation of a ball~$B$ with boundary edges labeled $(\vec{k}, \vec{k}', \vec{m})$.  Define 
\begin{equation} \label{eq:bpblisterdefinition}
B_p := \sum_{\vec{k}, \vec{k}', \vec{m}}\frac{\cD^r}{\sqrt{\prod_t d_{m_{t}}}}\TV_\cC(B,(\vec{k}, \vec{k}', \vec{m})) \ketbra{\vec {k'}}{\vec k} \otimes \ketbra{\vec m}{\vec m}
 \enspace ,
\end{equation}
a linear map on the space~$(\mathbb{C}^m)^{\otimes 2r}$ of labelings $(\vec k, \vec m)$.  Similarly as before, since gluing two balls together results in another ball, $B_p$ is a projection.  The additional factors compared to Eq.~\eqnref{eq:TVcodeprojection} are necessary because the outer edges with labels~$\vec{m}$ remain on the boundary when gluing, and the number of vertices on the boundary stays the same.  

\begin{figure}
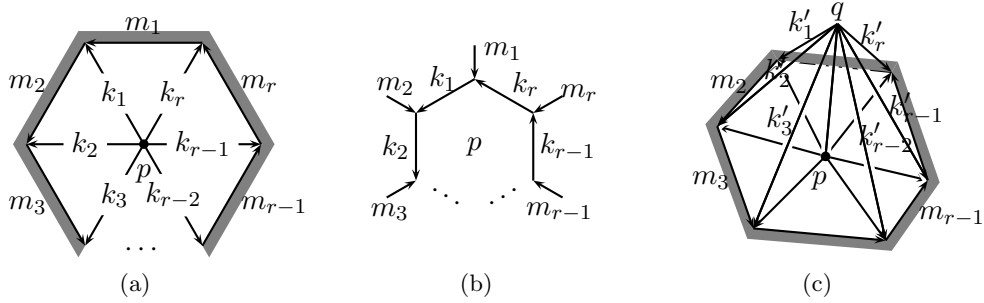

\centering
\begin{tabular}{c@{$\qquad\qquad$}c@{$\qquad\qquad$}c}
\subfigure[]{\label{f:pyramidbase}\raisebox{.2in}{\orientedpneighborhood}}
&
\subfigure[]{\label{f:orientedlabeledplaquette}\raisebox{.65in}{\orientedlabeledplaquette{k}{m}}}
&
\subfigure[]{\label{f:pyramidcomplete}\raisebox{.2in}{\secondorientedpblister}}
\end{tabular}
\caption{The neighborhood of a vertex~$p$ in a triangulated surface can be oriented as in~(a), in turn orienting the trivalent dual graph in~(b).  Attaching a ``blister" above~$p$ gives~(c) when $r = 6$.} 
\end{figure}

For different vertices $p$ and $p'$, the operators $B_p$ and $B_{p'}$, acting on $(\mathbb{C}^m)^{\otimes \abs E}$, commute.  This is clear if $p$ and $p'$ are nonadjacent.  When $p$ is adjacent to $p'$, both products $B_{p'} B_p$ and $B_p B_{p'}$ correspond to attaching the same bipyramid to~$\Sigma$, only with different internal triangulations, as shown in \figref{f:blisteroperatorscommute} in the case that $p$ and $p'$ are connected by a single edge.  Since the Turaev-Viro invariant does not depend on the internal triangulation, we conclude $B_p B_{p'} = B_{p'} B_p$. 

\begin{figure}
\centering
\begin{tabular}{c@{$\qquad\qquad$}c}
\subfigure[]{\raisebox{0.1in}{
\begin{pspicture}[shift=0](0,0)(4.3,2.5)
\put(0,0.15){\scalebox{0.8}{\protect\epsfig{file=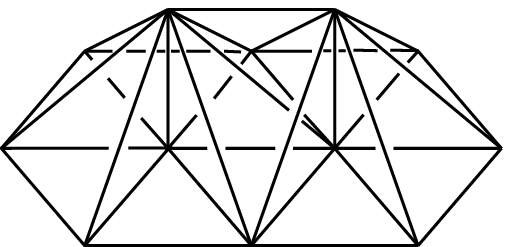}}}
\put(1.25,0.5){$p$}
\put(2.5,0.5){$p'$}
\end{pspicture}}}
&
\subfigure[]{\raisebox{0.1in}{
\begin{pspicture}[shift=0](0,0)(4.3,2.5)
\put(0,0.15){\scalebox{0.8}{\protect\epsfig{file=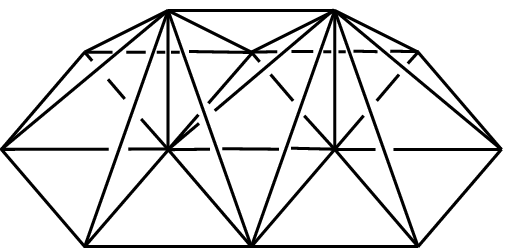}}}
\put(1.25,0.5){$p$}
\put(2.5,0.5){$p'$}
\end{pspicture}}}
\end{tabular}
\caption{For two adjacent vertices $p$ and $p'$, the products $B_{p'} B_p$ and $B_p B_{p'}$ correspond to different internal triangulations of the same bipyramid.  Hence $B_p$ and $B_{p'}$ commute.}  
\label{f:blisteroperatorscommute}
\end{figure}

Similarly, the product~$\prod_p B_p$ of over all vertices of the triangulation corresponds to a ``thickening'' of $\Sigma$, and therefore 
\begin{equation}
\prod_p B_p = \pTV(\Sigma \times [-1,1])
 \enspace .
\end{equation}
Hence the Turaev-Viro code equals the simultaneous $+1$-eigenspace of the $B_p$ projections.  

In fact, these $B_p$ operators are the same as the $B_p$ plaquette operators in the definition of the Levin-Wen Hamiltonian~\eqnref{eq:levinwenhamiltonian}.  To see this, evaluate the invariant~$\TV_\cC(B,(\vec{k}, \vec{k}', \vec{m}))$ associated with \figref{f:pyramidcomplete} explicitly.  With fixed boundary labels $(\vec k, \vec k', \vec m)$, the only free index is the label~$i$ of the internal edge~$\{p,q\}$.  A typical tetrahedron has the form shown in \figref{f:standardtetrahedronthird}, whence 
\begin{equation}
g_t^\phi = \frac{F^{(k'_t)^*m_t k'_{t-1}}_{k_{t-1}i^*k_t}}{\sqrt{d_{k'_{t-1}} d_{k_t}}} = \frac{F^{m_t k_t^*k_{t-1}}_{ik'_{t-1}(k'_t)^*}}{\sqrt{d_{k_{t-1}}d_{k'_t}}}
 \enspace .
\end{equation}
Substituting into Eqs.~\eqnref{eq:tvdefinition} and~\eqnref{eq:bpblisterdefinition}, we obtain 
\begin{equation} \label{eq:derivedplaquette}
B_p = \frac{1}{\cD^2}\sum_{\vec{k}, \vec{k}', \vec{m}} \sum_i d_i\left(\prod_{t=1}^r F^{m_t k_t^*k_{t-1}}_{ik_{t-1}'(k'_t)^*}\right)\ket{\vec{k}', \vec{m}}\bra{\vec{k}, \vec{m}}
 \enspace .
\end{equation}
Considering the dual graph, in \figref{f:orientedlabeledplaquette}, we recognize in Eq.~\eqnref{eq:derivedplaquette} the definition~\eqnref{eq:bpidefac} of the Levin-Wen Hamiltonian's plaquette operators.

\subsection*{Acknowledgments}

We would like to thank Gorjan Alagic, Michael Freedman, Stephen Jordan, Alexei Kitaev, Liang Kong and John Preskill.  

R.K.\ acknowledges support by NSF grant PHY-0803371 and SNF~PA00P2-126220.  
G.K.\ acknowledges support by NSF grant DMS-0606795.  
B.R.\ acknowledges support from NSERC and ARO.  
Some of this research was conducted while two of the authors were visiting the Kavli Institute for Theoretical Physics, supported by NSF grant PHY05-51164.

\appendix

\section{Ribbon graphs for general categories}

In this appendix, we collect various facts about ribbon graphs and the fusion basis states discussed in the main text.  In \appref{sec:basicpropertiesappendix}, we summarize a few general identities satisfied by ribbon graphs.  In \appref{sec:normalizationgeneral}, we give alternative definitions of the inner product on the ribbon graph Hilbert space~$\cH_\Sigma$, and use this to show that the fusion basis states are mutually orthogonal.  \appref{sec:fibonacciapplication} provides a few explicit examples of fusion basis states for the Fibonacci model.  We conclude with a short explanation of the generalization to higher-genus surfaces in \appref{sec:highergenuss}.

\subsection{Ribbon graph identities} \label{sec:basicpropertiesappendix}

Here we  state a few useful equivalence relations between ribbon graphs.  First, bubbles can be removed by the rule
\begin{equation} \label{eq:bubblerule}
\threeDdoubleringsinglenoboundary{m}{i}{j}{n} = \delta_{mn}\sqrt{\frac{d_id_j}{d_n}}\threeDdoubleringstraightline{n}
\end{equation}
Next we summarize some of the main properties of the vacuum lines defined in Eq.~\eqnref{eq:threeDlinevacuum}.  
\begin{lemma} \label{lem:vacuum}
The following identities hold, regardless of the contents of the shaded region: 
\begin{gather}
\label{it:secondpropertyvacuum}
\raisebox{-2.4ex}{\begin{picture}(40,35)(2,5)\twoDleftline{i} \twoDvacuum \twoDhole\end{picture}}
= \raisebox{-2.4ex}{\begin{picture}(40,35)(2,5) \twoDrightline{i} \twoDvacuum\twoDhole\end{picture}}
\\
\label{it:firstpropertyvacuum}
\raisebox{-2.4ex}{\begin{picture}(40,35)(2,5)\twoDoutervacuum\twoDvacuum\twoDhole\end{picture}}
=\cD \raisebox{-2.4ex}{\begin{picture}(40,35)(2,5)\twoDvacuum\twoDhole\end{picture}}
\\
\label{it:thirdpropertyvacuum}
\threeDlinewithringvacuum{j} 
= \cD \delta_{j\vac}
 \enspace .
\end{gather}
\end{lemma}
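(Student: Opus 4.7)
Each identity follows by expanding the vacuum line via $\threeDlinevacuum = \frac{1}{\cD}\sum_i d_i \threeDline{i}$ and applying the ribbon-graph relations already at hand: crossing resolution~\eqnref{eq:overcrossingresolution}, bubble rule~\eqnref{eq:bubblerule}, loop rule~\eqnref{e:looprulegeneral}, and unitarity of the topological $S$-matrix. The three claims are logically linked, and I plan to prove them in the order \eqnref{it:thirdpropertyvacuum}, \eqnref{it:secondpropertyvacuum}, \eqnref{it:firstpropertyvacuum}, using each as a stepping stone for the next.

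For \eqnref{it:thirdpropertyvacuum}, expansion turns the encircling vacuum ring into $\frac{1}{\cD}\sum_k d_k$ copies of a $k$-loop around the $j$-strand. Each summand is a Hopf link, and by the definition of the topological $S$-matrix (verified by closing up the $j$-strand into a loop, producing $\cD S_{jk}$ on one side and $d_j$ times the scalar on the other) it evaluates to $\cD S_{jk}/d_j$ times the straight $j$-strand. The overall coefficient is $\frac{1}{d_j}\sum_k d_k S_{jk}$, which by unitarity of $S$ with $S_{\vac k}=d_k/\cD$ collapses to $\cD\delta_{j\vac}/d_j$; this vanishes unless $j=\vac$, and when $j=\vac$ the $j$-strand itself is trivial, leaving the scalar $\cD$.

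For \eqnref{it:secondpropertyvacuum}, the vacuum line adjacent to the puncture ``screens'' the puncture: after expanding the vacuum line, sliding the $i$-strand from one side of the hole to the other can be realized by passing it around each summand $k$-strand, which by the crossing resolution~\eqnref{eq:overcrossingresolution} introduces a $k$-loop encircling the $i$-strand. Applying \eqnref{it:thirdpropertyvacuum} summand by summand collapses the weighted sum onto its $k=\vac$ term, at which point the two configurations trivially coincide. For \eqnref{it:firstpropertyvacuum}, I expand the outer vacuum line and again apply \eqnref{it:thirdpropertyvacuum}: the outer $k$-loop encircles the inner vacuum line together with the hole, a region of trivial total charge, so only the $k=\vac$ summand survives; combined with the scalar value of the inner vacuum loop, obtained by the loop rule~\eqnref{e:looprulegeneral}, this produces the stated prefactor $\cD$.

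The main obstacle is \eqnref{it:secondpropertyvacuum}: making the ``side-swap equals full encirclement'' argument rigorous requires careful analysis of the geometry of the $i$-strand relative to the vacuum line and the puncture, and in particular of the way the crossing resolution~\eqnref{eq:overcrossingresolution} reassembles into the encircling configuration needed to invoke~\eqnref{it:thirdpropertyvacuum}. Once that geometric reduction is pinned down, the remaining algebraic steps in all three parts are routine applications of the $F$- and $R$-matrix identities combined with the $S$-matrix orthogonality relations.
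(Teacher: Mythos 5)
Your proof of \eqnref{it:thirdpropertyvacuum} matches the paper's (write a single $k$-ring around the $j$-strand as $\alpha_{kj}$ times the bare strand, fix $\alpha_{kj}=\cD S_{kj}/d_j$ by closing the $j$-strand, and use unitarity of $S$ with $S_{k\vac}=d_k/\cD$), so that part is fine. The gap is in the other two parts, and it is the one you yourself flag. For \eqnref{it:secondpropertyvacuum}, the ``slide the $i$-strand to the other side by passing it around each $k$-summand'' move is not available: the hole is a genuine topological obstruction both in $\Sigma$ and in the thickened surface $\Sigma\times[-1,1]$ (it is $A\times[-1,1]$ there), so no amount of crossing resolution via \eqnref{eq:overcrossingresolution} lets the strand migrate past it --- that migration is precisely the content of the identity and must be produced algebraically. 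Moreover, even granting the reduction you describe, the endpoint would be a vacuum ring encircling the $i$-strand, and \eqnref{it:thirdpropertyvacuum} would then evaluate it to $\cD\,\delta_{i\vac}$, annihilating the $i$-strand for $i\neq\vac$ and contradicting the identity you are trying to prove. The paper's proof is much more direct and does not route through \eqnref{it:thirdpropertyvacuum} at all: expand the vacuum ring as $\frac{1}{\cD}\sum_j d_j$ times a $j$-loop, apply one $F$-move to fuse the $i$-strand into the $j$-loop, and use the normalization axiom $F^{ii^*\vac}_{j^*jk}=\sqrt{d_k/(d_id_j)}\,\delta_{ijk}$ from \eqnref{eq:levinwensymmetries}; the resulting expression $\frac{1}{\cD}\sum_{j,k}\sqrt{d_i/(d_jd_k)}\,\delta_{i^*jk}\,(\cdots)$ is manifestly symmetric under exchanging the two sides, which is the claim.

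Your argument for \eqnref{it:firstpropertyvacuum} is also quantitatively wrong. A single $k$-loop around a region of trivial total charge does \emph{not} vanish for $k\neq\vac$; it unlinks and contracts to the scalar $d_k$ by \eqnref{e:looprulegeneral}. (You have swapped the roles in \eqnref{it:thirdpropertyvacuum}: a \emph{vacuum} loop kills nontrivial enclosed charge, but trivial enclosed charge does not kill nontrivial loops.) If only the $k=\vac$ summand survived, the outer loop would contribute $\frac{1}{\cD}d_\vac=\frac{1}{\cD}$, not the required $\cD$. The correct derivation --- and the paper's --- uses \eqnref{it:secondpropertyvacuum}: each $k$-summand of the outer vacuum line is pulled across the hole (which the inner vacuum loop permits), unlinks, and contracts to $d_k$, giving $\frac{1}{\cD}\sum_k d_k^2=\cD$ times the single remaining vacuum loop. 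So the logical order should be: prove \eqnref{it:secondpropertyvacuum} directly by an $F$-move, then deduce \eqnref{it:firstpropertyvacuum} from it, with \eqnref{it:thirdpropertyvacuum} proved independently as you did.
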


\begin{proof}
To prove Eq.~\eqnref{it:secondpropertyvacuum}, we apply an $F$-move, getting
\begin{equation}\begin{split}
\frac{1}{\cD} \sum_j d_j\raisebox{-2.4ex}{\begin{picture}(40,35)(2,5)\twoDsolid{j}\twoDleftline{i}\twoDhole\end{picture}}
&= \frac{1}{\cD} \sum_{j,k} d_j F^{i^* i\vac}_{j^*jk} \raisebox{-2.4ex}{\begin{picture}(40,35)(2,5)\twoDtopbottom{k}{j}{i}{i}\twoDhole\end{picture}} \\
&= \frac{1}{\cD} \sum_{j,k} \sqrt{\frac{d_i}{d_jd_k}} \delta_{i^*jk} \raisebox{-2.4ex}{\begin{picture}(40,35)(2,5)\twoDtopbottom{k}{j}{i}{i} \twoDhole\end{picture}} \\
\end{split}\end{equation}
By symmetry, this equals the right-hand side of Eq.~\eqnref{it:secondpropertyvacuum}.  
Eq.~\eqnref{it:firstpropertyvacuum} also follows: 
\begin{equation}\begin{split}
\raisebox{-2.4ex}{\begin{picture}(40,35)(2,5)\twoDoutervacuum\twoDvacuum\twoDhole\end{picture}}
= \frac{1}{\cD} \sum_j d_j \raisebox{-2.4ex}{\begin{picture}(40,35)(2,5)\twoDvacuum\twoDhole\twoDoutersolid{$j$}\end{picture}}
= \frac{1}{\cD} \sum_j d_j \!\!\! \raisebox{-2.4ex}{\begin{picture}(70,35)(2,5)\twoDvacuum\twoDhole \!\!\! \twoDoutersolidshifted{$j$}\end{picture}}
= \frac{1}{\cD} \sum_j d_j^2 \!\!\! \raisebox{-2.4ex}{\begin{picture}(40,35)(2,5)\twoDvacuum\twoDhole\end{picture}} 
\end{split}\end{equation}

For the proof of Eq.~\eqnref{it:thirdpropertyvacuum}, observe that necessarily 
\begin{equation}
\threeDlinewithring{j}{i} = \alpha_{ij}\threeDline{j}
\end{equation}
for some scalar~$\alpha_{ij} \in \mathbb{C}$.  Taking the trace, i.e., closing the loop with label~$j$, gives $\alpha_{ij} = \cD S_{ij} / d_j$.  The claim then immediately follows because 
\begin{equation}
\frac{1}{\cD} \sum_i d_i \threeDlinewithring{j}{i} = \frac{\cD}{d_j} \Bigg(\sum_i \underbrace{\frac{d_i}{\cD}}_{S_{i\vac}^*} S_{ij} \Bigg) \threeDline{j}
\end{equation}
where the parenthesized term is equal to $\delta_{j\vac}$ by the unitarity of the $S$-matrix.
\end{proof}

An immediate consequence of Eq.~\eqnref{it:thirdpropertyvacuum} in \lemref{lem:vacuum} is
\begin{equation} \label{eq:Fmovevacuumstring}
\threeDlinetopbottomFmove{i}{i'}{\ell}{j}{j'} = \cD \, \delta_{ii'} \delta_{jj'} \sqrt{\frac{d_\ell}{d_id_j}}\threeDlinetopbottomunconnected{i}{j}
\end{equation}
Indeed, applying an $F$-move to the edge with label~$\ell$ gives
\begin{equation}\begin{split}
\threeDlinetopbottomFmove{i}{i'}{\ell}{j}{j'} &= \sum_k F^{i j \ell^*}_{j^{'*}i^{'*}k} \threeDlinetopbottom{i}{i'}{k}{j}{j'} \\
&=\cD \, \delta_{ii'} \delta_{jj'} F^{i j \ell^*}_{j^*i^*\vac}\threeDlinetopbottomunconnected{i}{j}
\end{split}\end{equation}
and the claim follows by Eq.~\eqnref{eq:levinwensymmetries}.

\subsection{The anyonic fusion basis states form an orthonormal basis} \label{sec:normalizationgeneral}

Fix a pants decomposition, corresponding to a rooted binary tree~$T$, of the $n$-punctured sphere $\Sigma = \Sigma_n$.  In this section, we show that the anyonic fusion basis states defined in \secref{s:stringnetsdoubledmodel} form a complete, orthonormal basis.  

The proof is straightforward, but is made inconvenient by the definition of the inner product in \secref{sec:stringnethilbertspacegeneral}.  Recall that an anyonic fusion basis state is specified by two edge-labelings of~$T$ by particles from the category~$\cC$, together with particle labels for each of the $n$ boundary conditions.  The corresponding state is given by placing the two fusion diagrams on $\Sigma \times \{\pm 1\} \subset \Sigma \times [-1,1]$, adding vacuum loops around $n-1$ of the punctures, closing up the boundary conditions, and then reducing to a ribbon graph in~$\Sigma$.  For example, Eq.~\eqnref{eq:generalizedfourpunctureds} shows a typical fusion basis state on~$\Sigma_4$, with the pants decomposition~$\raisebox{-.25em}{\includegraphics[scale=.33]{SVG/fusiondiagramthree_first}}$, before reducing to two dimensions.  

To compute the inner product, a ribbon graph in $\Sigma \times [-1,1]$ must first be reduced to~$\Sigma$, and then to an orthonormal basis of $\cH_\Sigma$, as for example in Eq.~\eqnref{eq:annulusorthonormalbasis}.  To avoid these steps, we will define a new inner product, the trace inner product, that can be computed directly on ribbon graphs without first reducing them to two dimensions.  Although the definition is quite different, the trace inner product $\langle \cdot \vert \cdot \rangle_{tr}$ agrees with the original inner product $\langle \cdot \vert \cdot \rangle$ from \secref{sec:stringnethilbertspacegeneral}.  It is also convenient for extending the definition of~$\cH_\Sigma$ to more general surfaces (see \appref{sec:highergenuss}).  

Slightly more background on ribbon diagrams is necessary.  A line in a two-dimensional ribbon diagram represents a narrow ribbon that is flat in the plane.  A full twist in the ribbon can be drawn as a kink, as in Eq.~\eqnref{eq:topphase}, and is equivalent to adding a phase to the untwisted ribbon.  The same is not true for half twists, though, and the ribbon-graph notation we have developed so far does not allow for specifying half twists.  For all the equivalence relations it should be assumed that the involved ribbons all share the same up/down orientation.  

Let $\Sigma$ be a compact, orientable surface, with a marked point on each boundary component.  Consider two ribbon graphs in~$\Sigma$.  To compute their trace inner product, embed the two diagrams in $\Sigma \times \{ \pm 1 \} \subset \Sigma \times [-1,1]$.  Conjugate all particle labels in the first diagram, and also give them the outward orientation, i.e., turn all ribbons upside down.  Next, connect together the ribbons at the boundary points.  (Since the ribbons on one half have been turned upside down, there is no need to twist the ribbons to connect them together; a twist would have introduced a phase ambiguity.)  Finally, reduce the resulting ribbon graph to two dimensions.  In order to carry out this reduction, it may be useful to introduce pairs of half-twists on some ribbons in order to flip their orientation in local regions, so that the equivalence relations can be applied.  Now, the inner product is the coefficient of the empty or vacuum diagram, divided by the product over all boundary labels $i$ of~$\sqrt{d_i}$.  

Let us first sketch the argument that the trace inner product agrees with the original definition of the inner product.  For computational basis states on the annulus $\Sigma_2$, we have 
\begin{equation}
\Big\langle \Danyonsstandardbasisone{k}{i}{j}{\ell} \Big| \Danyonsstandardbasisone{k'}{i'}{j'}{\ell'} \Big\rangle_{tr}
= 
\frac{\delta_{k k'} \delta_{\ell \ell'}}{\sqrt{d_k d_\ell}}
\raisebox{-8.0ex}{
\begin{picture}(80,95)(-5,25)
\put(17,56.5){\Danyonsstandardbasisonex{\,k}{i'}{j'}{\ell}}
\put(0,0){\includegraphics[scale=.6]{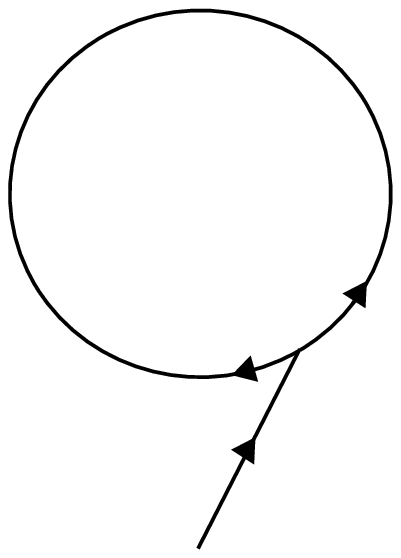}}
\put(47,40){$\ell$}
\put(36,49){$i$}
\put(65,70){$j$}
\end{picture}}
\end{equation}
where on the right-hand side the two ribbons labeled $\ell$ should be matched up and then the diagram reduced to its vacuum coefficient.  Carrying out this calculation---beginning with an $F$-move between the two ribbons $j$ and $j'$---indeed gives $\delta_{k k'} \delta_{\ell \ell'} \delta_{i i'} \delta_{j j'}$, agreeing with the standard inner product.   The general calculation of the trace inner product on computational basis states on the $n$-punctured sphere $\Sigma_n$ is quite similar; for example, 
\begin{equation} \label{eq:traceproductstringnet}
\Big\langle \fusethreedirectedreverse{a_1}{a_2}{a_3}{c_1}{c_2} \vert \fusethreedirectedreverse{a_1'}{a_2'}{a_3'}{c_1'}{c_2'} \Big\rangle_{tr}
= 
\frac{\delta_{\vec{a} \vec{a'}} \delta_{c_2 c_2'}}{\sqrt{d_{a_1} d_{a_2} d_{a_3} d_{c_2} }}
\raisebox{-8.0ex}{
\begin{picture}(110,90)(0,30)
\put(0,0){\Danyonsfourpuncturedflippedsingle}
\put(0,0){\Danyonsfourpuncturedsingle{a}{c}}
\end{picture}}
\end{equation}
which by Eq.~\eqnref{eq:bubblerule} simplifies to $\delta_{\vec{a} \vec{a'}} \delta_{\vec{c}\vec{c'}}$.  

Next we argue that the anyons are orthonormal under the trace inner product.  Again we begin with the case of the annulus $\Sigma_2$.  The anyonic fusion basis states on $\Sigma_2$ are the ribbon graphs $\myvecstand{k}{\ell}{i}{j}$ of Eq.~\eqnref{eq:kaelldef} with $\delta_{i j \ell^*} = \delta_{i' j' \ell'^*} = 1$.  

\begin{lemma} \label{lem:idempotentsfromdouble}
Under stacking, 
\begin{equation}
\myvecstand{\ell'}{k'}{i'}{j'} \; \myvecstand{k}{\ell}{i}{j} = \delta_{i i'} \delta_{j j'} \delta_{k k'} \; \cD \sqrt{\frac{d_k}{d_i d_j}} \; \myvecstand{\ell'}{\ell}{i}{j}
 \enspace .
\end{equation}
\end{lemma}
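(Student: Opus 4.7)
The plan is to compute the product directly in the three-dimensional ribbon graph picture on the thickened annulus $\Sigma_2 \times [-1,1]$ where the basis states were originally defined. Prior to 2D reduction, each state $\myvecstand{k}{\ell}{i}{j}$ consists of a ribbon labeled $i$ in $\Sigma_2 \times \{+1\}$, a ribbon labeled $j$ in $\Sigma_2 \times \{-1\}$, a vacuum loop in $\Sigma_2 \times \{0\}$ encircling one of the two punctures, and closure diagrams at the two marked boundary points carrying labels $k$ and $\ell$. Stacking two such states identifies them along the common boundary circle of the annulus on which the $k$-markers sit: the labels $k$ (bottom piece) and $k'$ (top piece) on that circle must match, which immediately produces the factor $\delta_{kk'}$.

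The simplification then proceeds by two local moves. First, the two vacuum loops (one per piece) encircling the now-identified puncture sit as parallel loops in the stacked annulus and merge via the doubling property \eqnref{it:firstpropertyvacuum} of \lemref{lem:vacuum}, contributing a factor of $\cD$ and leaving a single vacuum loop. Second, at the identified $k$-marker, the two closure diagrams combine into a theta-configuration: an interior $k$-edge joining a junction of $i, j$ (from the bottom piece) to a junction of $i', j'$ (from the top piece), threaded by the surviving vacuum loop. This is exactly the situation addressed by the identity~\eqnref{eq:Fmovevacuumstring}: an $F$-move on the interior $k$-edge followed by the ring-cuts-off rule \eqnref{it:thirdpropertyvacuum} forces $i = i'$ and $j = j'$ and yields the normalization $\sqrt{d_k/(d_i d_j)}$. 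The diagram that remains is precisely the 3D picture defining $\myvecstand{\ell'}{\ell}{i}{j}$, with the two unmatched markers carrying the outer labels $\ell'$ and $\ell$. Collecting the accumulated scalars yields $\delta_{kk'}\delta_{ii'}\delta_{jj'} \cdot \cD \sqrt{d_k/(d_i d_j)}$ as claimed.

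The main technical obstacle will be the careful accounting of the $\cD$ factors. A naive count would attribute one $\cD$ to the vacuum merger and a second $\cD$ to the right-hand side of \eqnref{eq:Fmovevacuumstring}, giving $\cD^2$, yet the lemma contains only a single $\cD$; the reconciliation must be that the vacuum loop eliminated by \eqnref{eq:Fmovevacuumstring} is the very one produced by the merger, so those two appearances of $\cD$ are to be identified rather than multiplied. I would verify this by drawing the intermediate 3D diagrams explicitly and tracking each move, and cross-check by evaluating a special case via the trace inner product (e.g.\ $\langle \myvecstand{k}{k}{i}{j}, \myvecstand{k}{k}{i}{j} \rangle_{tr}$) directly using \lemref{lem:vacuum} and the bubble rule~\eqnref{eq:bubblerule} to fix the overall normalization.
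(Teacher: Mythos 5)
Your overall strategy---working in the three-dimensional picture on the thickened annulus and reducing the stacked diagram via the vacuum-line identities of \lemref{lem:vacuum} together with \eqnref{eq:Fmovevacuumstring}---is the same as the paper's, and the $\delta_{kk'}$ from boundary matching and the $\sqrt{d_k/(d_id_j)}$ from \eqnref{eq:Fmovevacuumstring} are handled correctly. But the vacuum-loop accounting contains a genuine error, and the proposed reconciliation does not repair it. The two vacuum loops in the stacked diagram play \emph{different} roles and must not be merged. One of them is left exactly where it is: it survives the entire computation and becomes the vacuum loop that the output state $\myvecstand{\ell'}{\ell}{i}{j}$ is required to contain. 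Only the other loop is repositioned, using \eqnref{it:secondpropertyvacuum}, so that it encircles the internal $k$-edge; it is then consumed by \eqnref{eq:Fmovevacuumstring} (whose factor of $\cD$ ultimately comes from \eqnref{it:thirdpropertyvacuum} killing the internal line), and this is the \emph{sole} source of the $\cD$ in the lemma. Your route instead merges the two loops via \eqnref{it:firstpropertyvacuum} (one factor of $\cD$) and then feeds the single surviving loop into \eqnref{eq:Fmovevacuumstring} (a second factor of $\cD$); carried out literally this yields $\cD^2$ times a diagram with \emph{no} vacuum loop, which is not $\myvecstand{\ell'}{\ell}{i}{j}$ --- and since that loop encircles a puncture of the annulus, it cannot be reinserted at the cost of a mere scalar. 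The suggestion that the two occurrences of $\cD$ ``are to be identified rather than multiplied'' is not a legitimate step: they are honest scalar factors produced by two distinct applications of local relations, and if both relations are applied, both factors appear.

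The fix is simply to delete the merging step. The doubling property \eqnref{it:firstpropertyvacuum} is the right mechanism for the idempotency of the \emph{closure-free} single-anyon diagram of Eq.~\eqnref{eq:singleanyonthreed}, where there is no internal edge to collapse; but for the basis states $\myvecstand{k}{\ell}{i}{j}$, which carry the trivalent closure vertices, the single $\cD$ arises from collapsing the internal $k$-edge, not from loop doubling. Your proposed cross-check against the trace inner product of Eq.~\eqnref{eq:anyonnormalizationsigmatwo} would indeed have exposed the spurious extra $\cD$, so the error is detectable, but as written the argument is not correct.
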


\begin{proof}
By Eq.~\eqnref{it:secondpropertyvacuum} in \lemref{lem:vacuum} and Eq.~\eqnref{eq:Fmovevacuumstring},
\begin{align}
\threeDdoubleringvv{\ell'}{i'}{j'}{k}{i}{j}{\ell} &= \threeDdoubleringbvv{\ell'}{i'}{j'}{k}{i}{j}{\ell} \\
&= \cD \delta_{i i'} \delta_{j j'} \sqrt{\frac{d_k}{d_i d_j}} \threeDdoubleringsingle{\ell'}{i}{j}{\ell}
\qedhere
\end{align}
\end{proof}

\noindent
In particular, $\frac{1}{\cD} \sqrt{\frac{d_i d_j}{d_k}} \myvecstand{k}{k}{i}{j}$ is an idempotent corresponding to a realization of the fusion space~$V^{i \otimes j}_{i \otimes j}$.  

Now the trace inner product of $\myvecstand{\ell'}{k'}{i'}{j'}$ and $\myvecstand{k}{\ell}{i}{j}$ is given by $1 / \sqrt{d_k d_\ell}$ times the vacuum coefficient of their stack $\myvecstand{\ell'}{k'}{i'}{j'} \; \myvecstand{k}{\ell}{i}{j}$, where additionally the ribbons $\ell$ and $\ell'$ are connected together.  The coefficient of the trivial particle $\vac$ in a vacuum line is $d_\vac / \cD = 1/\cD$, so by \lemref{lem:idempotentsfromdouble},  
\begin{equation} \label{eq:anyonnormalizationsigmatwo}
\braket{\myvecstand{\ell'}{k'}{i'}{j'}}{\myvecstand{k}{\ell}{i}{j}}
= 
\delta_{i i'} \delta_{j j'} \delta_{k k'} \delta_{\ell \ell'} \frac{1}{\sqrt{d_i d_j d_\ell}} \raisebox{-16pt}{\includegraphics{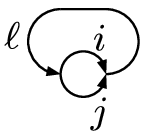}}
= 
\delta_{i i'} \delta_{j j'} \delta_{k k'} \delta_{\ell \ell'}
 \enspace .
\end{equation}

Define anyonic fusion basis states for $\cH_{\Sigma_4}$ 
\begin{equation} \label{eq:generalizedfourpuncturedv}
\myvecstandard{k}{\ell}{a}{b}{c}{d} = 
\myvecstandardpic{k}{\ell}{a}{b}{c}{d}
\end{equation}
Analogous states $\myvecstandard{k}{\ell}{a}{b}{c}{d} \in \cH_{\Sigma_n}$ can be defined for any $n \geq 3$ using a ``standard'' pants decomposition of $\Sigma_n$.  We show that these states, with fusion constraints satisfied, are mutually orthogonal and unit-normalized.  

\begin{lemma} \label{lem:anyonnormalization}
Let $\Psi = \myvecstandard{k}{\ell}{a}{b}{c}{d}$ and $\Psi' = \myvecstandard{k'}{\ell'}{a'}{b'}{c'}{d'}$ be the states in $\cH_{\Sigma_n}$, with $n \geq 3$, as introduced in Eq.~\eqnref{eq:generalizedfourpuncturedv}.  Then 
\begin{equation}
\braket{\Psi'}{\Psi} = \delta_{k k'} \delta_{\vec \ell \vec \ell'} \delta_{\vec a \vec a'} \delta_{\vec b \vec b'} \delta_{\vec c \vec c'} \delta_{\vec d \vec d'}
 \enspace .
\end{equation}
\end{lemma}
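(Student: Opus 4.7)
The plan is to prove this by induction on $n$, with the base case $n=2$ already handled in Eq.~\eqref{eq:anyonnormalizationsigmatwo}. The inductive step relies on decomposing the anyonic fusion basis state according to the pants decomposition, reducing the inner product on $\Sigma_n$ to inner products on smaller surfaces combined via the $\Sigma_2$ idempotent structure of \lemref{lem:idempotentsfromdouble}.

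For the inductive step, fix an internal cylindrical segment of the pants decomposition that splits $\Sigma = \Sigma_n$ into two subsurfaces $\Sigma_A = \Sigma_{m+1}$ and $\Sigma_B = \Sigma_{n-m+1}$, each with the inherited pants decomposition. Using the (general-category analog of the) gluing rule of \secref{sec:recursivegluing}, $\Psi$ can be written as a superposition of stackings of simpler anyonic fusion basis states on $\Sigma_A\times[-1,1]$ and $\Sigma_B\times[-1,1]$, joined along the cylindrical segment by the idempotent ribbon graph $\myvecstand{k}{k}{a_+}{a_-}$ (where $a_+ a_-$ is the pair of labels assigned to this segment by $\Psi$). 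The analogous decomposition applies to $\Psi'$.

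To compute $\braket{\Psi'}{\Psi}$, I would stack $\Psi'^\dagger$ on top of $\Psi$ inside $\Sigma\times[-1,1]$ and connect up the boundaries as in the definition of the trace inner product. Because the decomposition is geometrically local, the stack factors across the chosen cylindrical segment. On that segment, one obtains a stack of two anyonic idempotents, which by \lemref{lem:idempotentsfromdouble} forces $a_+ = a_+'$, $a_- = a_-'$, $k = k'$ and contributes a factor $\cD\sqrt{d_k/(d_{a_+}d_{a_-})}$. On each of $\Sigma_A$ and $\Sigma_B$, what remains after this reduction is precisely the computation of a trace inner product between two smaller anyonic fusion basis states (with the shared boundary on the cylindrical segment now fixed), to which the induction hypothesis applies. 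Tracking the normalization factors, in particular checking that the factor from \lemref{lem:idempotentsfromdouble} combines correctly with the factors $1/\sqrt{d_i}$ in the definition of the trace inner product on the induced smaller surfaces, yields the claimed delta-function result.

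The main obstacle is verifying that the gluing rule of \secref{sec:recursivegluing} extends to general ribbon categories at the level of three-dimensional ribbon graphs in $\Sigma\times[-1,1]$, and that the vacuum loops placed at each puncture of $\Sigma_A$ and $\Sigma_B$ in the smaller anyonic fusion basis states combine consistently with the vacuum loops of the original state on $\Sigma_n$ via the doubling identity~\eqref{it:firstpropertyvacuum} in \lemref{lem:vacuum}. Once this geometric decomposition is in place, the algebraic bookkeeping is routine, with \lemref{lem:idempotentsfromdouble} playing the role of a projector at each internal cylindrical segment of the pants decomposition tree.
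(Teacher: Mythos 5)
Your strategy is sound and would yield a correct proof, but it takes a genuinely different route from the paper's. The paper does not induct on $n$: it applies an $F$-move $F^{a_i a_i^* \vac}_{b_i^* b_i m_i^*}$ at every leaf simultaneously, peeling off an annulus state $\myvecstand{m_i}{\ell_i}{a_i}{b_i}$ around each puncture; the inner product then factors into the product of these $\Sigma_2$ inner products---evaluated by Eq.~\eqnref{eq:anyonnormalizationsigmatwo}, which is exactly your base case---times the trace inner product $\braket{\Phi'_{\vec m}}{\Phi_{\vec m}}$ of the two residual doubled trees, which is computed in one shot by joining the trees along their boundaries and collapsing the resulting bubbles with Eq.~\eqnref{eq:bubblerule}. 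You instead recurse over the pants tree, cutting at an internal cylindrical segment and invoking the gluing decomposition of \secref{sec:recursivegluing} together with the idempotent stacking of \lemref{lem:idempotentsfromdouble} at the cut. What your route buys is that you never have to manipulate the full $n$-leaf diagram; what it costs is exactly the obstacle you name: you must first manufacture a vacuum loop parallel to the cutting curve (the fusion basis state carries vacuum loops only around punctures, not around internal cylindrical segments) and establish the general-category gluing coefficients. Both are supplied by the doubling-and-pulling-out argument in the proof of \lemref{lem:tadpolegluinggeneral}, which does not rely on the present lemma, so there is no circularity; the normalization then closes via $\sum_k \delta_{b_+ b_- k^*}\, d_k /(d_{b_+} d_{b_-}) = 1$, i.e., Eq.~\eqnref{eq:qdimensionsidentity}. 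The paper's version is the more self-contained of the two, needing only $F$-moves, Eq.~\eqnref{eq:bubblerule} and the $\Sigma_2$ computation.
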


\begin{proof}
Although the calculation is fully general, we continue to illustrate the case of $n = 3$.  Applying $F$-moves, we find  
\begin{equation}
\Psi =
\sum_{\vec m} \prod_i F^{a_i a_i^* \vac}_{b_i^* b_i m_i^*}
\raisebox{-9.0ex}{
\begin{picture}(110,103)(0,0)
\put(0,48){\Danyonsfmoveb{m_1}{a_1}{b_1}{\ell_1}}
\put(33,48){\Danyonsfmoveb{m_2}{a_2}{b_2}{\ell_2}}
\put(67,48){\Danyonsfmoveb{m_3}{a_3}{b_3}{\ell_3}}
\put(0,0){\Danyonsfourpunctured{a}{b}{k}{c}{d}}
\end{picture}}
\end{equation}
and similarly for $\Psi'$.  Thus we can separate the calculation of the inner product $\braket{\Psi'}{\Psi}$ into two parts, the inner products of the topmost states in $\cH_{\Sigma_2}$ and the inner product of the lower doubled trees.  From Eq.~\eqnref{eq:anyonnormalizationsigmatwo}, this gives
\begin{equation}\begin{split}
\braket{\Psi'}{\Psi} 
&= 
\sum_{\vec m, \vec m'} \prod_i \Big( \big(F^{a_i' a_i'^* \vac}_{b_i'^* b_i' m_i'^*}\big)^* F^{a_i a_i^* \vac}_{b_i^* b_i m_i^*} 
\braket{\myvecstand{m_i'}{\ell_i'}{a_i'}{b_i'}}{\myvecstand{m_i}{\ell_i}{a_i}{b_i}} \Big) \braket{\Phi'_{\vec m'}}{\Phi_{\vec m}} \\
&= 
\delta_{\vec a' \vec a} \delta_{\vec b' \vec b} \delta_{\vec \ell' \vec \ell}
\sum_{\vec m} \prod_i \Big( \frac{d_{m_i}}{d_{a_i} d_{b_i}} \delta_{a_i b_i m_i^*} \Big) \braket{\Phi'_{\vec m}}{\Phi_{\vec m}}
 \enspace ,
\end{split}\end{equation}
where 
\begin{equation}
\Phi_{\vec m}
= \myvecstandardtwopic{k}{m}{a}{b}{c}{d}
\end{equation}
and $\Phi'_{\vec m'}$ is defined similarly.  

By definition of the trace inner product and using $F$-moves, we have
\begin{align} \label{eq:lasttracebk}
\sum_{\vec m} \prod_i \left( \frac{d_{m_i}}{d_{a_i} d_{b_i}} \delta_{a_i b_i m_i^*} \right) \braket{\Phi'_{\vec m}}{\Phi_{\vec m}}
&= \frac{1}{\sqrt{d_k}} \sum_{\vec m} \prod_i \left( \frac{\sqrt{d_{m_i}} \delta_{a_i b_i m_i^*}}{d_{a_i} d_{b_i}} \right)
\raisebox{-12.0ex}{
\begin{picture}(110,103)(0,0)
\put(0,0){\Danyonsfourpunctured{a}{b}{k}{c}{d}}
\put(-0.25,0){\Danyonsfmovebasetwo{m_1}}
\put(33.1,0){\Danyonsfmovebasetwo{m_2}}
\put(67,0){\Danyonsfmovebasetwo{m_3}}
\put(-0.25,-1){\Danyonsfourpuncturedflippedtwo}
\end{picture}} \nonumber \\
&= \frac{1}{\sqrt{d_k \prod_i d_{a_i} d_{b_i}}}
\raisebox{-9.0ex}{
\begin{picture}(110,110)(4,0)
\put(0,0){\Danyonsfourpuncturedflipped}
\put(0,0){\Danyonsfourpunctured{a}{b}{k}{c}{d}}
\put(0,0){\includegraphics[scale=.6]{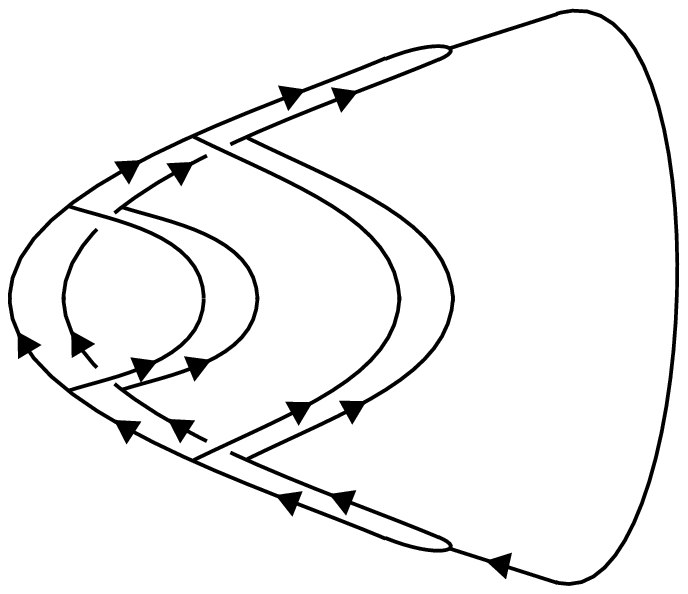}}
\end{picture}} \nonumber \\
&= \delta_{\vec c \vec c'} \delta_{\vec d \vec d'}
 \enspace .
\end{align}
The last step uses Eq.~\eqnref{eq:bubblerule} repeatedly.  
\end{proof}
\subsubsection*{Completeness of anyonic fusion basis for $\cH_{\Sigma_n}$}

It remains to show that the anyonic fusion basis spans~$\cH_{\Sigma_n}$.  We do so by verifying that the number of standard computational basis states equals the number of anyonic fusion basis.  

The proof is by induction in $n$.  For $n \in \{0,1\}$, $\cH_{\Sigma_n}$ is one-dimensional.  For $n = 2$, $\Sigma_n$ the annulus, with boundary labels $k^*$ and $\ell$, the numbers of computational basis states and of anyonic fusion basis states are both
\begin{equation}
\sum_{i, j} \delta_{k i^* j^*} \delta_{i j \ell^*}
 \enspace .
\end{equation}
See Eqs.~\eqnref{eq:annulusorthonormalbasis} and~\eqnref{eq:kaelldef}.  

For $\cH_{\Sigma_n}$ with a fixed root puncture, let $s_n(k)$ be the number of standard basis elements with label~$k$ on the root.  This function satisfies the recursion, for any $m$, $2 \leq m \leq n-1$, 
\begin{equation} \label{eq:fnrecursionrelation}
s_n(k) = \sum_{i, j} \delta_{i j k^*} s_m(i) s_{n+1-m}(j)
 \enspace .
\end{equation}

Similarly, let $a_n(k)$ be the number of anyonic fusion states in $\cH_{\Sigma_n}$ with root labeled $k$.  We will show that $a_n(k)$ satisfies the same recursion, Eq.~\eqnref{eq:fnrecursionrelation}, as $s_n(k)$.  For this purpose, let $\tilde a_n(\ell, \ell')$ be the number doubled fusion diagrams on a tree with $n$~leaves, with arbitrary boundary-consistent labels on the $n-1$~leaves and the doubled anyon label $\ell \otimes \ell'$ at the root.  Since picking a boundary-consistent label~$k$ for the root puncture gives a valid anyonic fusion basis state, we have
\begin{equation} \label{eq:gnkdef}
a_n(k) = \sum_{\ell, \ell'} \tilde a_n(\ell, \ell') \delta_{\ell \ell' k^*}
 \enspace .
\end{equation}
Moreover, $\tilde a$ satisfies the recursion, for $2 \leq m \leq n-1$, 
\begin{equation}
\tilde a_{n}(\ell, \ell') = \sum_{i, i', j, j'} \delta_{i j \ell^*} \delta_{i' j' \ell'^*} \tilde a_m(i, i') \tilde a_{n+1-m}(j, j')
 \enspace .
\end{equation}
Reinserting this into Eq.~\eqnref{eq:gnkdef} gives
\begin{align}
a_n(k) &= \sum_{i, i', j, j'} \tilde a_m(i, i') \tilde a_{n+1-m}(j, j') \sum_{\ell, \ell'} \delta_{\ell \ell' k^*} \delta_{i j \ell^*} \delta_{i' j' \ell'^*} \nonumber \\
&= \sum_{\ell, \ell', k} \delta_{\ell \ell' k^*} \Big(\sum_{r,i'} \tilde a_m(i, i') \delta_{i i' \ell^*} \Big) \Big(\sum_{j, j'} \tilde a_{n+1-m}(j, j') \delta_{j j' \ell'^*} \Big) \nonumber \\
&= \sum_{\ell, \ell', k} \delta_{\ell \ell' k^*} a_m(\ell) a_{n+1-m}(\ell')
 \enspace ,
\end{align}
the desired recursion.  
Here in the second step we used 
\begin{equation} \label{eq:fivepuncturedcounting}
\sum_{\ell, \ell'} \delta_{\ell \ell' k^*} \delta_{i j \ell^*} \delta_{i' j' \ell'^*} = \sum_{\ell, \ell'} \delta_{\ell \ell' k^*} \delta_{i i' \ell^*} \delta_{j j' \ell'^*}
 \enspace ,
\end{equation}
which is a consequence of the associativity of fusion, Eq.~\eqnref{eq:fusionassociativity}.  In particular, the left- and right-hand sides respectively equal the number of ribbon graphs  \newcommand*{\fivevalenta}{\raisebox{-3.7ex}{
\begin{picture}(48,58)(-2,-4)
\put(0,0){\includegraphics[scale=.5]{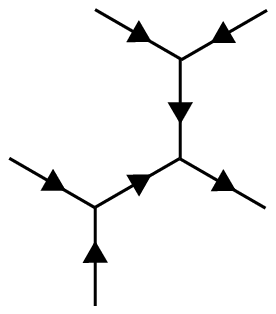}}
\put(15,5){\small $j'$}
\put(30,10){\small $k$}
\put(15,21){\small $\ell'$}
\put(28,26){\small $\ell$}
\put(28,44){\small $j$}
\put(16,44){\small $i$}
\put(7,21){\small $i'$}
\end{picture}}}
\newcommand*{\fivevalentb}{\raisebox{-3.7ex}{
\begin{picture}(48,58)(-2,-4)
\put(0,0){\includegraphics[scale=.5]{SVG/fivevalent}}
\put(15,5){\small $j'$}
\put(30,10){\small $k$}
\put(15,21){\small $\ell'$}
\put(28,26){\small $\ell$}
\put(28,44){\small $i'$}
\put(16,44){\small $i$}
\put(7,21){\small $j$}
\end{picture}}}
\begin{equation} \label{eq:basisfusionrulescnt}
\fivevalenta \qquad\textrm{ and }\qquad \fivevalentb
\end{equation}
with fixed boundary conditions~$(i, i', j, j', k^*)$.  In other words, both sides of Eq.~\eqnref{eq:fivepuncturedcounting} count the dimension of the space of ribbon graphs embedded in a ball, with 
five fixed boundary labels.  

\subsection{Application to the Fibonacci model} \label{sec:fibonacciapplication}

Let us now specialize these statements to the Fibonacci model, and give a few examples of the anyonic fusion basis states described in Sections~\ref{s:anyonicfusionbasis} and~\ref{s:stringnetsdoubledmodel}.  Here the $F$ and $R$ tensors are specified by Eqs.~\eqnref{eq:Fmoveeqsimple} and~\eqnref{e:dekink}, respectively.  Crossings can therefore be resolved according to Eq.~\eqnref{e:overcrossing}.

\subsubsection*{Computation of idempotents on \texorpdfstring{$\cH_{\Sigma_2}$}{H\_\{Sigma\_2\}}}

For the annulus~$\Sigma_2$, anyonic fusion basis states are equivalent to idempotents, up to normalization, when the boundary labels are identical.  We therefore restrict our attention to the idempotents.  According to \lemref{lem:idempotentsfromdouble}, the idempotents can be written as 
\begin{align}
\oneone &= \frac{1}{\cD} \raisebox{0em}{\doubledanyonsvv} &\tauone &= \frac{1}{\cD} \doubledanyonstv\qquad & \onetau &= \frac{1}{\cD} \doubledanyonsvt  \label{eq:idempotentsdoubled} \\
\tautau\coeff{\v} &= \frac{\tau}{\cD} \doubledanyonsttv &\tautau\coeff{\t} &= \frac{\sqrt \tau}{\cD} \doubledanyonsttt \nonumber
\end{align}
where $\cD = \sqrt{1 + \tau^2}$.  
Writing out this definition and resolving the crossings, it can be verified that these expressions coincide with the idempotents given in Eq.~\eqnref{e:fiveoperators}.  For example, we get
\begin{equation}\begin{split}
\cD^2 \, \onetau 
= \bt + \tau \bttinv 
= \bt+ \tau e^{-3\pi i/5} \bS+\tau e^{3\pi i/5} \bSdag
\end{split}\end{equation}
For later reference, note that the same type of calculation gives
\begin{align} \label{eq:deltaevaluated}
\doubledanyonsdeltaonezero &= e^{-3\pi i/10} \doubledanyonsdeltaonezeroZ 
&
\doubledanyonsdeltazeroone &= e^{3\pi i/10} \doubledanyonsdeltazerooneZ 
\end{align}

\subsubsection*{Computation of anyonic fusion basis states for \texorpdfstring{$\cH_{\Sigma_3}$}{H\_\{Sigma\_3\}}}	

For the $3$-punctured sphere, we have for example the following three fusion basis states, 
\begin{equation} \label{eq:firstexamplefibdoubled}
\fuse{\tautau}{\tauone}{\onetau} =  \doubledexample{1a} \qquad
\fuse{\tautau}{\tautau}{\tauone} =  \doubledexample{2a} \qquad
\fuse{\tautau\coeff\t}{\tautau}{\tauone} =  \doubledexample{3a}
\end{equation}
where the subscript $\coeff\t$ indicates that we choose to have an edge ending on the boundary.  

Let us reduce to two dimensions the first diagram.  Apply an $F$-move to the left leg to get
\begin{equation}
\fuse{\tautau}{\tauone}{\onetau} = \frac{1}{\tau} \doubledexample{1b} + \frac{1}{\sqrt \tau} \doubledexample{1c}
\end{equation}
Removing the twists with Eq.~\eqnref{e:dekink} and inserting from Eq.~\eqnref{e:fiveoperators} the expressions for the $\tautau\coeff{\v}$ and $\tauone$ idempotents gives 
\begin{equation}\begin{split}
\fuse{\tautau}{\tauone}{\onetau} &=
\frac{e^{-4 \pi i / 5}}{\cD^2} \left( 
\twocob{\zvvI}{\zttI}{\vtt} - \frac{1}{\tau} \twocob{\zO}{\zttI}{\vtt}
 + e^{3\pi i/5} \left( \tau \twocob{\zvvI}{\zS}{\vtt} - \twocob{\zO}{\zS}{\vtt} \right)
 + e^{-3\pi i/5} \left( \tau \twocob{\zvvI}{\zSdag}{\vtt} - \twocob{\zO}{\zSdag}{\vtt} \right)
 \right) \\
&\quad + \frac{e^{9\pi i/ 10}}{\sqrt{\tau} \cD}
\left( \twocob{\zvtD}{\zttI}{\ttt} + \tau e^{3\pi i/5} \twocob{\zvtD}{\zS}{\ttt} + \tau e^{-3\pi i/5} \twocob{\zvtD}{\zSdag}{\ttt} \right)
\end{split}
\end{equation}
Since $\braket \S \Sdag = 1/\tau$, this state indeed has unit norm, as asserted by \lemref{lem:anyonnormalization}.  

\newcommand*{\handlebody}{
\raisebox{-6.5ex}{
\begin{picture}(45,68)(0,0)
\put(0,0){\includegraphics[scale=.3]{SVG/handlebody}}
\end{picture}}}

\newcommand*{\handlebodydehntwists}{
\raisebox{-6.5ex}{
\begin{picture}(45,68)(0,0)
\put(0,0){\includegraphics[scale=.3]{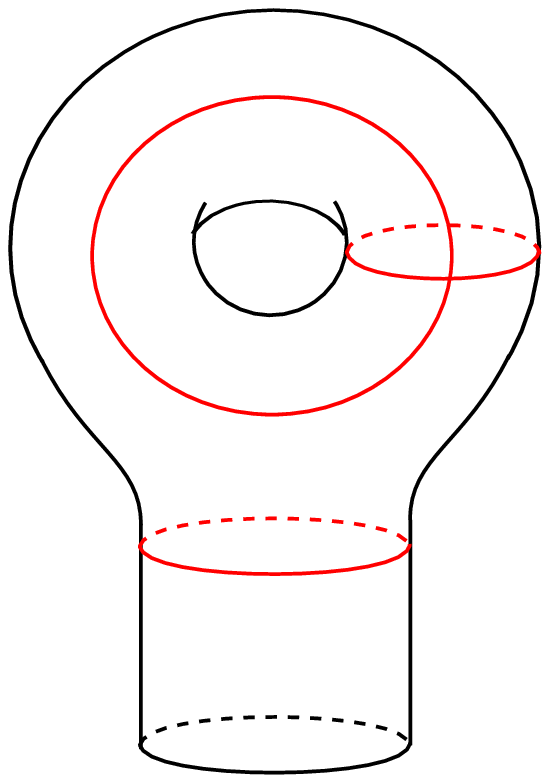}}
\put(20,10){$\alpha$}
\put(20,60){$\beta$}
\put(35,48){$\gamma$}
\end{picture}}}

\newcommand*{\genusbasisone}[2]{
\raisebox{-3.5ex}{
\begin{picture}(65,40)(0,0)
\put(0,0){\includegraphics[scale=.5]{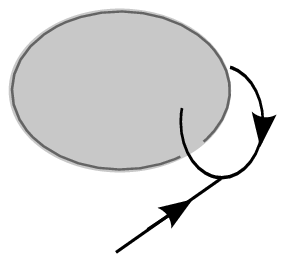}}
\put(31,2){$#1$}
\put(44,16){$#2$}
\end{picture}}}
 
\newcommand*{\genusbasisonethreeD}[4]{
\raisebox{-9.5ex}{
\begin{picture}(85,120)(0,0)
\put(0,0){\includegraphics[scale=.5]{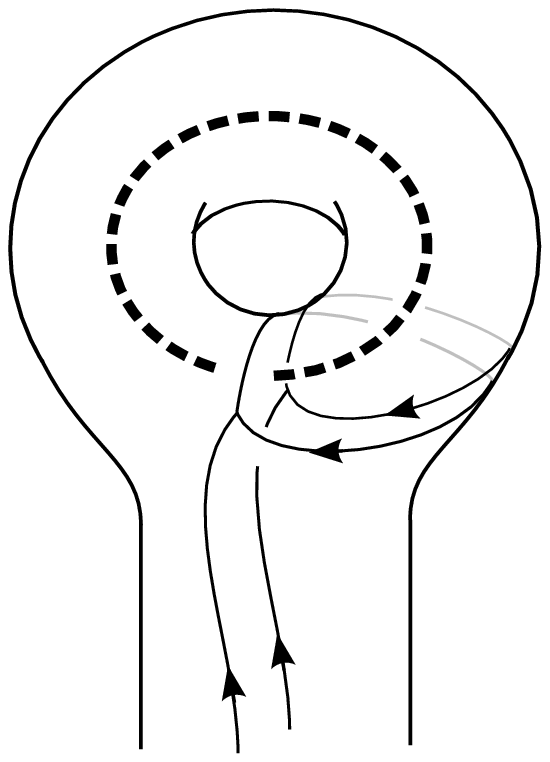}}
\put(20,10){$#1$}
\put(44,16){$#2$}
\put(45,37){$#3$}
\put(52,57){$#4$}
\end{picture}}}

\newcommand*{\genusbasistwothreeD}[4]{
\raisebox{-9.5ex}{
\begin{picture}(85,120)(0,0)
\put(0,0){\includegraphics[scale=.5]{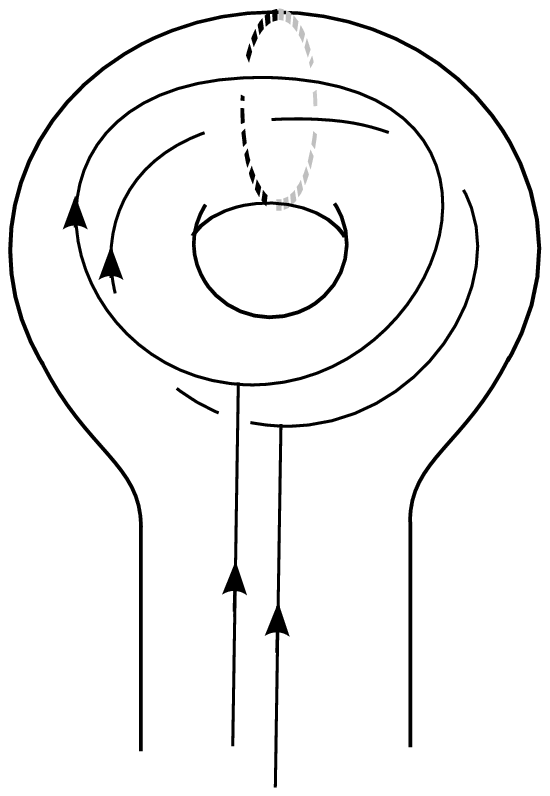}}
\put(20,35){$#1$}
\put(44,20){$#2$}
\put(1,78){$#3$}
\put(18,68){$#4$}
\end{picture}}}

\newcommand*{\genusbasisoneprojected}[4]{
\raisebox{-9.5ex}{
\fbox{\begin{picture}(85,100)(0,0)
\put(0,0){\includegraphics[scale=.5]{SVG/handlebody_doubledprojected}}
\put(20,10){$#1$}
\put(44,16){$#2$}
\put(45,40){$#3$}
\put(68,60){$#4$}
\end{picture}}}}

\newcommand*{\genusbasistwo}[2]{
\raisebox{-3.5ex}{
\begin{picture}(65,40)(0,0)
\put(0,0){\includegraphics[scale=.5]{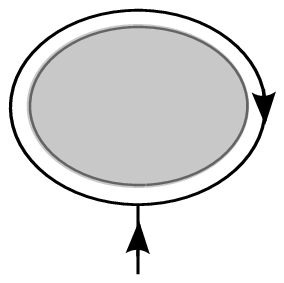}}
\put(22,2){$#1$}
\put(42,22){$#2$}
\end{picture}}}

\newcommand*{\sbmatrixfirst}[3]{\raisebox{-3.7ex}{
\begin{picture}(57,45)(-2,-10)
\put(0,0){\includegraphics[scale=.5]{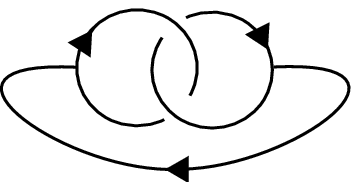}}
\put(11,27){\small $#1$}
\put(30,27){\small $#2$}
\put(23,-8){\small $#3$}
\end{picture}}}

\newcommand*{\sbmatrixsecond}[3]{\raisebox{-3.7ex}{
\begin{picture}(57,45)(-2,-10)
\put(0,0){\includegraphics[scale=.5]{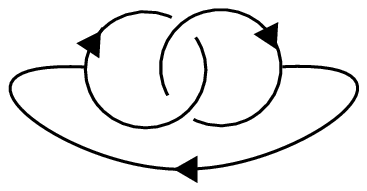}}
\put(11,27){\small $#1$}
\put(30,27){\small $#2$}
\put(23,-8){\small $#3$}
\end{picture}}}

\subsection{Generalization to higher-genus surfaces} \label{sec:highergenuss}
Our results extend to a surface~$\Sigma$ of higher genus.  To describe a fusion diagram basis of $\Sigma$, fix a handle decomposition of $\Sigma$ into an $n$-punctured sphere with handles attached to some of the holes.  A handle is a punctured torus, shown in \figref{f:handle}.  
An anyonic fusion basis for $\cH_\Sigma$ is defined by extending a doubled anyon fusion diagram in $\Sigma_n \times [-1,1]$ by attaching at each handle one of the diagrams in \figref{f:handleanyons}.  Here $a_+a_-$ is the label on the leaf, and $(a_+, b_+, b_+^*)$ and $(a_-, b_-, b_-^*)$ are fusion-consistent triples from $\cC_+$ and $\cC_-$, respectively.  These two choices result in bases that are diagonal with respect to Dehn twists along the loops $\{\alpha, \beta\}$ and $\{\alpha, \gamma\}$, respectively, in \figref{f:handle}.  

\begin{figure*}
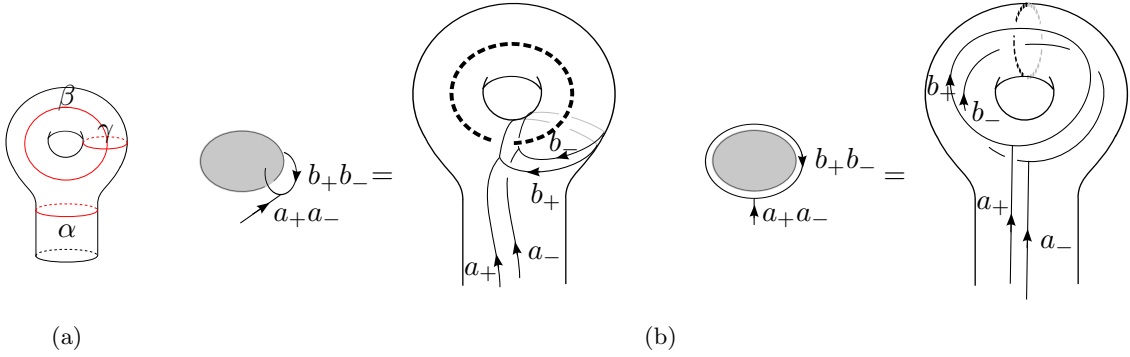

\centering
\begin{tabular}{c@{$\qquad$}c}
\subfigure[]{\label{f:handle}$\!\!\!\!\!\!\!$\raisebox{.75in}{\handlebodydehntwists}}
&
\subfigure[]{\label{f:handleanyons}\raisebox{.75in}{$\genusbasisone{a_+a_-}{b_+b_-} = \genusbasisonethreeD{a_+}{a_-}{b_+}{b_-}\qquad 
\genusbasistwo{a_+a_-}{b_+b_-} = \genusbasistwothreeD{a_+}{a_-}{b_+}{b_-}$}}
\end{tabular}
\caption{An anyonic fusion basis for a surface with $n$ punctures or handles is specified by using for each handle one of the two bases in (b), 
combined with a doubled anyon fusion tree in~$\Sigma_n \times [-1,1]$.} 
\end{figure*}

Together with $F$-moves, the change of basis matrix $S^{a_+a_-} = (S^{a_+a_-}_{b'_+b'_-, b_+b_-})$ defined by
\begin{equation}
\genusbasistwo{a_+a_-}{b'_+b'_-} = \sum_{b_+, b_-} S^{a_+a_-}_{b'_+b'_-, b_+b_-}\genusbasisone{a_+a_-}{b_+b_-}
\end{equation}
fully specifies the action of the mapping class group of $\Sigma$.  (For the torus, there are two inequivalent bases, and a similar change of basis matrix~$S$.)  By similar arguments as those used in
\secref{s:stringnetsdoubledmodel} for the $R$-matrix, one can derive that
$S^{a_+a_-}_{b'_+b'_-,b_+b_-} = A^{a_+}_{b_+',b_+}B^{a_-}_{b_-',b_-}$, where
\begin{equation}
A^a_{b',b} = \frac{1}{\mathcal{D} \sqrt{d_a}} \sbmatrixfirst{b}{b'}{a} \qquad\textrm{ and }\qquad B^a_{b',b} = \frac{1}{\mathcal{D} \sqrt{d_a}} \sbmatrixsecond{b}{b'}{a}
 \enspace . 
\end{equation}
We refer to~\cite[Appendix E]{KitaevAnyons} for a proof of the unitarity of these matrices.

\section{Discretizing and Gluing: Proofs} \label{app:levinwenmodel}

In this appendix, we prove Lemmas~\ref{lem:mainisomorphism} and~\ref{lem:tadpolegluing}. 
\lemref{lem:mainisomorphism} states that the simultaneous $+1$-eigenspace $\cH_{\widehat \cT}^{\ell,gs}$ of all plaquette- and vertex-operators of the Levin-Wen Hamiltonian~\eqnref{eq:levinwenhamiltonian} is isomorphic to the ribbon graph space~$\cH_\Sigma^\ell$.  In preparation for the proof of this fact, we derive the following auxiliary statement.

\begin{lemma} \label{lem:punctureaddition}
Consider two surfaces $\Sigma$ and $\Sigma'$, where $\Sigma'$ is the same as $\Sigma$ but with a puncture inserted at $p \in \Sigma$.  Let $\ell$ be a labeling of the boundary points of $\Sigma$, and let $\cH^{(\ell, \vac)}_{\Sigma'}$ be the space of ribbon graphs on $\Sigma'$ with open boundary condition on~$p$.  Consider a state $\ket{\Psi_\Sigma} \in \cH^\ell_\Sigma$.  We deform the ribbon graph locally around~$p$, and regard the resulting ribbon graph as an element $\ket{\Phi_{\Sigma'}}\in\cH^{(\ell, \vac)}_{\Sigma'}$.  Let $B: \cH^{(\ell, \vac)}_\Sigma \rightarrow \cH^{(\ell, \vac)}_{\Sigma'}$ be the map which adds a vacuum loop around~$p$, divided by~$\cD$.  Then $B \ket{\Phi_{\Sigma'}}$ is independent of the initial deformation (i.e., $\ket{\Phi_{\Sigma'}}$).  Furthermore, the map
\begin{equation}\begin{split}
\Lambda: \cH^\ell_\Sigma &\rightarrow \cH^{(\ell, \vac)}_{\Sigma'}\\
\ket{\Psi_\Sigma} &\mapsto \cD \, B \ket{\Phi_{\Sigma'}}
\end{split}\end{equation}
is a norm-preserving isomorphism.  
\end{lemma}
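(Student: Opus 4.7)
The plan is to verify well-definedness, norm-preservation, and the bijectivity of $\Lambda$ in turn, each time invoking the vacuum-loop identities of Lemma~\ref{lem:vacuum}.

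For well-definedness, I would first argue that two ribbon graphs $\ket{\Phi_{\Sigma'}}, \ket{\Phi'_{\Sigma'}} \in \cH^{(\ell,\vac)}_{\Sigma'}$ obtained from the same $\ket{\Psi_\Sigma}$ by different local deformations near $p$ differ only by moves in which strands are slid across $p$. Property~\eqnref{it:secondpropertyvacuum} of Lemma~\ref{lem:vacuum} ensures that such sliding moves are absorbed for free by a vacuum loop around $p$, so $B\ket{\Phi_{\Sigma'}} = B\ket{\Phi'_{\Sigma'}}$. Moreover, the defining equivalence relations of $\cH^\ell_\Sigma$ can always be realized in disks away from $p$, and hence are automatically preserved; this shows that $\Lambda$ descends to a well-defined linear map.

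For norm-preservation, I would use the trace inner product of Appendix~\ref{sec:normalizationgeneral}. To compute $\langle \Lambda \Psi' \vert \Lambda \Psi \rangle^{\Sigma'}_{tr}$, I stack the two decorated ribbon graphs inside $\Sigma' \times [-1,1]$, close at the marked boundary points, and extract the vacuum coefficient divided by the normalization factors; the $\sqrt{d_\vac}=1$ contribution at the new boundary point at $p$ leaves these factors unchanged from the $\Sigma$ computation. The two parallel vacuum loops at heights $\pm 1$ merge, via property~\eqnref{it:firstpropertyvacuum}, into $\cD$ times a single vacuum loop at height $0$; this single loop can then be slid arbitrarily by property~\eqnref{it:secondpropertyvacuum} and shrunk through the $\vac$-labeled cylindrical boundary at $p$. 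Combined with the explicit factor $\cD^2$ in $(\cD B)^\dagger (\cD B)$, this reduces the diagram precisely to the one computing $\langle \Psi' \vert \Psi \rangle^\Sigma_{tr}$, so $\Lambda$ is an isometry.

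For bijectivity, injectivity follows immediately from norm-preservation, while surjectivity reduces to a dimension count: since the new boundary point at $p$ carries the trivial label~$\vac$, any $\vac$-labeled edge ending at $p$ can be removed without loss of information, giving $\dim \cH^{(\ell,\vac)}_{\Sigma'} = \dim \cH^\ell_\Sigma$. I expect the main obstacle to be the norm-preservation step, specifically the bookkeeping around the extra cylindrical boundary component at $p$: while Lemma~\ref{lem:vacuum} supplies the key identities, care is required to track the vacuum-loop normalization factors and to verify that the trace inner product in the modified three-manifold produces exactly the target factor without stray dimension factors.
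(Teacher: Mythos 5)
Your well-definedness step coincides with the paper's, and your norm-preservation step is a genuinely different route: the paper computes directly in a tadpole basis of $\cH^{(\ell, \vac)}_{\Sigma'}$, obtaining $\spr{\tilde{\Psi}_{\Sigma'}}{\Psi_{\Sigma'}} = \cD^{-2}\spr{\tilde{\Psi}_\Sigma}{\Psi_\Sigma}$ in one stroke, whereas you use the trace inner product together with the doubling identity~\eqnref{it:firstpropertyvacuum}. That route can be made to work, but your factor bookkeeping as written does not close. There is no leftover ``explicit factor $\cD^2$'' in $(\cD B)^\dagger(\cD B)$: each $B$ carries its own $1/\cD$, so $\cD B$ is exactly the operator that inserts one unnormalized vacuum line, and the composition is $\cD$ times a single vacuum loop. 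The compensating $1/\cD$ must come from the step you describe as shrinking the loop through the boundary at~$p$ --- and that step is not a free move. A loop labeled $i \neq \vac$ encircling the puncture $p$ is non-contractible in $\Sigma'$ and contributes nothing to the vacuum coefficient; only the $\vac$-component of the vacuum superposition survives, with weight $d_\vac/\cD = 1/\cD$. With that factor the totals balance and $\Lambda$ is an isometry.

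The genuine gap is your surjectivity argument. The dimension count $\dim \cH^{(\ell,\vac)}_{\Sigma'} = \dim \cH^\ell_\Sigma$ is false: puncturing strictly enlarges the ribbon-graph space, because loops encircling $p$ become non-contractible. For the Fibonacci category with $\Sigma$ a disk, $\cH^{(0)}_\Sigma$ is one-dimensional while $\cH^{(0,0)}_{\Sigma_2}$ is two-dimensional. Since $\Lambda$ is an isometry, it cannot be onto all of $\cH^{(\ell,\vac)}_{\Sigma'}$; the lemma must be read as asserting an isomorphism onto the range of the projection $B$ (this is how it is used in \lemref{lem:mainisomorphism}, whose codomain is the ground space $\cH^{\ell,gs}_{\widehat\cT}$, i.e., the image of all the plaquette projections). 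Surjectivity onto that range is then immediate, as in the paper: any element of the range of $B$ has the form $B\ket{\Phi}$, and applying $\Lambda$ to the same ribbon graph read on $\Sigma$ returns $\cD\, B \ket{\Phi}$, a nonzero multiple, using~\eqnref{it:firstpropertyvacuum} to check independence of the representative. Your observation that a $\vac$-labeled tail at $p$ can be erased handles only the tail of a tadpole, not the non-contractible loops around $p$, which are precisely the states that $B$ projects out.
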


\begin{proof}
Observe that the state $B \ket{\Phi_{\Sigma'}}$ has a vacuum loop inserted around the puncture~$p$.  Since locally deformed configurations of ribbon graphs near a puncture enclosed by a vacuum loop are equivalent according to \lemref{lem:vacuum}~\eqnref{it:secondpropertyvacuum}, the state $B \ket{\Phi_{\Sigma'}}$ indeed does not depend on the initially chosen deformation.  This implies that the map $\Lambda: \cH^\ell_\Sigma \rightarrow \cH^{(\ell, \vac)}_{\Sigma'}$ is well-defined.  It is easy to see that the map is surjective; a preimage of an element $\ket{\Psi_{\widehat{\cT}}} \in \cH^{(\ell, \vac)}_{\Sigma'}$ is simply the same ribbon graph embedded in $\Sigma$ (this can be checked with \lemref{lem:vacuum}~\eqnref{it:firstpropertyvacuum}). 

Now consider the inner product $\spr{\tilde{\Psi}_\Sigma}{\Psi_\Sigma}$ of two states $\ket{\Psi_\Sigma}, \ket{\tilde{\Psi}_\Sigma} \in \cH^\ell_\Sigma$.  Since local deformations do not change the equivalence class, we may assume that these ribbon graphs avoid~$p$.  To evaluate the inner product $\spr{\tilde{\Psi}_{\Sigma'}}{\Psi_{\Sigma'}}$  of the images $\ket{\tilde{\Psi}_{\Sigma'}} = \Lambda \ket{\tilde{\Psi}_\Sigma}$, $\ket{\Psi_{\Sigma'}} = \Lambda\ket{\Psi_\Sigma}$, consider a ribbon graph basis $\cB'$ of $\cH^{(\ell, \vac)}_{\Sigma'}$ whose ribbon graphs are obtained by attaching a tadpole with head surrounding~$p$ (and fusion-consistent labels) to the elements of a ribbon graph basis~$\cB$ of $\cH^\ell_\Sigma$.  Evaluating the inner product~$\spr{\tilde{\Psi}_{\Sigma'}}{\Psi_{\Sigma'}}$ with respect to this basis reduces to the evaluation of $\spr{\tilde{\Psi}_\Sigma}{\Psi_\Sigma}$ and the norm of a vacuum line around~$p$.  This gives
\begin{equation}
\spr{\tilde{\Psi}_{\Sigma'}}{\Psi_{\Sigma'}} = \spr{\tilde{\Psi}_\Sigma}{\Psi_\Sigma} \cdot \frac{1}{\cD^2}
 \enspace ,
\end{equation}
which implies the claim.
\end{proof}

The proof of \lemref{lem:mainisomorphism} is now immediate.  

\begin{proof}[Proof of \lemref{lem:mainisomorphism}]
Let $\ket{\Psi_\Sigma} \in \cH^\ell_\Sigma$, and let $\ket{\Phi_\Delta} \in \cH^{(\ell, \vac^P)}_\Delta$ be a state obtained by locally deforming $\ket{\Psi_\Sigma}$ to avoid the points in~$\cP$.  Observe that the state $B \ket{\Phi_\Delta}$, where $B$ is the product of all plaquette operators (cf.~\eqnref{eq:Bdef}), has vacuum loops inserted around each puncture~$p \in \cP$.  This corresponds to the situation described in \lemref{lem:punctureaddition}.  By inductively applying this lemma to all $p \in \cP$, we obtain the claim.
\end{proof}

\newcommand*{\tadpoleproof}[5]{
\raisebox{-8.5ex}{
\begin{picture}(225,85)(0,0)
\put(12,52){$#1$}
\put(12,25){$#2$}
\put(172,52){$#4$}
\put(172,25){$#3$}
\put(98,52){$#5$}
\put(0,0){\scalebox{0.35}{\protect\epsfig{file=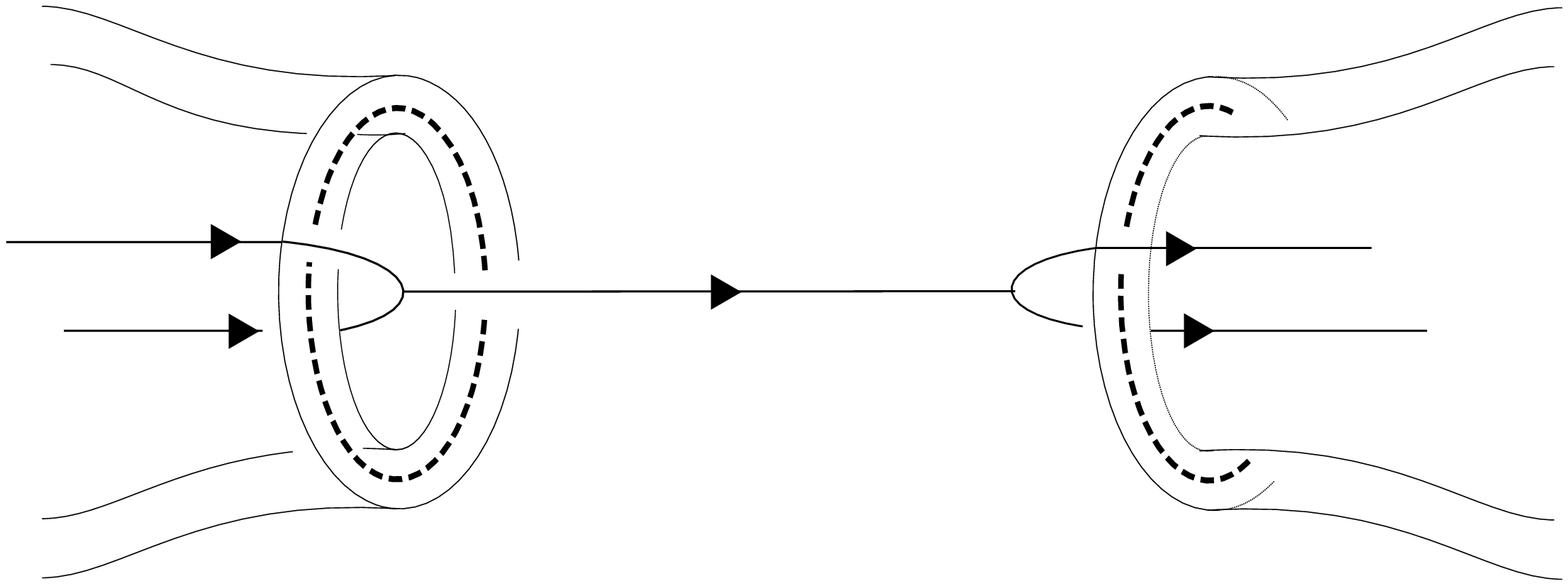}}}
\end{picture}}}

\newcommand*{\tadpoleprooffirst}[5]{
\raisebox{-8.5ex}{
\begin{picture}(225,85)(0,0)
\put(0,0){\scalebox{0.35}{\protect\epsfig{file=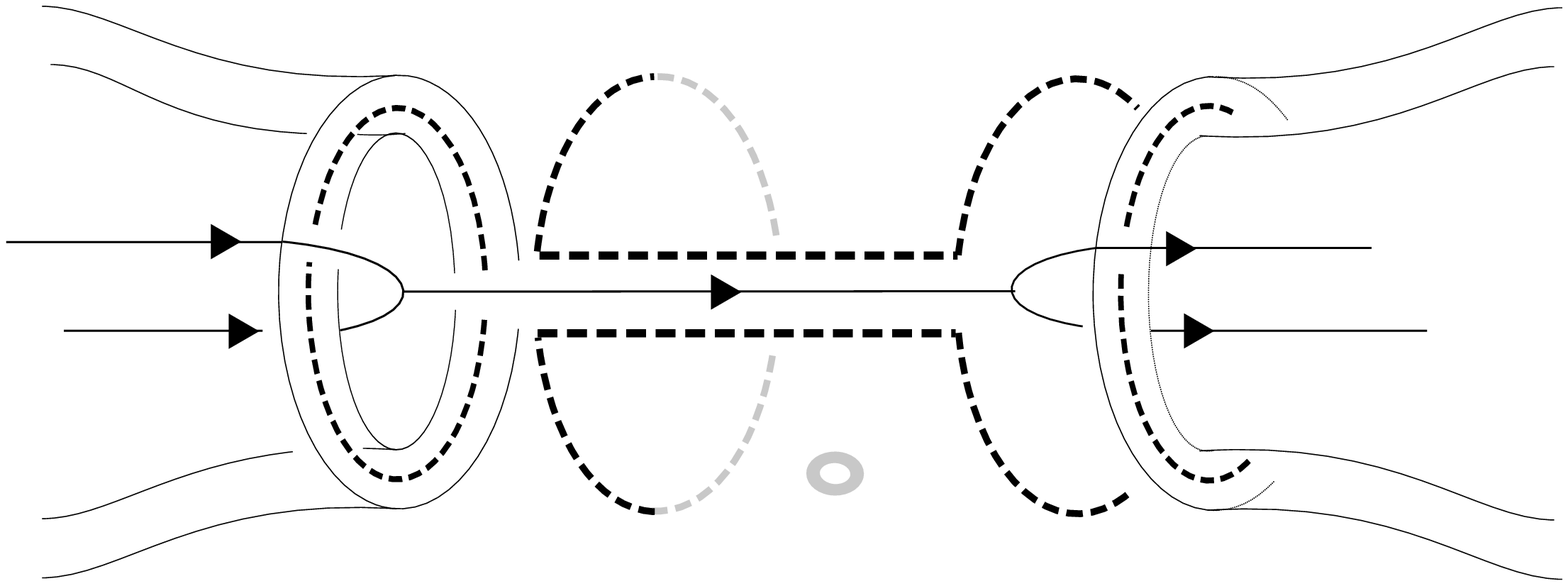}}}
\put(12,52){$#1$}
\put(12,25){$#2$}
\put(172,52){$#4$}
\put(172,25){$#3$}
\put(98,52){$#5$}
\end{picture}}}
\newcommand*{\tadpoleproofsecond}[5]{
\raisebox{-8.5ex}{
\begin{picture}(225,85)(0,0)
\put(0,0){\scalebox{0.35}{\protect\epsfig{file=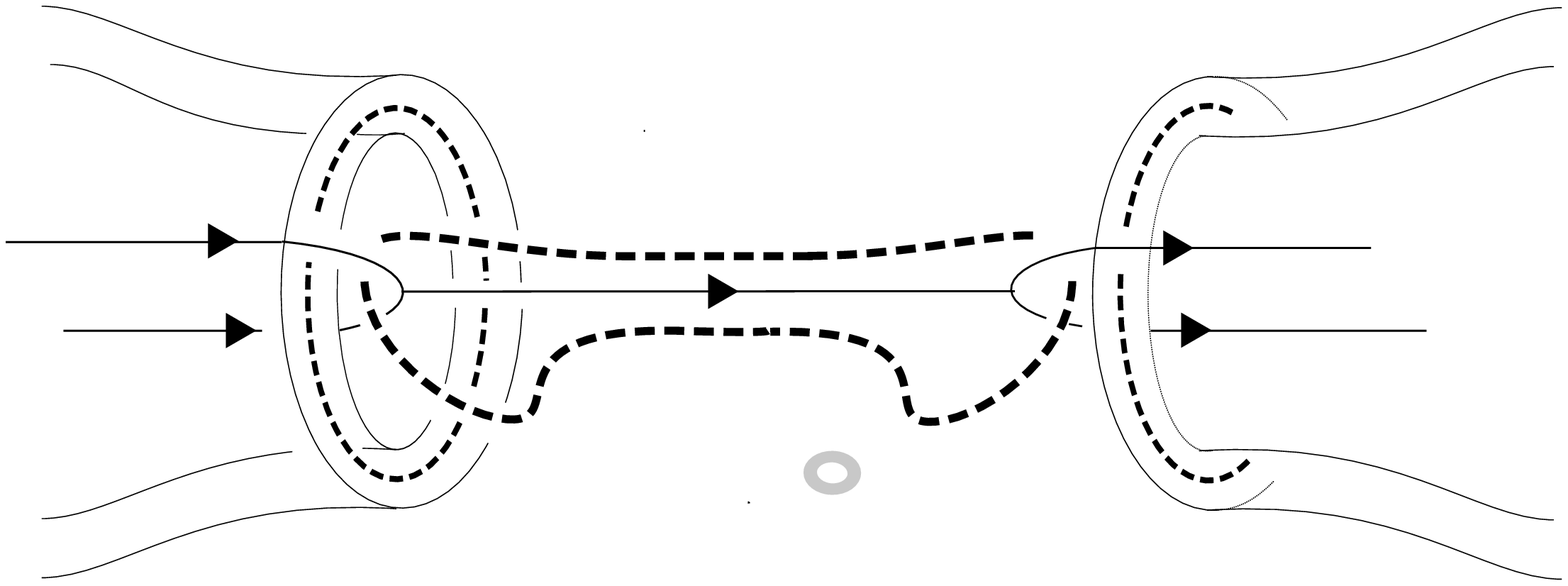}}}
\put(12,52){$#1$}
\put(12,25){$#2$}
\put(172,52){$#4$}
\put(172,25){$#3$}
\put(98,52){$#5$}
\end{picture}}}

We now prove the following more general version of \lemref{lem:tadpolegluing}, which describes the effect of gluing.

\begin{lemma} \label{lem:tadpolegluinggeneral}
Let $\widehat \cT$, $\widehat{\cT}_A$, $\widehat{\cT}_B$, $b_+$, $b_-$ and $e$ be as in \lemref{lem:tadpolegluing}.  Then the anyonic fusion basis state $\ket{\ell, d}_{\widehat \cT}$ is given by 
\begin{equation} \label{eq:generalizedtadpoletoprove}
\ket{\ell,d}_{\widehat \cT}
= \sum_n \sqrt{\frac{d_n}{d_{b_+}d_{b_-}}} \delta_{b_+b_-n^*} \ket{\ell^k_A,d_A}_{\widehat{\cT}_A} \ket{\ell^k_B,d_B}_{\widehat{\cT}_B} \ket{n}_e
\end{equation}
\end{lemma}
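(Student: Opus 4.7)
The plan is to establish the claim by direct analysis in the three-dimensional ribbon graph picture on $\Sigma\times[-1,1]$ used to define $\ket{\ell,d}_\Sigma$ in \secref{s:stringnetsdoubledmodel}, and then pass to the lattice via the isomorphism $\Lambda$ of \lemref{lem:mainisomorphism}. Since $t=1$, the gluing curve $\gamma\subset\Sigma$ thickens to a disk through which only two ribbons pass---the two factors of the doubled edge-label $b_+\otimes b_-$ assigned by $d$ to the pants-segment edge crossing $\gamma$, living in $\Sigma\times\{+1\}$ and $\Sigma\times\{-1\}$ respectively---together with the vacuum loop added around the associated puncture by the boundary-closing procedure of \secref{s:stringnetsdoubledmodel}.

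First I would derive a general-category analog of the off-lattice gluing identity \eqnref{eq:gluingrule} of the form
\begin{equation*}
\ket{\ell,d}_\Sigma \;=\; \sum_n \alpha_{n,d}\,\widehat{\ket{\ell_B^n,d_B}}_{\Sigma_B}\,\ket{\ell_A^n,d_A}_{\Sigma_A},
\end{equation*}
where the label on the glued boundary has been extended to~$n$. The derivation is local in a collar neighborhood of $\gamma$: after projecting the three-dimensional graph onto $\Sigma\times\{0\}$, the two stacked ribbons of labels $b_+$ and $b_-$ are resolved into a definite channel $n$ on $\gamma$ via the crossing-resolution identity \eqnref{eq:overcrossingresolution} and the $F$-move \eqnref{e:Frulesmooth}, while the surrounding vacuum loop traps this channel through Eqs.~\eqnref{it:secondpropertyvacuum}--\eqnref{it:thirdpropertyvacuum} of \lemref{lem:vacuum}. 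The bubble rule \eqnref{eq:bubblerule} and the normalization entry $F^{ii^*\vac}_{j^*jk}=\sqrt{d_k/(d_id_j)}\delta_{ijk}$ from \eqnref{eq:levinwensymmetries} then collapse the residual $F$- and $R$-factors and leave the real coefficient $\alpha_{n,d}=\sqrt{d_n/(d_{b_+}d_{b_-})}$ together with the fusion-consistency indicator $\delta_{b_+b_-n^*}$.

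Next I would push this identity through $\Lambda$. Because the plaquette operators inside $\Sigma_A$ or $\Sigma_B$ commute with the single plaquette operator $B_q$ associated to the bridge between the two regions, $\Lambda$ factors: applying the inner-region plaquette projections to each half separately produces $\ket{\ell_A^n,d_A}_{\widehat\cT_A}$ and $\ket{\ell_B^n,d_B}_{\widehat\cT_B}$ by \lemref{lem:mainisomorphism}, while on the tadpole the single ribbon of charge~$n$ crossing the connecting edge is precisely the computational basis state $\ket{n}_e$. The tadpole configuration with a vacuum loop around the bridge puncture is $B_q$-invariant (by the same ``vacuum-loop-around-puncture'' mechanism underlying \lemref{lem:punctureaddition}), so no additional projection or normalization factor survives and \eqnref{eq:generalizedtadpoletoprove} follows.

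The main obstacle is careful bookkeeping of edge orientations, charge conjugations (both the tetrahedral symmetries of \eqnref{eq:levinwensymmetries} and the chirality distinction between $\cC$ and $\cC^*$), and verifying that all potentially-complex phases---in particular the $R$-entries introduced by crossing resolution and the $F$-entries used to expose the single edge crossing $\gamma$---cancel consistently to leave a real positive coefficient. A useful consistency check is the Fibonacci specialization of \lemref{lem:tadpolegluing}: with $b_+=b_-=\tauanyon$ one has $n\in\{\oneanyon,\tauanyon\}$ and $\sqrt{d_n/(d_{b_+}d_{b_-})}\in\{1/\tau,1/\sqrt\tau\}$, exactly reproducing the tadpole coefficients in the $b_+b_-=\tautau$ case there.
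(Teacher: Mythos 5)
Your overall strategy---decompose $\ket{\ell,d}_\Sigma$ off-lattice into channels crossing $\gamma$, then push the decomposition through $\Lambda$---is the same as the paper's, but the two halves of your argument each carry an error, and the errors happen to cancel. First, the off-lattice gluing identity you propose, with coefficient $\alpha_{n,d}=\sqrt{d_n/(d_{b_+}d_{b_-})}\,\delta_{b_+b_-n^*}$, is off by a factor of $1/\cD$: the paper's Eq.~\eqnref{eq:gluingrule} with coefficients~\eqnref{eq:alphakddef} is exactly this identity in the Fibonacci case, and every coefficient there carries the prefactor $1/\sqrt{1+\tau^2}=1/\cD$ (e.g.\ for $b_+b_-=\oneone$ the coefficient is $1/\cD$, not $1$). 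Second, the claim that the tadpole configuration with a vacuum loop around the bridge puncture is $B_q$-invariant, ``so no additional projection or normalization factor survives,'' is not available at that point in the argument: the bridge puncture is created by the discretization, and the representative $\ket{\Phi_\Delta}$ fed into $\Lambda$ carries no vacuum loop around it---$B_q$ is precisely the operator that inserts one (together with a factor of $1/\cD$). The state you apply $B_q$ to is not already in its image, so invariance cannot be invoked; conflating the pre-projection representative with the post-projection ground state is where the factor of $\cD$ disappears from view.

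The content of the paper's proof lives exactly in the step you skip. There one first doubles the existing vacuum loop (at the cost of $1/\cD^2$) so as to place vacuum lines parallel to $\gamma$ on both sides, applies an $F$-move to expose the channel sum $\sum_m\sqrt{d_m/(d_{b_+}d_{b_-})}\,\delta_{b_+b_-m^*}$, and only then applies $B_q$. Using \lemref{lem:vacuum} and Eq.~\eqnref{eq:Fmovevacuumstring}, the newly inserted vacuum loop cuts the channel-$m$ ribbon, the sum over $m$ collapses via $\frac{1}{d_{b_+}d_{b_-}}\sum_m \delta_{b_+b_-m^*}d_m=1$ (Eq.~\eqnref{eq:qdimensionsidentity}), and the diagram re-fuses into a fresh channel sum over $n$ now carrying the coefficient $\sqrt{d_n/(d_{b_+}d_{b_-})}\,\delta_{b_+b_-n^*}$ with no residual $1/\cD$. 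That computation is what generates the factor of $\cD$ separating the off-lattice coefficients~\eqnref{eq:alphakddef} from the lattice coefficients in the Lemma. Your Fibonacci consistency check passes only because your two omissions compensate; to make the proof sound you must restore the $1/\cD$ in the off-lattice identity and then carry out the $B_q$ step explicitly, as the paper does.
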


\begin{proof}
Consider the off-lattice ribbon graph $\ket{\ell,d}_\Sigma \in \cH_\Sigma$ described in terms of a ribbon graph on $\Sigma \times [-1,1]$.  By doubling a vacuum loop around a boundary component twice, and pulling out the loops, we can obtain two vacuum loops running parallel to the gluing curve~$\gamma$ if we introduce an additional factor of $\frac{1}{\cD^2}$.  Applying an $F$-move then results in
\begin{equation}
\frac{1}{\cD^2} \sum_m \sqrt{\frac{d_m}{d_id_j}} \delta_{ijm^*} \tadpoleproof{i}{j}{j}{i}{m}
\end{equation}
We can analyze the effect of applying the plaquette operator $B_q$ in the off-lattice picture: it adds a puncture, a vacuum loop around it and a factor of $\frac{1}{\cD}$.  It transforms this state to 
\begin{multline}
\frac{1}{\cD} \sum_m \sqrt{\frac{d_m}{d_id_j}} \delta_{ijm^*} \tadpoleprooffirst{i}{j}{j}{i}{m} \\
= \frac{1}{\cD} \sum_m \sqrt{\frac{d_m}{d_id_j}} \delta_{ijm^*} \tadpoleproofsecond{i}{j}{j}{i}{m}
\end{multline}
where we used property~\eqnref{it:secondpropertyvacuum} of vacuum lines.  With Eq.~\eqnref{eq:Fmovevacuumstring}, we conclude that this is equal to
\begin{equation}
\underbrace{\left(\frac{1}{d_i d_j} \sum_m \delta_{ijm^*} d_m \right)}_{=1\textrm{ by~\eqnref{eq:qdimensionsidentity}}}\cdot \sum_n \sqrt{\frac{d_n}{d_id_j}} \delta_{ijn^*} \tadpoleproof{i}{j}{j}{i}{n}
\end{equation} 
It is easy to see that applying the product $\prod_{p \neq q} B_p$ of the remaining plaquette operators results in the state on the right-hand side of Eq.~\eqnref{eq:generalizedtadpoletoprove}.  
\end{proof}

\newcommand*{\WRT}{\mathsf{WRT}}
\newcommand*{\partf}[1]{Z(#1)}

\section{The Witten-Reshetikhin-Turaev invariant} \label{sec:crane}

In \secref{sec:highergenuss}, we sketched how, based on a modular category~$\cC$, the Turaev-Viro construction  yields a representation $\rho_{D\cC, \Sigma}$ of the mapping class group $\MCG_\Sigma$ of a surface $\Sigma$.  Furthermore, this representation is  described by the doubled category $DC\cong \cC\otimes\cC^*$.  More generally, any modular category~$\cC$ gives rise to a (projective) representation
\begin{equation} \label{eq:generalprojectiverep}
\rho_{\cC,\Sigma}:\MCG_\Sigma\rightarrow \mathsf{GL}(\cH_{\cC,\Sigma})/<e^{2\pi ic/24}>
 \enspace ,
\end{equation}
where $\cH_{\cC,\Sigma}$ is a Hilbert space of anyonic fusion diagrams.  (In this expression, $<e^{i2\pi i c/24}>$ denotes the cyclic group generated by $e^{2\pi i c/24\varphi}\id$, where $c$~is a scalar called central charge.)  A certain matrix element of the representation~\eqnref{eq:generalprojectiverep} for the boundary~$\Sigma = \partial M_g$  of the g-handlebody $M_g$ (cf.~\figref{fig:ribbongraphclaim}) defines the Witten-Reshetikhin-Turaev (WRT) invariant~\cite{Witten89, ReshetikhinTuraev91}, as we now explain.  

A closed, oriented $3$-manifold $M$ can be represented by a Heegaard splitting $(g, x)$, i.e., a genus~$g \in \mathbb{N}$ and an element $x$ of the mapping class group $\MCG(\partial M_g)$ of the surface~$\partial M_g$.  Roughly, the element~$x$ specifies how to glue two copies of $M_g$ together to obtain~$M$.  The pair $(g, x)$ uniquely specifies the equivalence class of $M$ under homeomorphisms, up to~(i) a certain stabilization move~$(g, x)\rightarrow (g+1, \tilde x)$ which corresponds to attaching a $3$-sphere with a standard genus-$1$ Heegaard splitting, and~(ii) multiplication of~$x$ by the subgroup $\MCG^0_{\partial M_g}$ of mapping class group elements which extend to homeomorphisms of~$M_g$.  

\begin{figure}
\centering
\includegraphics[width=0.5\textwidth]{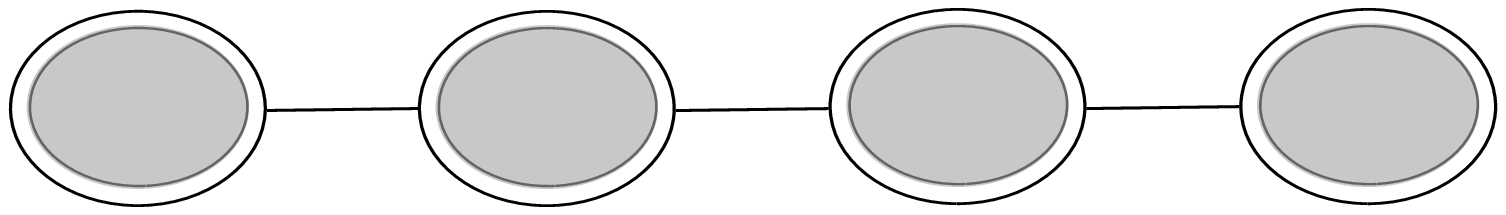}
\caption{The anyonic basis state $\ket{v_{\cC,g}}\in\cH_{\cC,\partial M_g}$, here for $g = 4$.  All edges carry the trivial label~$1$ of~$\cC$.} \label{fig:wrtstate}
\end{figure}

For a modular category~$\cC$, the Witten-Reshetikhin-Turaev invariant of a manifold~$M$ with Heegaard splitting~$(g, x)$ is given by 
\begin{equation} \label{eq:wrtinvariant}
\WRT_\cC(M) := \cD^{g-1} \bra{v_{\cC,g}} \rho_{\cC,g}(x) \ket{v_{\cC,g}}
 \enspace ,
\end{equation}
where $\ket{v_{\cC,g}}$ is the unit-normalized vector corresponding to the anyon diagram shown in~\figref{fig:wrtstate}.  Invariance follows from the fact that the vector $\ket{v_{\cC, g}}$ is invariant under the action of~$\MCG^0_{\partial M_g}$, and the factor $\cD^{g-1}$~compensates for stabilization moves.  Eq.~\eqnref{eq:wrtinvariant} is the Crane-Kohno-Kontsevich~\cite{Kontsevich88, Crane91, Kohno92} version of the WRT invariant.  Its original~\cite{ReshetikhinTuraev91} and more commonly known definition is based on the Dehn surgery presentation of manifolds.  Piunikhin~\cite{Piunikhin93} (see also~\cite[Section~2.4]{Kohno02}) showed that these definitions are equivalent (for $\cC = SU(2)_k$).  

For a closed $3$-manifold~$M$ and a modular category~$\cC$, definition~\eqnref{eq:tvdefinition} of the Turaev-Viro invariant reduces to 
\begin{equation} \label{eq:tvinvariantclosedmanifold}
\TV_\cC(M) = \cD^{-2|V_M|}\sum_{\textrm{labelings }\phi}\prod_{e} d_{\phi(e)} \prod_{\textrm{tetrahedra t}} g_t^\phi
 \enspace .
\end{equation}
The invariants~\eqnref{eq:wrtinvariant} and~\eqnref{eq:tvinvariantclosedmanifold} are related by the categorical double, that is,
\begin{equation} \label{eq:TVWRTmainstatement}
\TV_\cC(M) = \WRT_{D\cC}(M)
\enspace .
\end{equation}
Eq.~\eqnref{eq:TVWRTmainstatement} was shown by Walker and Turaev~\cite{Walker91, Turaev94book} (see also~\cite{Roberts95}) and holds more generally if~$\cC$ is a spherical category~\cite{Turaevpers10}.
As explained in \secref{sec:generalanyonmodels}, the double $D\cC$ takes the form $\cC\otimes\cC^*$ for a modular category~$\cC$ and~\eqnref{eq:TVWRTmainstatement} becomes
\begin{equation} \label{eq:TVWRTmodular}
\TV_\cC(M) = |\WRT_\cC(M)|^2
\enspace .
\end{equation}
This is because $\rho_{\cC \otimes \cC^*} \cong \rho_\cC \otimes \rho_{\cC^*}$ is a tensor-product representation
of~$\cC$ and its conjugate, the state of~\figref{fig:wrtstate} factorizes as $\ket{v_{\cC\otimes\cC^*,g}} \cong \ket{v_{\cC,g}} \otimes \ket{v_{\cC^*,g}}$, and the total quantum dimension of $\cC \otimes \cC^*$ is equal to~$\cD^2$. 
Eq.~\eqref{eq:TVWRTmodular} has direct application to quantum computing: it facilitates the proof in~\cite{AlagicJordanKoenigReichardt10TuraevViro} that approximating the Turaev-Viro invariant is a BQP-complete problem.

In the remainder of this section, we sketch a proof of~\eqnref{eq:TVWRTmainstatement} (for modular~$\cC$) based on our description of doubled anyonic fusion basis states arising in the Turaev-Viro code.  

A first step is the observation that, by definition of a Heegaard splitting, a manifold described by~$(g, x)$ is the result of gluing together three manifolds: two copies~$M_g$, $M_g'$ of the $g$-handlebody, and a ``mapping cylinder'' or cobordism defined by the mapping class group element~$x$.  Contracting the Turaev-Viro-tensor networks corresponding to the handlebodies gives two states $\ket{\partf{M_g}}$ and $\ket{\partf{M'_g}}$ on~$\cH_{\partial M_g}$, and~$\cH_{\partial M'_g}$, respectively, while the mapping cylinder gives rise to a linear map $\rho_{\cC \otimes \cC^*}(x): \cH_{\partial M_g} \rightarrow \cH_{\partial M'_g}$.  
The state $\ket{\partf{M_g}}$ is given by 
\begin{equation} \label{eq:partitionfunction}
\ket{\partf{M_g}} := \sum_\chi \TV_\cC(M_g, \chi) \ket \chi
\end{equation}
where the sum is over labelings~$\chi$ of the edges of the triangulated surface~$\partial M_g$.  (In the context of TQFTs, such states associated to 3-manifolds are sometimes called partition functions~\cite{Walker91}.)  Because of its definition in terms of the Turaev-Viro invariant, this vector is in the Turaev-Viro code subspace on $\partial M_g$, defined by the projection~\eqnref{eq:TVcodeprojection}.  A similar expression holds for~$\ket{\partf{M'_g}}$.  
The linear map $\rho_{\cC\otimes \cC^*}(x)$ can be interpreted as being composed of $F$-moves (i.e., local changes of the triangulation) because of its definition in terms of contractions of the $F$-tensor, and $\rho_{\cC \otimes \cC^*}$ is a representation of $\MCG(\partial M_g)$.  This is the representation~\eqnref{eq:generalprojectiverep} arising from the doubled category~$\cC \otimes \cC^*$ (cf.~\appref{sec:highergenuss}).  

Combining these three components, we conclude that the Turaev-Viro invariant has the form 
\begin{equation} \label{eq:TVvmgintermediate}
\TV_\cC(M) = \bra{\partf{M_g'}} \rho_{\cC \otimes \cC^*}(x) \ket{\partf{M_g}}
 \enspace ,
\end{equation}
which already bears some resemblance with the WRT invariant~\eqnref{eq:wrtinvariant}.  Indeed, Eq.~\eqnref{eq:TVWRTmainstatement} follows from~\eqnref{eq:TVvmgintermediate} and the following lemma, which determines the anyonic fusion basis state corresponding to $\ket{\partf{M_g}}$.  

\begin{lemma} \label{lem:partfunctionevaluation}
Consider a genus-$g$ handlebody $M_g$ with triangulated boundary~$\partial M_g$, and let $\ket{\partf{M_g}}$ be the Turaev-Viro codeword given by Eq.~\eqnref{eq:partitionfunction}.  
\begin{enumerate}[(i)]
\item\label{it:ribbongraphclaim}
 The off-lattice ribbon graph underlying $\ket{\partf{M_g}}$ (cf.~\lemref{lem:mainisomorphism}) consists of a vacuum loop around each handle, as in \figref{fig:ribbongraphclaim}.  
\item\label{it:secondnormclaim}
The state $\ket{\partf{M_g}}$ has squared norm~$\|\ket{\partf{M_g}}\|^2 = \cD^{2(g-1)}$.  
\end{enumerate}
In particular, up to a phase, $\ket{\partf{M_g}}$ is equal to $\cD^{g-1} \ket{v_{\cC \otimes \cC^*, g}}$, where $\ket{v_{\cC \otimes \cC^*, g}}$ is the normalized anyonic fusion basis state given in~\figref{fig:wrtstate}.  
\end{lemma}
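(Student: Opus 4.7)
The plan is to establish both parts of the lemma by exploiting the gluing axiom of the Turaev-Viro construction, combined with an inductive analysis using a handle decomposition of $M_g$. Part~(ii) will follow directly from the properties of $TV_\cC$ on the closed manifold obtained by doubling $M_g$, while part~(i) requires a more delicate identification of the ribbon-graph structure of $\ket{\partf{M_g}}$. Once part~(i) shows that $\ket{\partf{M_g}}$ is proportional to the vacuum-loop configuration (i.e., to $\ket{v_{\cC\otimes\cC^*,g}}$), part~(ii) then fixes the scalar of proportionality to $\cD^{g-1}$ up to an overall phase.

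For part~(ii), I would compute $\|\ket{\partf{M_g}}\|^2 = \braket{\partf{M_g}}{\partf{M_g}}$ by interpreting the inner product as a gluing of two oppositely-oriented copies of $M_g$ along their common boundary. The gluing axiom of $TV_\cC$ (a consequence of the definition~\eqnref{eq:TVcodeprojection} and the triangulation-independence of the invariant) yields $\|\ket{\partf{M_g}}\|^2 = TV_\cC(DM_g)$, where $DM_g$ is the double of $M_g$. Since $DM_g\cong \#^g(S^2\times S^1)$, combining the connected-sum formula $TV_\cC(M\# N) = \cD^2\cdot TV_\cC(M)\cdot TV_\cC(N)$ with the base values $TV_\cC(S^3) = \cD^{-2}$ and $TV_\cC(S^2\times S^1) = 1$ (both verifiable from small triangulations using definition~\eqnref{eq:tvdefinition} and the identities of \lemref{lem:vacuum}) gives $TV_\cC(\#^g(S^2\times S^1)) = \cD^{2(g-1)}$, as required.

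For part~(i), I would proceed by induction on $g$. The base case $g=0$ is trivial since $\cH_{S^2}$ is one-dimensional, and $\ket{\partf{B^3}}$ is the unique ground state proportional to the empty ribbon graph. For the inductive step, realize $M_{g+1}$ by attaching a $1$-handle (a thickened arc $D^2\times I$) to $\partial M_g$ along two disjoint disks separated from the existing handles. Triangulating this $1$-handle with a small number of tetrahedra and applying the gluing axiom expresses $\ket{\partf{M_{g+1}}}$ as the image of $\ket{\partf{M_g}}$ under the linear map associated with the $1$-handle cobordism. Explicitly evaluating the $F$-tensor contraction for this piece, and then transferring the result to the off-lattice picture via the isomorphism $\Lambda$ of \lemref{lem:mainisomorphism}, reveals that the effect is precisely to insert a vacuum loop around the newly attached handle.

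The main obstacle will be this explicit reduction in the inductive step: the Turaev-Viro sum over interior labels weighted by $d_i$-factors is not manifestly a vacuum line, and one must repeatedly apply the $F$-move and bubble-elimination identities of \appref{sec:basicpropertiesappendix}, particularly \lemref{lem:vacuum}\eqnref{it:firstpropertyvacuum}, to recognize that the contractions along the handle's interior produce the superposition $\cD^{-1}\sum_i d_i \,\ket{i\text{-loop}}$ on the boundary. An alternative route that bypasses most of this bookkeeping is to characterize $\ket{\partf{M_g}}$ abstractly as a vector in the subspace fixed by the handlebody subgroup $\MCG^0(\partial M_g)\subset\MCG(\partial M_g)$, and to appeal to the classical fact that this invariant subspace is one-dimensional, spanned by $\ket{v_{\cC\otimes\cC^*,g}}$; then part~(ii) alone fixes the scalar. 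Either route delivers the stated identification up to phase.
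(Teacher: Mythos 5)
Your main line of attack is sound, and on part~(ii) it genuinely diverges from the paper: you identify $\|\ket{\partf{M_g}}\|^2$ with $\TV_\cC$ of the double $DM_g\cong \#^g(S^2\times S^1)$ and invoke multiplicativity under connected sum, whereas the paper computes the norm recursively by triangulating the added handle with two tetrahedra and verifying that the resulting transfer operator $W$ is $\cD^{-1}$ times an isometry. Your route is cleaner topologically but shifts the burden onto the connected-sum formula and onto $\TV_\cC(S^2\times S^1)=1$; the latter is essentially the $g=1$ norm computation in disguise, so you have not actually avoided a base-case calculation, only relocated it. For part~(i), your primary route (induction on $g$, attach a $1$-handle, evaluate the $F$-tensor contraction, transfer through $\Lambda$) is exactly the paper's; be aware that the entire content of the inductive step is the deferred computation showing that the gluing tensor, with the disk indices fixed to $\vac$, forces all six remaining indices to a common label $i$ with coefficient proportional to $d_i$ --- the paper does this by retriangulating the gluing sphere into three tetrahedra and reading off $(\sqrt{d_i})^6\, (F^{\vac\vac\vac}_{ii^*i})^2\, F^{ii^*\vac}_{ii^*\vac}/d_i \propto d_i$.

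The genuine gap is your proposed shortcut for part~(i). The claim that the subspace of $\cH_{\cC\otimes\cC^*,\partial M_g}$ fixed by the handlebody subgroup $\MCG^0(\partial M_g)$ is one-dimensional is false in general, already at $g=1$. For the solid torus, $\MCG^0$ is generated (up to the elliptic involution) by the Dehn twist along the meridian, which acts diagonally in the core-anyon basis by the topological spins; invariance therefore only restricts the support to anyons $a_+\otimes a_-$ of $\cC\otimes\cC^*$ with $\theta_{a_+}\theta_{a_-}^*=1$, and every diagonal label $a\otimes a$ satisfies this. Hence the invariant subspace has dimension at least the number of simple objects of $\cC$, which exceeds $1$ for any nontrivial theory (e.g., it is two-dimensional for $\DFib$, spanned by $\oneone$ and $\tautau$, and three-dimensional for the toric code). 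So part~(ii) alone cannot fix the vector, and if you take this route the identification of $\ket{\partf{M_g}}$ with $\ket{v_{\cC\otimes\cC^*,g}}$ collapses. You must therefore carry out the explicit contraction in the inductive step; there is no purely representation-theoretic escape of the kind you describe.
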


\begin{figure}
\centering
\includegraphics[width=0.5\textwidth]{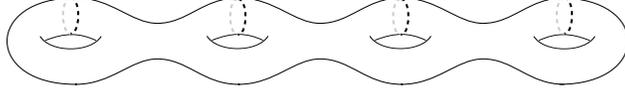}
\caption{The state $\ket{\partf{M_g}}$, here for $g = 4$, is the projection of this ribbon graph, with a vacuum loop around each handle.} \label{fig:ribbongraphclaim}
\end{figure}

The proof of this lemma proceeds by induction in the genus~$g$, using the fact that a genus-$g$ handlebody can be built from a genus-$(g-1)$~handlebody by gluing together two discs from the boundary, as illustrated in \figref{f:loops}.  To derive a gluing formula describing the transition from $\ket{\partf{M_{g-1}}}$ to $\ket{\partf{M_g}}$, we may locally choose a convenient triangulation, and consider the result of contracting the corresponding tensor indices against each other. For example, we may choose each disk to be a single triangle, and   moreover may take the triangle to be degenerate, with two of its sides the same edge.  In this case, the dual graph is locally a tadpole.  From the characterization of the Levin-Wen plaquette operators in \secref{sec:levinwengeneralized} each as adding a vacuum loop around a puncture within the plaquette, and Eqs.~\eqnref{it:secondpropertyvacuum} and~\eqnref{e:looprulegeneral}, the qudit along the tail of a tadpole is fixed to $\vac$.  Hence the state $\ket{\partf{M_{g-1}}}$ can be factored as 
\begin{equation}
\ket{\partf{M_{g-1}}} = \ket{v'} \otimes \ket{\vac\vac} \otimes \Big( \frac{1}{\cD} \sum_i d_i \ket i \Big)^{\otimes 2}
 \enspace ,
\end{equation}
where the second term corresponds to the two tadpole tails, the third term is the state of the tadpole heads, and $\ket{v'}$ is a state on the remaining qudits.  This expression then implies the desired result for $\ket{\partf{M_{g}}}$ when using the definition of $\TV_\cC$.  Below we will give an alternative proof that avoids using degenerate triangulations.  

\begin{figure}
\centering
\begin{equation*}
\raisebox{-36pt}{\includegraphics[scale=1]{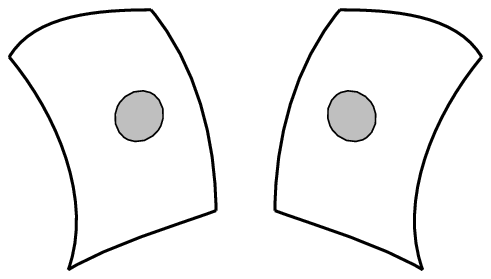}}
\;\longrightarrow\;
\raisebox{-36pt}{\includegraphics[scale=1]{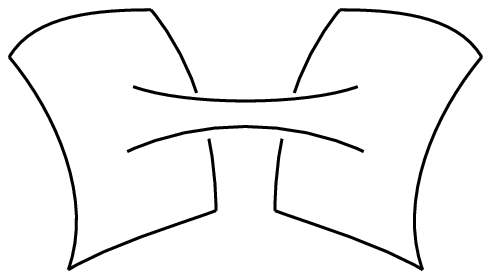}}
\end{equation*}
\caption{Two surfaces can be joined by puncturing each and identifying the boundaries of the holes, as in \secref{sec:statepreparation}.  In the case that the surfaces are the same, the boundary of a $3$-manifold, gluing together two disks from the boundary adds a handle to the manifold.  The Turaev-Viro state for the resulting manifold has a vacuum loop going around the handle.} \label{f:loops}
\end{figure}

\def\dot {\psdots[dotscale=1](0,0)}
\def\puncture {\psdots[dotstyle=+,dotangle=45,dotscale=1](0,0)}

\begin{proof}
We first prove~\eqnref{it:ribbongraphclaim} by induction in the genus~$g$.  For the genus-zero case, the Turaev-Viro codespace on the sphere is one-dimensional, and given by the projection of the ``empty'' ribbon graph; this is the claim~\eqnref{it:ribbongraphclaim}.  For the induction step, we use the gluing procedure  illustrated in \figref{f:loops}.  This is equivalent to taking two patches of surface with the free tensor indices all~$1$, adding a layer of tetrahedra to project to the Turaev-Viro code, and then identifying the indices for patches on either side.  Therefore, the new tensor that we are contracting on corresponds to a sphere with two separated disks of indices all fixed to $1$.  By retriangulating, we may take this sphere to consist of just three tetrahedra, as in \figref{f:gluing_threetetrahedra}.  We want to understand this tensor with the six indices marked $\newdot$ fixed to $1$.  This leaves six free indices, marked $\newpuncture$.  By the fusion constraints on the associated ribbon graph, these indices must all match.  It remains only to determine the coefficient of this shared index; call it $i$.  

\begin{figure}
\centering
\begin{tabular}{c@{$\qquad\qquad\qquad\qquad$}c}
\subfigure[]{\label{f:gluing_threetetrahedra}
\raisebox{-30pt}{\includegraphics[scale=1]{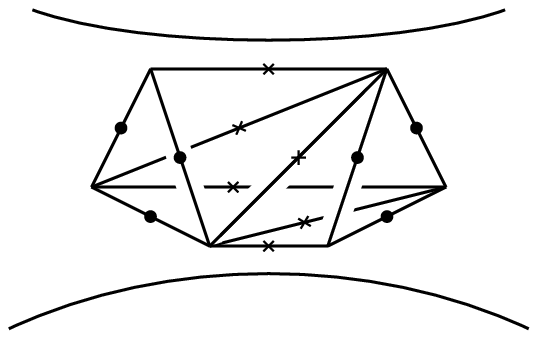}}
}
&
\raisebox{15pt}{\subfigure[]{\label{f:locallabeling}
\raisebox{-20pt}{
\!\!\!\!\!\!\!\!\!\!\!\!\!\!\!\!\!\!\!\!\!\!\!\!\!\!\!\!\!\!
\doubletetrahedron \vspace{40pt}
}}}
\end{tabular}
\caption{Two triangulations of a handle, used in the proof of \lemref{lem:partfunctionevaluation}.}
\end{figure}

Up to a power of $\cD$, we can read off from Eqs.~\eqnref{eq:tvdefinition} and~\eqnref{eq:tvdefinitionFmatrix} the coefficient as 
\begin{equation}
\sqrt{d_i}^6 \frac{F^{111}_{i i^* i}}{\sqrt{d_i}} \frac{F^{111}_{i i^* i}}{\sqrt{d_i}} F^{i i^* 1}_{i i* 1}
 \enspace ,
\end{equation}
where the first term, $\sqrt{d_i}^6$, comes from the term $\prod_{e \in \partial M} \sqrt{d_{\varphi(e)}}$ in the Turaev-Viro expansion, and the other terms are the tensors for the three tetrahedra.  Now substitute in $F^{i i^* 1}_{j j^* k} = \sqrt{\frac{d_k}{d_i d_j}} \delta_{i j k}$, so $F^{111}_{i i^* i} = 1$ and $F^{i i^* 1}_{i i^* 1} = 1/d_i$.  We find that the coefficient is proportional to $d_i$, as claimed.  

For the proof of~\eqnref{it:secondnormclaim}, note that the squared norm of the vector $\ket{\partf{M}}$ associated with a 3-manifold~$M$ is equal to the Turaev-Viro invariant of the closed $3$-manifold obtained by gluing~$M$ to itself.   In particular, the squared norm is independent of the triangulation of~$M$.  We can therefore again choose convenient triangulations.  

As the ball is triangulated by a single tetrahedron, as in Eq.~\eqnref{eq:tvdefinitionFmatrix}, we immediately obtain
\begin{equation}
\norm{\ket{\partf{M_0}}}^2 = \TV_\cC(M_0) = \cD^{-8} \sum_{ijk\ell m n} |F^{i^*jm}_{k\ell^*n}|^2 d_i d_j d_k d_\ell
 \enspace .
\end{equation}
By unitarity, $\sum_n |F^{i^*jm}_{k\ell^*n}|^2 = \delta_{i^*jm^*} \delta_{km\ell^*}$, so applying Eq.~\eqnref{eq:qdimensionsidentity} repeatedly gives
\begin{equation} \label{eq:ballinvariantvec}
\norm{\ket{\partf{M_0}}}^2 = \frac{1}{\cD^2}
 \enspace .
\end{equation}
for the genus-$0$ case.  

To compute the norm of $\ket{\partf{M_g}}$ for $g \geq 1$, fix a triangulation of $M_{g-1}$ and consider the triangulation of $M_g$ obtained by connecting two triangular faces on $\partial M_{g-1}$ using two tetrahedra as shown in \figref{f:locallabeling}.  In this figure, the face with edge labels $(m,j,i)$ is glued to a triangle on $\partial M_{g-1}$, while the face with edge labels $(\ell, q, s)$ is glued to another such triangle.  We claim that 
\begin{equation} \label{eq:inductproofnorm}
\norm{\ket{\partf{M_g}}}^2 = \cD^2 \norm{\ket{\partf{M_{g-1}}}}^2
 \enspace .
\end{equation}
Using~\eqnref{eq:ballinvariantvec}, the claim~\eqnref{it:secondnormclaim} immediately follows from~\eqnref{eq:inductproofnorm}.  

To show~\eqnref{eq:inductproofnorm}, observe that the contribution to $\TV(M, \chi)$ of a labeling~$\chi$ locally matching \figref{f:locallabeling} is given by 
\begin{equation}
\frac{F^{i^*jm}_{k\ell^* n}}{\sqrt{d_m d_n}} \frac{F^{q^*ri}_{n\ell^* s}}{\sqrt{d_i d_s}} \sqrt{d_i d_j d_k d_\ell d_m d_n d_q d_r d_s}
 \enspace ,
\end{equation}
since all edges are on~$\partial M$.  On the other hand, the restriction $\chi' = \chi|_{M_{g-1}}$ gives a contribution 
\begin{equation}
\sqrt{d_m d_i d_j} \cdot \sqrt{d_q d_\ell d_s}
\end{equation}
to $\TV(\chi', M_{g-1})$.  We can therefore write 
\begin{equation} \label{eq:newnorm}
\ket{\partf{M_g}} = \cD \, W \ket{\partf{M_{g-1}}} \ \textrm{ where }\ W = \sum_{ijk\ell mnqrs} \frac{\sqrt{d_rd_k}}{\sqrt{d_m d_i d_s}} F^{q^*ri}_{n\ell^* s}F^{i^*jm}_{k\ell^*n} \ket{ijk\ell mnqrs}(\bra{ijm} \otimes \bra{q \ell s})
\end{equation}
In~\eqnref{eq:newnorm}, the additional factor of $\cD$ accounts for the fact that the number of vertices on~$\partial M$ is equal to $|V_{\partial M}| = |V_{\partial M'}| + |V_{\partial M''}| - 1$ since the two faces share a vertex after gluing.  Using the tetrahedral symmetry  of the $F$-tensor, we get 
\begin{equation}
W^\dagger W = \sum_{ijmq\ell s} \underbrace{\frac{1}{d_m d_\ell} \left(\sum_{knr} d_k |F^{ni^*\ell^*}_{qs^*r}|^2 \, |F^{i^*jm}_{k\ell n}|^2\right)}_{\substack{=:X(ijmq\ell s)}}(\ket{ijm}\bra{ijm} \otimes \ket{q\ell s} \bra{q\ell s})
 \enspace .
\end{equation}
We claim that 
\begin{equation} \label{eq:xijmidentity}
X(ijmq\ell s) = \delta_{\ell^* s^* q} \delta_{i^*jm^*} \qquad\textrm{ for all }\qquad (ijmq\ell s)
 \enspace ,
\end{equation}
that is, $W$ is an isometry on the span of valid labelings of $\partial M_{g-1}$.  This proves the claim~\eqnref{eq:inductproofnorm} and concludes the proof of~\eqnref{it:secondnormclaim}.  Indeed, we have, again by unitarity of the basis change, 
\begin{equation}
\sum_r |F^{ni^*\ell^*}_{qs^*r}|^2 = \delta_{ni^*\ell}\delta_{\ell^* s^* q}
 \enspace ,
\end{equation}
and therefore 
\begin{equation}
\sum_{nr} |F^{ni^*\ell^*}_{qs^*r}|^2 \, |F^{i^*jm}_{k\ell n}|^2 = \delta_{\ell^* s^* q}\delta_{i^*jm^*}\delta_{mk\ell}
 \enspace .
\end{equation}
Multiplying by $\frac{d_k}{d_m d_\ell}$ and summing over $k$ using~\eqnref{eq:qdimensionsidentity} gives~\eqnref{eq:xijmidentity}.  

Having shown~\eqnref{it:ribbongraphclaim} and~\eqnref{it:secondnormclaim}, we can reexpress~$\ket{\partf{M_g}}$ in terms of the anyonic fusion basis as follows.  First, we use properties~\eqnref{it:secondpropertyvacuum} and~\eqnref{it:firstpropertyvacuum} of \lemref{lem:vacuum} to add vacuum lines around the $g - 1$ constrictions  between the handles in~\figref{fig:ribbongraphclaim}.  Since the resulting vacuum lines correspond to a pants decomposition of $\partial M_g$, we can use the correspondence between doubled fusion basis states and ribbon graphs to conclude that the state $\ket{\partf{M_g}}$ is proportional to the anyonic fusion basis state~$\ket{v_{\cC\otimes \cC^*,g}}$.  The proportionality constant is determined by the normalization~\eqnref{it:secondnormclaim}.  
\end{proof}

\bibliographystyle{alpha-eprint}
\bibliography{q}

\end{document}